\documentclass[11pt]{article}
\usepackage[utf8]{inputenc}
\usepackage[T1]{fontenc}
\usepackage[english]{babel}
\usepackage{amssymb,amsmath,amsthm}
\usepackage{graphicx}
\usepackage{algorithm}
\usepackage{algorithmicx}
\usepackage[noend]{algpseudocode}
\usepackage{authblk}
\usepackage{thmtools}
\usepackage{etoolbox}
\usepackage{thm-restate}
\usepackage{flushend}
\usepackage{fullpage}
\usepackage{subcaption}
\usepackage{microtype}
\usepackage{textcomp}

\usepackage[usenames,dvipsnames]{color}
 
\newtheorem{theorem}{Theorem}[section]
\newtheorem{lemma}[theorem]{Lemma}
\newtheorem{corollary}[theorem]{Corollary}
\newtheorem{Definition}[theorem]{Definition}
\newtheorem{observation}[theorem]{Observation}

\newcommand{\important}[1]{{\textcolor{black}{#1}}}

\newcommand{\mytitle}[1]{\begin{center}\huge\textbf{#1}\end{center}}
\newcommand{\myauthors}[1]{\begin{center}\large#1\end{center}}
\newcommand{\mydate}[1]{\begin{center}\large#1\end{center}}

\newcommand{\myinst}[2]{\makebox[0pt][l]{\hspace{1pt}$^\text{\itshape #1}$}#2\hspace{5pt}}

\newcommand{\myabstract}[1]{\noindent\textbf{Abstract.}#1 \par}

\newcommand{\myendtitlepage}{\thispagestyle{empty}\setcounter{page}{0}\newpage}

\renewcommand{\S}{Section} 
\widowpenalty10000
\clubpenalty10000
\newcommand{\old}[1]{{}}

\newcommand{\revised}[1]{{\color{black} #1}}

\begin{document}

\mytitle{Conflict-Free Coloring of Graphs}
\medskip
\myauthors{%
    Zachary~Abel\myinst{a},
    Victor~Alvarez\myinst{b}, 
    Erik~D.~Demaine\myinst{c},
    S{\'a}ndor~P.~Fekete\myinst{b},
    Aman Gour\myinst{d}, \\[2pt]
    Adam Hesterberg\myinst{a}, 
    Phillip Keldenich\myinst{b}, and
    Christian Scheffer\myinst{b}{}
}
\smallskip
\mydate{}
\bigskip
{\small
\begin{itemize}
\item[\small $^\text{\itshape a}$]{%
        Mathematics Department, MIT, Cambridge, Massachusetts, United States, 
    }{%
        \par zabel@mit.edu, achesterberg@gmail.com
    }
\item[\small $^\text{\itshape b}$]{%
        Department of Computer Science,
        Braunschweig University of Technology%
    }{%
        \par \{s.fekete,v.alvarez,p.keldenich,c.scheffer\}@tu-bs.de%
    }
\item[\small $^\text{\itshape c}$]{%
        CSAIL, MIT, Cambridge, Massachusetts, United States, 
    }{%
        \par edemaine@mit.edu
    }
\item[\small $^\text{\itshape d}$]{%
        Department of Computer Science and Engineering,
        IIT Bombay%
    }{%
        \par amangour30@gmail.com
    }
\end{itemize}
}
\bigskip
\bigskip
\myabstract{
A \emph{conflict-free $k$-coloring} of a graph assigns one of $k$ different
colors to some of the vertices such that, for every vertex~$v$, there is a color
that is assigned to exactly one vertex among $v$ and $v$'s neighbors.
Such colorings have applications in wireless networking, robotics, and geometry,
and are well-studied in graph theory.
Here we study the natural problem of the \emph{conflict-free chromatic number} $\chi_{CF}(G)$
(the smallest $k$ for which conflict-free $k$-colorings exist). We provide results
both for {\em closed} neighborhoods $N[v]$, for which a vertex $v$ is a member of its neighborhood,
and for {\em open} neighborhoods $N(v)$, for which vertex $v$ is not a member of its neighborhood.


For closed neighborhoods, we prove the conflict-free variant of the famous Hadwiger Conjecture:
If an arbitrary	graph $G$ does not contain $K_{k+1}$ as a minor, then $\chi_{CF}(G) \leq k$.
For planar graphs, we obtain a tight worst-case bound:
three colors are sometimes necessary and always sufficient.
In addition, we give a complete characterization of the
algorithmic/computational complexity of conflict-free coloring.
It is NP-complete to decide whether a planar graph has a conflict-free coloring
with {\em one} color, while for outerplanar graphs, this can be decided in polynomial time.
Furthermore, it is NP-complete to decide whether a planar graph has a
conflict-free coloring with {\em two} colors, while for outerplanar graphs, two colors always suffice.
For the \important{bicriteria} problem of minimizing the number of colored vertices subject to
a given bound $k$ on the number of colors, we give a full algorithmic
characterization in terms of complexity and approximation for outerplanar and
planar~graphs. 

For open neighborhoods, we show that every planar {\em bipartite} graph has a
conflict-free coloring with at most four colors; on the other hand, we prove that for $k\in \{1,2,3\}$, it is NP-complete
to decide whether a planar bipartite graph has a conflict-free $k$-coloring.
Moreover, we establish that any {\em general} planar graph has
a conflict-free coloring with at most eight colors. 
}

\noindent\makebox[\linewidth]{\rule{\textwidth}{0.4pt}} 
{An extended abstract containing major parts of this paper
was entitled ``Three colors suffice: Conflict-free coloring of planar graphs'' and appeared in the 
Proceedings of the Twenty-Eighth Annual ACM-SIAM Symposium on Discrete Algorithms (SODA 2017)~\cite{aad+-tcscf-17}.}

\myendtitlepage


\section{Introduction}
\label{sec:intro}
Coloring the vertices of a graph is one of the fundamental problems in graph theory, both scientifically and historically.
Proving that four colors always suffice to color a planar graph \cite{AH1,AH2,RSST} was a tantalizing open problem for
more than 100 years; the quest for solving this challenge contributed to the development of graph theory, but also to 
computers in theorem proving \cite{wilson}. A generalization that is still unsolved is the \emph{Hadwiger Conjecture}~\cite{hadwiger}:
A graph is $k$-colorable if it has no $K_{k+1}$~minor.

Over the years, there have been many variations on coloring, often motivated by particular applications.
One such context is wireless communication, where ``colors'' correspond to different frequencies.
This also plays a role in robot navigation, where different beacons are used for providing direction.
To this end, it is vital that in any given location, a robot is adjacent to a beacon with a frequency
that is unique among the ones that can be received. \important{This notion has been introduced as \emph{conflict-free coloring},
formalized as follows. For any vertex $v\in V$ of a simple graph \important{$G=(V,E)$}, 
the \emph{closed neighborhood} $N[v]$ consists of all vertices adjacent to $v$ and $v$ itself.
A \emph{conflict-free $k$-coloring} of $G$ assigns one of $k$ different colors to a (possibly proper) subset $S \subseteq V$ of vertices, such that for every vertex $v\in V$, there is a vertex $y \in N[v]$, called the \emph{conflict-free neighbor} of $v$, such that 
the color of $y$ is unique in the closed neighborhood of $v$.}
The \emph{conflict-free chromatic number} $\chi_{CF}(G)$ of $G$ is the
smallest $k$ for which a conflict-free coloring exists.
\important{Observe that $\chi_{CF}(G)$ is bounded from above by the proper chromatic number $\chi(G)$ because in a proper coloring, every vertex is its own conflict-free~neighbor.}

Similar questions can be considered for {\em open neighborhoods} $N(v)=N[v]\setminus \{v\}$.

Conflict-free coloring has received an increasing amount of attention.
Because of the relationship to classic coloring, it is natural to investigate the conflict-free coloring of planar graphs.
In addition, previous work has considered either general graphs and hypergraphs (e.g., see \cite{pt-cfcgh-09})
or geometric scenarios (e.g., see \cite{hoffmann_et_al:LIPIcs:2015:5097}); we give a more detailed overview further down. 
This adds to the relevance of conflict-free coloring of planar graphs,
which constitute the intersection of general graphs and geometry.
In addition, the subclass of outerplanar graphs is of interest, as it corresponds to
subdividing simple polygons by chords.

There is a spectrum of different scientific challenges when studying conflict-free
coloring. What are worst-case bounds on the necessary number of colors? When is it NP-hard to determine the 
existence of a conflict-free $k$-coloring, when polynomially solvable?
What can be said about approximation? Are there sufficient conditions for more general graphs?
And what can be said about the \important{bicriteria} problem, in which also the number of colored vertices
is considered? We provide extensive answers for all of these aspects, basically providing a complete
characterization for planar and outerplanar graphs.

\subsection{Our Contribution} We present the following results; items 1-7 are for closed neighborhoods, while items 8-11 are for open neighborhoods.

\begin{enumerate}
\item For general graphs, we provide the conflict-free variant of the Hadwiger Conjecture:
If $G$ does not contain $K_{k+1}$ as a minor, then $\chi_{CF}(G) \leq k$.

\item It is NP-complete to decide whether a planar graph has a conflict-free coloring
with \emph{one} color. For outerplanar graphs, this question can be decided in polynomial time.

\item It is NP-complete to decide whether a planar graph has a
conflict-free coloring with \emph{two} colors. For outerplanar graphs, two colors always suffice.

\item Three colors are sometimes necessary and always sufficient for conflict-free coloring of a planar graph.

\item For the \important{bicriteria} problem of minimizing the number of colored vertices subject to
a given bound $\chi_{CF}(G)\leq k$ with $k\in\{1,2\}$, we prove that the problem is NP-hard for planar 
and polynomially solvable in outerplanar graphs.

\item For planar graphs and $k=3$ colors, minimizing the number of colored vertices does not have a constant-factor approximation, unless P = NP.

\item For planar graphs and $k \geq 4$ colors, it is NP-complete to minimize the number of colored vertices. The problem is fixed-parameter tractable (FPT) and allows a PTAS.

\item Four colors are sometimes necessary and always sufficient for conflict-free coloring with open neighborhoods of planar bipartite graphs.
\item It is NP-complete to decide whether a planar bipartite graph has a conflict-free coloring with open neighborhoods with $k$ colors for $k \in \{1,2,3\}$.
\item Eight colors always suffice for conflict-free coloring with open neighborhoods of planar graphs.
\end{enumerate}

\subsection{Related Work}
In a geometric context, the study of conflict-free coloring was started by
Even, Lotker, Ron, and Smorodinsky~\cite{elrs-cfcsg-03} and
Smorodinsky~\cite{s-cpcg-03}, who motivate the problem by frequency
assignment in cellular networks: There, a set of $n$ base stations is given, each covering some geometric region in the plane.
The base stations service mobile clients that can be at any point in the total covered area.
To avoid interference, there must be at least one base station in range using a unique frequency for every point in the entire covered area.
The task is to assign a frequency to each base station minimizing the number of frequencies.
On an abstract level, this induces a coloring problem on a hypergraph where the base stations correspond to the vertices and there is an hyperedge between some vertices if the range of the corresponding base stations has a non-empty common intersection.

If the hypergraph is induced by disks, Even et al.~\cite{elrs-cfcsg-03} prove that $\mathcal{O}(\log n)$ colors are always sufficient.
Alon and Smorodinsky~\cite{as-cfcsd-06} extend this by showing that each family of disks, where each disk intersects at most $k$ others, can be colored using $\mathcal{O}(\log^3 k)$ colors.
Furthermore, for unit disks, Lev-Tov and Peleg~\cite{lp-cfcud-09} present an $\mathcal{O}(1)$-approximation algorithm for the number of colors.
Horev et al.~\cite{hks-cfcms-10} extend this by showing that any set of $n$ disks can be colored with $\mathcal{O}(k\log n)$ colors, even if every point must see $k$ distinct unique colors.
Abam et al.~\cite{abp-ftcfc-08} discuss the problem in the context of cellular networks where the network has to be reliable even if some number of base stations fault, giving worst-case bounds for the number of colors required.

For the dual problem of coloring a set of points such that each region from some family of regions contains at least one uniquely colored point, Har-Peled and Smorodinsky~\cite{hs-cfcpsrp-05} prove that with respect to every family of pseudo-disks, every set of points can be colored using $\mathcal{O}(\log n)$ colors.
For rectangle ranges, Elbassioni and Mustafa~\cite{em-cfcrr-06} show that it is possible to add a sublinear number of points such that a conflict-free coloring with $\mathcal{O}(n^{3/8 \cdot (1+\varepsilon)})$ colors becomes possible.
Ajwani et al.~\cite{aegr-cfcrro-07} complement this by showing that coloring a set of points with respect to rectangle ranges is always possible using $\mathcal{O}(n^{0.382})$ colors.
For coloring points on a line with respect to intervals, Cheilaris et al.~\cite{cgrs-scfci-14} present a 2-approximation algorithm, and a $\left(5-\frac{2}{k}\right)$-approximation algorithm when every interval must see $k$ uniquely colored vertices.
Hoffman et al.~\cite{hoffmann_et_al:LIPIcs:2015:5097} give tight bounds for the
conflict-free chromatic art gallery problem under rectangular visibility in
orthogonal polygons: $\Theta(\log\log n)$ are sometimes necessary and always sufficient.
Chen et al.~\cite{cfk+-ocfci-07} consider the online version of the conflict-free coloring
of a set of points on the line, where each newly inserted point must be
assigned a color upon insertion, and at all times the coloring has to be
conflict-free.
Also in the online scenario, Bar-Nov et al.~\cite{bco+-ocfch-10} consider a certain class of $k$-degenerate hypergraphs which sometimes arise as intersection graphs of geometric objects, presenting an online algorithm using $\mathcal{O}(k\log n)$ colors.

On the combinatorial side, some authors consider the variant in which all vertices need to be colored; note that this does not change asymptotic results for general
graphs and hypergraphs: it suffices to introduce one additional color for vertices that are left uncolored in our constructions.
Regarding general hypergraphs, Ashok et al.~\cite{adk-efmcfch-15} prove that maximizing the number of conflict-freely colored edges in a hypergraph is FPT when parameterized by the number of conflict-free edges in the solution.
Cheilaris et al.~\cite{css-piclcfcgh-11} consider the case of hypergraphs induced by a set of planar Jordan regions and prove an asymptotically tight upper bound of $\mathcal{O}(\log n)$ for the conflict-free list chromatic number of such hypergraphs.
They also consider hypergraphs induced by the simple paths of a planar graph and prove an upper bound of $\mathcal{O}(\sqrt{n})$ for the conflict-free list chromatic number.
For hypergraphs induced by the paths of a simple graph $G$, Cheilaris and T\'oth~\cite{ct-gumcfc-11} prove that it is coNP-complete to decide whether a given coloring is conflict-free if the input is $G$.
Regarding the case in which the hypergraph is induced by the neighborhoods of a simple graph $G$, which resembles our scenario, Pach and T\'ardos~\cite{pt-cfcgh-09} prove that the conflict-free chromatic number of an $n$-vertex graph is in $\mathcal{O}(\log^2 n)$.
Glebov et al.~\cite{gst-cfcg-14} extend this from an extremal and probabilistic point of view by proving that almost all $G(n,p)$-graphs have conflict-free chromatic number $\mathcal{O}(\log n)$ for $p \in \omega(1/n)$, and by giving a randomized construction for graphs having conflict-free chromatic number $\Theta(\log^2 n)$.
In more recent work, Gargano and Rescigno~\cite{gr-ccfcg-15} show that finding the conflict-free chromatic number for general graphs is NP-complete, and prove that the problem is FPT w.r.t.~vertex cover or neighborhood~diversity~number.



\section{Preliminaries}

{For every vertex $v \in V$, the \emph{open neighborhood} of $v$ in $G$ is denoted by $N_G(v) := \{ w \in V(G) \,|\, vw \in E(G) \}$, and the \emph{closed neighborhood} is denoted by $N_G[v] := N_G(v) \cup \{v\}$.
We sometimes write $N(v)$ instead of $N_G(v)$ when $G$ is clear from the context.}

A partial $k$-coloring of $G$ is an assignment $\chi: V' \to \{1,\ldots,k\}$ of colors to a subset $V' \subseteq V(G)$ of the vertices.
$\chi$ is called \emph{closed-neighborhood conflict-free $k$-coloring} of $G$ iff, for each vertex $v \in V$, there is a vertex $w \in N_G[v] \cap V'$ such that $\chi(w)$ is unique in $N_G[v]$, i.e., for all other $w' \in N_G[v] \cap V'$, $\chi(w') \neq \chi(w)$.
We call $w$ the conflict-free neighbor of $v$.
{Analogously, $\chi$ is called \emph{open-neighborhood conflict-free $k$-coloring} of $G$ iff, for each vertex $v \in V$, there is a conflict-free neighbor $w \in N_G(v)$.}

In order to avoid confusion with \emph{proper $k$-colorings}, i.e., colorings that color all vertices such that no adjacent vertices receive the same color, we use the term \emph{proper coloring} when referring to this kind of coloring.
{The minimum number of colors needed for a proper coloring of $G$, also known as the chromatic number of $G$, is denoted by $\chi_P(G)$, whereas the minimum number of colors required for a closed-neighborhood conflict-free coloring of $G$ ($G$'s \emph{closed-neighborhood conflict-free chromatic number}) is written as $\chi_{CF}(G)$.
The \emph{open-neighborhood conflict-free chromatic number} of $G$ is $\chi_O(G)$.
To improve readability we sometimes omit the type of neighborhood if it is clear from the context.}

Note that, because every vertex satisfies $v \in N[v]$, every proper coloring
of $G$ is also a {closed-neighborhood} conflict-free coloring of $G$, and thus
$\chi_{CF}(G) \leq \chi_{P}(G)$.  {The same does not hold for open
neighborhoods. There is no constant factor $c_1 > 0$ such that either $c_1
\cdot \chi_O(G) \leq \chi_P(G)$ or $c_1 \cdot \chi_P(G) \leq \chi_O(G)$ holds
for all graphs $G$.}

{For closed neighborhoods,} we define the \emph{conflict-free domination number} $\gamma_{CF}^k(G)$ of $G$ to be the minimum number of vertices that have to be colored in a conflict-free $k$-coloring of $G$.
We set $\gamma_{CF}^k(G) = \infty$ if $G$ is not conflict-free $k$-colorable.
Because the set of colored vertices is a dominating set, the conflict-free domination number satisfies $\gamma_{CF}^k(G) \geq \gamma(G)$ for all $k$, where $\gamma(G)$, the domination number of $G$, is the size of a minimum dominating set of $G$.
Moreover, for any graph, there is a $k \leq \gamma(G)$ such that $\gamma_{CF}^k(G) = \gamma(G)$.

\begin{sloppypar}
We denote the complete graph on $n$ vertices by $K_n\important{ :=
(\{1,\ldots,n\},\{\{u,v\}\,|\, u,v \in \{1,\ldots,n\}}$, $u \neq v\})$, and the
complete bipartite graph on $n$ and $m$ vertices as $K_{n,m}$. We define the
graph $K_{n}^{-3} := (V(K_{n}), E(K_{n})\setminus E(K_{3}))$, which is obtained
by removing any three edges forming a single triangle from a $K_n$.
\end{sloppypar}

We also provide a number of results for outerplanar graphs. 
An outerplanar graph is a graph that has a planar embedding for which all vertices belong to the outer face of the embedding.
An outerplanar graph is called \emph{maximal} iff no edges can be added to the graph without losing outerplanarity \cite{outerplanar}.
Maximal outerplanar graphs can also be characterized as the graphs having an embedding corresponding to a polygon~triangulation, which illustrates their particular relevance in a geometric context.
In addition, maximal outerplanar graphs exhibit a number of interesting graph-theoretic properties.
Every maximal outerplanar graph is chordal, a 2-tree and a series-parallel graph. 
Also, every maximal outerplanar graph is the visibility graph of a simple polygon. 


For some of our NP-hardness proofs, we use a variant of the planar 3-SAT problem, called {\sc Positive Planar 1-in-3-SAT}.
This problem was introduced and shown to be NP-complete by Mulzer and Rote \cite{mr-mwtnph-08}, and consists of deciding whether a given positive planar 3-CNF formula allows a truth assignment such that in each clause, exactly one literal is true.

\begin{Definition}[Positive planar formulas]\ \\
A formula $\phi$ in 3-CNF is called \emph{positive planar} iff it is both \emph{positive} and \emph{backbone planar}.
A formula $\phi$ is called \emph{positive} iff it does not contain any negation, i.e. iff all occurring literals are positive.
A formula $\phi$, with clause set $C = \{c_1,\ldots,c_l\}$ and variable set $X = \{x_1,\ldots,x_n\}$, is called \emph{backbone planar} iff its associated graph $G(\phi) := (X \cup C, E(\phi))$ is planar, where $E(\phi)$ is defined as follows:
\begin{itemize}
	\item $x_ic_j \in E(\phi)$ for a clause $c_j \in C$ and a variable $x_i \in X$ iff $x_i$ occurs in $c_j$,
	\item $x_ix_{i+1} \in E(\phi)$ for all $1 \leq i < n$.
\end{itemize}
The path formed by the latter edges is also called the \emph{backbone} of the formula graph $G(\phi)$.
\end{Definition}

\section{Closed Neighborhoods: Conflict-Free Coloring of General Graphs}
\label{sec:conflict-free-coloring-of-general-graphs}
In this section we consider the {\sc Conflict-Free $k$-Coloring} problem on general simple graphs {with respect to closed neighborhoods}.
In \S~\ref{sec:general-graphs-complexity}, we prove that this problem is NP-complete for any $k \geq 1$.
In \S~\ref{sec:general-graphs-sufficiency}, we provide a sufficient criterion that guarantees conflict-free $k$-colorability.
In \S~\ref{sec:general-graphs-complexity-domination-number}, we consider the conflict-free domination number and prove that, for any $k \geq 3$, there is no constant-factor approximation algorithm for $\gamma^k_{CF}$.

\subsection{Complexity}
\label{sec:general-graphs-complexity}
\begin{restatable}{theorem}{generalnpc}
\label{thm:conflict-free-coloring-npc}
{\sc Conflict-Free $k$-Coloring} is NP-complete for any fixed $k \geq 1$.
\end{restatable}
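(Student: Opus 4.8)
The plan is to show membership in NP and then NP-hardness for every fixed $k\ge 1$, treating $k=1$ and $k\ge 2$ somewhat differently. Membership is immediate: a partial $k$-coloring $\chi\colon V'\to\{1,\dots,k\}$ is a valid certificate, since for each vertex $v$ one checks in polynomial time whether some color class intersects $N[v]$ in exactly one vertex. For $k=1$, observe that a conflict-free $1$-coloring is precisely an \emph{efficient} (perfect) dominating set, i.e., a set $S$ of colored vertices with $|N[v]\cap S| = 1$ for every $v$; deciding whether such a set exists is a classical NP-complete problem, and I would either cite this directly or reprove it by a reduction from {\sc Exact Cover by 3-Sets}, using one vertex per ground element and one per set, edges from each set to its three elements, and tiny pendant enforcer gadgets on the set-vertices so that any perfect dominating set must choose exactly a subfamily that partitions the ground set.

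For $k\ge 2$ I would reduce from a SAT variant — {\sc 3-SAT}, or preferably {\sc 1-in-3-SAT}, whose "exactly one true literal" requirement meshes naturally with the "exactly one uniquely colored neighbor" condition. The construction uses three ingredients. First, \emph{enforcer gadgets}: small pendant configurations attached to a vertex $u$ that force $u$ to be colored and, moreover, force $u$ to be its own conflict-free neighbor, so that the color of $u$ must differ from the colors of all other vertices in $N[u]$; this is what lets adjacency in the constructed graph encode a "these vertices have different colors" constraint. Second, \emph{variable gadgets}, designed so that in every conflict-free $k$-coloring a distinguished output vertex is forced into one of two states ("true" or "false"), with both states realizable. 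Third, \emph{clause gadgets}, joined to the output vertices of the three variables of a clause and colorable exactly when the chosen truth values satisfy that clause. Given these, one direction (satisfying assignment $\Rightarrow$ conflict-free $k$-coloring) is a direct construction, and the other follows by reading the truth values off the output vertices; the size of the construction is clearly polynomial.

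The main obstacle I anticipate is \emph{uniformity in $k$}: with $k$ colors in hand, a gadget vertex has much more slack than in the $1$-color case, and an adversarial coloring could try to route one of the surplus colors through a variable or clause gadget to defeat the intended logic. My plan for this is to attach extra pendant structure (pendant cliques, or short chains of enforcers) to each logical vertex so that, although $k$ colors exist globally, only a fixed palette of two colors can ever appear inside the neighborhoods that govern the variable/clause interaction, with the remaining $k-2$ colors forced onto enforcer vertices where they do no harm. Making these enforcers simultaneously robust (working verbatim for every fixed $k\ge 2$) and internally consistent (their own vertices stay conflict-free no matter which color the anchor vertex ends up receiving, since that color is dictated by the rest of the instance) is the delicate point, and I expect the bulk of the proof to be the case analysis verifying exactly this.
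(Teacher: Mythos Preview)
Your overall plan is reasonable but differs substantially from the paper's, and one piece of it is over-specified in a way that is not obviously achievable.

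For $k=1$, your identification of conflict-free $1$-coloring with efficient (perfect) dominating sets is correct and yields hardness immediately from known results; the paper instead treats $k\in\{1,2\}$ by a reduction from {\sc Positive Planar 1-in-3-SAT} in Section~\ref{sec:conflict-free-coloring-planar}, because it also wants planar hardness.

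For $k\ge 3$ the paper does something much more direct than a SAT reduction: it reduces from \emph{proper} $k$-coloring. The tool is an inductive family of graphs $G_j$ with $\chi_{CF}(G_j)=j$. By Lemma~\ref{lem:inductive-coloring-gadgets}, hanging two independent copies of $G_k$ off a vertex $v$ forces $v$ to be colored, and hanging two independent copies of $G_{k-1}$ into $N(u)\cap N(v)$ for an edge $uv$ forces $\chi(u)\ne\chi(v)$. Applying both operations to every vertex and every edge of an input graph $G$ gives a graph $G'$ that is conflict-free $k$-colorable iff $G$ is properly $k$-colorable. This is uniform in $k$ by construction and needs no variable or clause gadgets, so the palette-restriction obstacle you anticipate never arises.

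There is also a gap in your enforcer specification. You ask for a pendant gadget forcing $u$ to be \emph{its own} conflict-free neighbor, i.e., forcing $\chi(u)$ to be unique in all of $N[u]$. But a pendant gadget only controls the gadget side of $N[u]$; it cannot by itself prevent a non-gadget neighbor of $u$ from receiving $\chi(u)$. (Indeed, in the paper's vertex gadget, once $u$ is colored the attached $G_k$ copies may be left entirely uncolored, so they contribute nothing to $N(u)$.) The paper avoids this by working \emph{per edge}: the $G_{k-1}$ gadget lives in the common neighborhood of a specific pair $u,v$ and enforces only $\chi(u)\ne\chi(v)$. Once you have that primitive, the natural move is to reduce from proper $k$-coloring rather than SAT; your SAT plan remains relevant only for $k=2$, where proper $2$-coloring is easy---and there the paper's reduction is indeed SAT-based.
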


\noindent Membership in NP is clear.
For $k \geq 3$, we prove NP-hardness using a reduction from proper {\sc $k$-Coloring}.
For $k \in \{1,2\}$, refer to \S~\ref{sec:conflict-free-coloring-planar}, where we prove {\sc Conflict-Free $k$-Coloring} of planar graphs to be NP-complete for $k \in \{1,2\}$.

Central to the proof is the following lemma that enables us to enforce certain vertices to be colored, and both ends of an edge to be colored using distinct colors.

\begin{lemma}
\label{lem:inductive-coloring-gadgets}
Let $G$ be any graph, $u,v \in V(G)$ and $vu = e \in E(G)$.
If $N(v)$ contains two disjoint and independent copies of a graph $H$ with $\chi_{CF}(H) = k$, not adjacent to any other vertex $w \in G$, every conflict-free $k$-coloring of $G$ colors $v$.
If the same holds for $u$ and in addition, $N_G(u) \cap N_G(v)$ contains two disjoint and independent copies of a graph $J$ with $\chi_{CF}(J) = k-1$, not adjacent to any other vertex $w \in G$, every conflict-free $k$-coloring of $G$ colors $u$ and $v$ with different colors.
\begin{proof}
Assume towards a contradiction that there was a conflict-free $k$-coloring $\chi$ that avoids coloring $v$.
Then, due to the copies of $H$ being independent, disjoint and not connected to any other vertex, the restriction of $\chi$ to the vertices of each of the two copies must induce a conflict-free coloring on $H$.
As $\chi_{CF}(H) = k$, this implies that $\chi$ uses $k$ colors on each copy.
Therefore, in the open neighborhood of $v$, there are at least two vertices colored with each color.
This leads to a contradiction, because $v$ cannot have a conflict-free~neighbor.

For the second proposition, suppose there was a conflict-free coloring assigning the same color to $u$ and $v$.
Without loss of generality, let this color be 1.
As every vertex of the two copies of $J$ now sees two occurrences of color 1, color 1 can not be the color of the unique neighbor of any vertex of $J$, and any occurrence of color 1 on the vertices of $J$ can be removed.
Therefore, we can assume each of the two copies of $J$ to be colored in a conflict-free manner using the colors $\{2,\ldots,k\}$.
Observe that, due to $\chi_{CF}(J) = k-1$, each of these colors must be used at least once in each copy.
This implies that both $u$ and $v$ see each color at least twice: The two copies of $J$ enforce two occurrences of the colors $\{2,\ldots,k\}$, and color 1 is assigned to both $u$ and $v$, which are connected by an edge.
This is a contradiction, and therefore, both $u$ and $v$ must be colored with distinct colors.
\end{proof}
\end{lemma}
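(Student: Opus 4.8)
The plan is to prove both statements by contradiction, in each case exploiting what I will call the \emph{insulation} of the gadget copies: since each copy is an induced copy of $H$ (resp.\ $J$), the two copies are mutually non-adjacent, and the copies attach to the rest of $G$ only through $v$ (resp.\ through $u$ and $v$), the closed neighbourhood in $G$ of any vertex $x$ of a copy is exactly its closed neighbourhood inside that copy together with $v$ (resp.\ together with $u$ and $v$). So I can control the ``outside'' contribution to each copy completely, and transfer a hypothetical conflict-free $k$-coloring of $G$ to a conflict-free coloring of $H$ (resp.\ $J$) on each copy.

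For the first statement, I would assume a conflict-free $k$-coloring $\chi$ of $G$ leaves $v$ uncolored, and pick one of the two copies, say $H_1$. For $x \in V(H_1)$ we have $N_G[x] = N_{H_1}[x] \cup \{v\}$; since $v$ is uncolored it cannot be the conflict-free neighbour of $x$, so the conflict-free neighbour of $x$ guaranteed by $\chi$ lies inside $H_1$ and its colour is unique already within $N_{H_1}[x]$. Hence $\chi$ restricted to $V(H_1)$ is a conflict-free coloring of $H_1 \cong H$, and because $\chi_{CF}(H) = k$ it must use all $k$ colours (a conflict-free coloring of $H$ with fewer colours would be a conflict-free $(k-1)$-coloring). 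The same applies to $H_2$, so each of the $k$ colours occurs at least twice in $V(H_1)\cup V(H_2) \subseteq N_G(v)$; then no colour is unique in $N_G[v]$ and $v$ has no conflict-free neighbour — a contradiction. Applying this to $u$ as well, both $u$ and $v$ must be coloured.

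For the second statement, I would additionally assume $\chi(u) = \chi(v)$, say both equal colour $1$ without loss of generality, and consider a copy $J_i \subseteq N_G(u)\cap N_G(v)$. For $x \in V(J_i)$ we have $N_G[x] = N_{J_i}[x]\cup\{u,v\}$, so colour $1$ already appears twice in $N_G[x]$; thus the conflict-free neighbour of $x$ lies inside $J_i$ and carries a colour in $\{2,\dots,k\}$. Deleting colour $1$ from $V(J_i)$ therefore still leaves every vertex of $J_i$ with a conflict-free neighbour, so it yields a conflict-free coloring of $J_i \cong J$ using only colours $\{2,\dots,k\}$, i.e.\ a conflict-free $(k-1)$-coloring; since $\chi_{CF}(J) = k-1$ this uses each of the $k-1$ colours $\{2,\dots,k\}$. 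Hence each colour in $\{2,\dots,k\}$ occurs in both $V(J_1)$ and $V(J_2)$, while colour $1$ occurs on both $u$ and $v$. Since $N_G[v] \supseteq V(J_1)\cup V(J_2)\cup\{u\}$, every colour occurs at least twice in $N_G[v]$ and $v$ again has no conflict-free neighbour — a contradiction. So $\chi(u) \neq \chi(v)$, and together with the first part, $u$ and $v$ receive distinct colours.

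The step I expect to require the most care is the transfer argument ``$\chi$ restricted to a copy (after possibly discarding one irrelevant colour) is a conflict-free coloring of $H$ (resp.\ $J$)''. This is where each of the insulation hypotheses gets used: induced-ness ensures $N_G[x]$ restricted to the copy is exactly $N_H[x]$ (so that a coloring of the copy really is a coloring of $H$), non-adjacency of the two copies plus the ``not adjacent to any other vertex'' clause pins down $N_G[x] \setminus V(\text{copy}) \subseteq \{v\}$ (resp.\ $\subseteq \{u,v\}$), and the fact that these remaining external vertices are either uncoloured or all share colour $1$ is what lets me discard them as candidate conflict-free neighbours. Once that is in place, the remainder is the elementary counting observation that two disjoint copies, each forced to use every available colour, make every colour appear at least twice in $N_G[v]$.
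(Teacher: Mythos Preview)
Your proof is correct and follows essentially the same approach as the paper's: both parts argue by contradiction, use the insulation of the gadget copies to transfer the global conflict-free $k$-coloring to a conflict-free coloring of each copy (after discarding colour $1$ in the second part), invoke $\chi_{CF}(H)=k$ (resp.\ $\chi_{CF}(J)=k-1$) to force every available colour to appear in each copy, and conclude that every colour appears at least twice in $N_G[v]$. Your write-up is in fact slightly more explicit than the paper's about the structure of $N_G[x]$ for $x$ in a copy, but the argument is the same.
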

\noindent Next, we give an inductive construction of graphs, $G_k$, with $\chi_{CF}(G_k) = k$. The proof of NP-hardness relies on this hierarchy.
\begin{enumerate}
\item The first graph $G_1$ of the hierarchy consists of a single isolated vertex.
$G_2$ is a $K_{1,3}$ with one edge subdivided by another vertex, or, equivalently, a path of length 3 with a leaf vertex attached to one of the inner vertices.
\item Given $G_k$ and $G_{k-1}$, $G_{k+1}$ is constructed as follows for $k \geq 2$:
\begin{itemize}
\item Take a complete graph $G = K_{k+1}$ on \mbox{$k+1$}~vertices.
\item To each vertex $v \in V(K_{k+1})$, attach two disjoint and independent copies of $G_k$, adding an edge from $v$ to every vertex of both copies of $G_k$.
\item For each edge $e = vw \in E(K_{k+1})$, add two disjoint and independent copies of $G_{k-1}$, adding an edge from $v$ and $w$ to every vertex of both copies.
\end{itemize}
\end{enumerate}
The number of vertices of the graphs $G_k$ obtained by the above construction satisfies the recursive formula $$|G_1| = 1, |G_2| = 5, |G_{k+1}| = (k+1) \cdot (2|G_k| + k|G_{k-1}| + 1),$$ which is in $\Omega\left(2^k\right)$ and $\mathcal{O}\left(2^{k\log k}\right)$. Figure~\ref{fig:cf-graph-g3} depicts the graph $G_3$, which in addition to being planar is a series-parallel graph.

\begin{figure}[h]
\centering
\includegraphics[width=.6\linewidth,keepaspectratio]{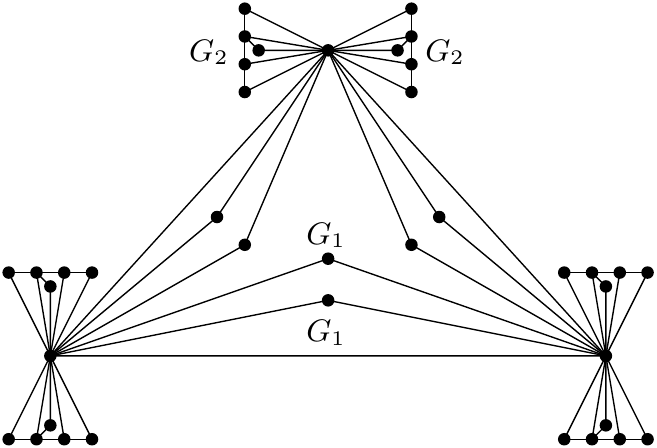}
\caption{The graph $G_3$.}
\label{fig:cf-graph-g3}
\end{figure}

\begin{lemma}
For $G_{k}$ constructed in this manner, $\chi_{CF}(G_{k}) = k$.
\begin{proof}
The proof uses induction over $k$.
Application of Lemma~\ref{lem:inductive-coloring-gadgets} implies that all vertices of the $K_{k+1}$ underlying $G_{k+1}$ have to be colored using different colors.
Therefore, $\chi_{CF}(G_{k+1}) \geq k+1$.
By coloring all $k+1$ vertices of the underlying $K_{k+1}$ with a different color, we obtain a conflict-free $(k+1)$-coloring of $G_{k+1}$, implying $\chi_{CF}(G_{k+1}) \leq k+1$.
\end{proof}
\end{lemma}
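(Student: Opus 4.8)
The plan is a (strong) induction on $k$, strong because the construction of $G_{k+1}$ refers to both $G_k$ and $G_{k-1}$: the upper bound $\chi_{CF}(G_{k+1})\le k+1$ comes from an explicit coloring, while the lower bound $\chi_{CF}(G_{k+1})\ge k+1$ is obtained from Lemma~\ref{lem:inductive-coloring-gadgets}. For the base cases I would check $G_1$ and $G_2$ directly. The graph $G_1$ is a single isolated vertex, which must be colored in order to be its own conflict-free neighbor, so $\chi_{CF}(G_1)=1$. For $G_2$, assigning color $1$ to the degree-$3$ vertex and color $2$ to the subdivision vertex (and leaving the three leaves uncolored) is conflict-free, so $\chi_{CF}(G_2)\le 2$; conversely, a conflict-free $1$-coloring would require the set of colored vertices to intersect every closed neighborhood in exactly one vertex, and a two-case check---according to whether the degree-$3$ vertex is colored---shows $G_2$ admits no such set, so $\chi_{CF}(G_2)=2$.

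For the inductive step, fix $k\ge 2$ and assume $\chi_{CF}(G_k)=k$ and $\chi_{CF}(G_{k-1})=k-1$. Let $v_1,\dots,v_{k+1}$ be the vertices of the $K_{k+1}$ underlying $G_{k+1}$. For the upper bound I would color $v_i$ with color $i$ and leave every vertex of every attached copy of $G_k$ and $G_{k-1}$ uncolored, then verify conflict-freeness for the three kinds of vertices: a clique vertex $v_i$ sees every color exactly once in $N[v_i]$, so its own color is unique there; a vertex of a $G_k$-copy attached to $v_i$ has $v_i$ as its unique colored closed-neighbor; and a vertex of a $G_{k-1}$-copy attached to an edge $v_iv_j$ has exactly $v_i$ and $v_j$ colored in its closed neighborhood, carrying the two distinct colors $i$ and $j$. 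Hence $\chi_{CF}(G_{k+1})\le k+1$.

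For the lower bound I would argue by contradiction: suppose $G_{k+1}$ has a conflict-free $k$-coloring $\chi$. For each edge $v_iv_j$ of the underlying $K_{k+1}$, apply Lemma~\ref{lem:inductive-coloring-gadgets} with $u=v_i$, $v=v_j$, $H=G_k$ and $J=G_{k-1}$. The hypotheses hold by the construction together with the induction hypothesis: $N(v_i)$ and $N(v_j)$ each contain two disjoint, independent copies of $G_k$ adjacent to no vertex besides $v_i$ resp.\ $v_j$ with $\chi_{CF}(G_k)=k$, and $N(v_i)\cap N(v_j)$ contains two disjoint, independent copies of $G_{k-1}$ adjacent to no vertex besides $v_i,v_j$ with $\chi_{CF}(G_{k-1})=k-1$. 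The lemma then forces $\chi(v_i)\ne\chi(v_j)$. Since this holds for every edge of $K_{k+1}$, the $k+1$ clique vertices receive pairwise distinct colors, which is impossible with only $k$ colors; hence $\chi_{CF}(G_{k+1})\ge k+1$, and the induction closes.

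I expect no serious obstacle here, since Lemma~\ref{lem:inductive-coloring-gadgets} does the real work. The only steps needing genuine care are the two base cases---in particular the perfect-code style argument ruling out a conflict-free $1$-coloring of $G_2$---and confirming that the gadget copies wired into $G_{k+1}$ really do meet the lemma's requirement of being adjacent to no vertices outside $\{u,v\}$, which is exactly how the construction attaches them.
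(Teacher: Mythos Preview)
Your proposal is correct and follows essentially the same approach as the paper: induction on $k$, with the upper bound from coloring the $K_{k+1}$ vertices distinctly and the lower bound from Lemma~\ref{lem:inductive-coloring-gadgets} applied to each clique edge. You supply more detail than the paper does---explicit base cases and a case-by-case verification of the coloring---but the structure is identical.
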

\begin{lemma}
For $k \geq 2$, {\sc $k$-Coloring $\preccurlyeq$ Conflict-Free $k$-Coloring}.
Therefore, for $k \geq 3$, {\sc Conflict-Free $k$-Coloring} is NP-complete.
\begin{proof}
Given a graph $G$ for which to decide proper $k$-colorability for a fixed $k$. We construct a graph $G'$ that is conflict-free $k$-colorable iff $G$ is $k$-colorable.
$G'$ is constructed from $G$ by attaching two copies of $G_k$ to each vertex $v \in V(G)$, by adding an edge from $v$ to each vertex of the copies of $G_k$.
For each edge $uv \in E(G)$, we attach two copies of $G_{k-1}$ to both endpoints of $uv$ by adding an edge from $u$ and $v$ to all vertices of both copies.
As $k$ is fixed, $|G_k|$ and $|G_{k-1}|$ are constant, implying that $G'$ can be constructed in polynomial time.

A proper $k$-coloring of $G$ induces a conflict-free $k$-coloring of $G'$ by leaving all other vertices uncolored.
On the other hand, by Lemma~\ref{lem:inductive-coloring-gadgets}, a conflict-free $k$-coloring $\chi$ of $G'$ colors all vertices $v \in V(G)$ and for every edge, the colors of both endpoints are distinct.
Therefore, the restriction of $\chi$ to $V(G)$ is a proper $k$-coloring of $G$.
\end{proof}
\end{lemma}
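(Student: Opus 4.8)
The plan is to reduce the classical {\sc $k$-Coloring} problem to {\sc Conflict-Free $k$-Coloring}, using the gadget graphs $G_k$ and $G_{k-1}$ (which satisfy $\chi_{CF}(G_k)=k$ and $\chi_{CF}(G_{k-1})=k-1$) together with Lemma~\ref{lem:inductive-coloring-gadgets}. Given an instance $G$ of {\sc $k$-Coloring} with $k$ a fixed constant ($k \geq 2$), I would build a graph $G'$ as follows: for every vertex $v \in V(G)$, attach two vertex-disjoint, mutually independent copies of $G_k$ and join $v$ to each of their vertices; for every edge $uv \in E(G)$, attach two vertex-disjoint, mutually independent copies of $G_{k-1}$ and join both $u$ and $v$ to each of their vertices. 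No edges other than these are added, so within $G'$ each attached copy of $G_k$ is adjacent only to the single vertex $v$ it was attached to, and each attached copy of $G_{k-1}$ lies entirely in $N_{G'}(u) \cap N_{G'}(v)$ and is adjacent to nothing outside $\{u,v\}$ and the copy itself. This is precisely the configuration required by both halves of Lemma~\ref{lem:inductive-coloring-gadgets}, with $H = G_k$ and $J = G_{k-1}$.

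For correctness I would verify both directions. If $G$ has a proper $k$-coloring $c$, assign color $c(v)$ to each $v \in V(G)$ and leave every gadget vertex uncolored; I claim this is a conflict-free $k$-coloring of $G'$. Each $v \in V(G)$ sees in its closed neighborhood only itself, its $G$-neighbors (all of color different from $c(v)$ by properness), and uncolored gadget vertices, so $v$ is its own conflict-free neighbor. Each gadget vertex sees in its closed neighborhood exactly one colored vertex --- namely the vertex, or one endpoint of the edge, its copy was attached to --- so that vertex is its conflict-free neighbor. Conversely, suppose $\chi$ is a conflict-free $k$-coloring of $G'$. The first statement of Lemma~\ref{lem:inductive-coloring-gadgets} forces $\chi$ to color every $v \in V(G)$; the second statement then forces $\chi(u)\neq\chi(v)$ for every edge $uv \in E(G)$. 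Hence the restriction of $\chi$ to $V(G)$ is a proper $k$-coloring of $G$.

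Since $k$ is fixed, $|G_k|$ and $|G_{k-1}|$ are constants, so $G'$ has $O(|V(G)|+|E(G)|)$ vertices and edges and is computable in polynomial time; this gives the reduction {\sc $k$-Coloring} $\preccurlyeq$ {\sc Conflict-Free $k$-Coloring} for every $k \geq 2$. Combining with the NP-completeness of proper {\sc $k$-Coloring} for $k \geq 3$ and the membership of {\sc Conflict-Free $k$-Coloring} in NP noted after Theorem~\ref{thm:conflict-free-coloring-npc}, we conclude that {\sc Conflict-Free $k$-Coloring} is NP-complete for every fixed $k \geq 3$. I expect the only delicate point to be the careful bookkeeping needed to confirm that the construction really meets the hypotheses of Lemma~\ref{lem:inductive-coloring-gadgets} --- the copies must be disjoint, independent, and isolated from the rest of $G'$ apart from the intended connections --- together with the (easy) observation that leaving all gadget vertices uncolored still leaves each of them with a conflict-free neighbor, which holds because each such vertex has exactly one colored vertex in its closed neighborhood.
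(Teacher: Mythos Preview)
Your proposal is essentially identical to the paper's proof: the same gadget-attachment construction, the same appeal to Lemma~\ref{lem:inductive-coloring-gadgets} for the backward direction, and the same polynomial-size bound. One small slip to fix: a vertex in a $G_{k-1}$ copy attached to an edge $uv$ sees \emph{two} colored vertices ($u$ and $v$), not one; the argument still goes through because $c(u)\neq c(v)$, so each of them is a conflict-free neighbor.
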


\subsection{A Sufficient Criterion for $k$-Colorability}\label{sec:general-graphs-sufficiency}
In this section we present a sufficient criterion for conflict-free $k$-colorability together with an efficient heuristic that can be used to color graphs satisfying this criterion with $k$ colors in a conflict-free manner.
This heuristic is called \emph{iterated elimination of distance-3-sets} and is detailed in Algorithm \ref{alg:iterated-elimination-distance-3-sets}.
The main idea of this heuristic is to iteratively compute maximal sets of vertices at pairwise (link) distance at least 3, coloring all vertices in one of these sets using one color, and then removing these vertices and their neighbors until all that remains is a collection of disconnected paths, which can then be colored using one color.
\begin{algorithm*}	
	\begin{algorithmic}[1]
		\State{$i \leftarrow 1, \chi \leftarrow \emptyset$}
		\State Remove all isolated paths from $G$
		\While{$G$ is not empty}
			\State $D \leftarrow \emptyset$
			\State For each component of $G$, select some vertex $v$ and add it to $D$
			\While{there is a vertex $w$ at distance $\geq 3$ from all vertices in $D$}
				\State{Choose $w$ at distance exactly 3 from some vertex in $D$}
				\State{$D \leftarrow D \cup \{w\}$}
			\EndWhile
			\State $\forall u \in D: \chi(u) \leftarrow i$
			\State $i \leftarrow i+1$
			\State Remove $N[D]$ from $G$
			\State Remove all isolated paths from $G$
		\EndWhile
		\State Color all removed isolated paths using color $i$
	\end{algorithmic}
	\caption{Iterated elimination of distance-3-sets}
	\label{alg:iterated-elimination-distance-3-sets}
\end{algorithm*}

\begin{theorem}
Let $G$ be a graph and $k \geq 1$.
If $G$ has neither $K_{k+2}$ nor $K_{k+3}^{-3}$ as a minor, $G$ admits a conflict-free $k$-coloring that can be found in polynomial time using iterated elimination of distance-3 sets.
\label{thm:sufficient-criterion-excluded-minors}
\end{theorem}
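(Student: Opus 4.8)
The statement has two parts: the colouring returned by Algorithm~\ref{alg:iterated-elimination-distance-3-sets} is always conflict-free, and under the hypotheses it uses at most $k$ colours; polynomiality is then clear, since each round removes at least one vertex and is itself polynomial. Conflict-freeness does not use the minor conditions, and I would check it vertex by vertex. Fix $v$ and let $i$ be the round in which $v$ is deleted. If $v$ lies on one of the isolated paths removed (in the preprocessing step or at the end of some round), then at the moment of removal the whole component of $v$ is that path $P$; all neighbours of $v$ outside $P$ were removed in earlier rounds and carry colours strictly smaller than the final colour used on $P$, and since every path satisfies $\chi_{CF}=1$ the one-colour colouring of $P$ already gives $v$ a conflict-free neighbour inside $P$, which remains one in $G$. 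If $v\in D_i$, then because $D_i$ is a distance-$\geq 3$ set it contains no neighbour of $v$, and colour $i$ is used only on $D_i$; hence colour $i$ appears in $N[v]$ exactly once, at $v$ itself. If $v\in N[D_i]\setminus D_i$, pick $d\in D_i\cap N(v)$; again because $D_i$ is a distance-$\geq 3$ set, $d$ is the only vertex of $D_i$ in $N[v]$, so colour $i$ is unique in $N[v]$ and $d$ is a conflict-free neighbour of $v$.

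For the number of colours I would induct on $k$, calling a graph \emph{level-$j$} if it has neither $K_{j+2}$ nor $K_{j+3}^{-3}$ as a minor. For $k=1$, note $K_4^{-3}=K_{1,3}$, so a level-$1$ graph is a forest no vertex of which has degree $\geq 3$, i.e.\ a disjoint union of paths; the preprocessing step removes all of it and only colour $1$ is used. For $k\geq 2$ it suffices to prove the following \emph{Reduction Lemma}: if $G$ is level-$k$ and $D$ is the set computed in the first round, then $G\setminus N[D]$ is level-$(k-1)$. Granting this, one round turns a level-$k$ graph into a level-$(k-1)$ graph (removing isolated paths afterwards can only destroy minors), and the remainder of the run is exactly a run of the algorithm on $G\setminus N[D]$ with all colours shifted up by one, which by induction uses at most $k-1$ further colours; together with the colour used on $D$ in the first round this gives at most $k$ colours overall.

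To prove the Reduction Lemma I would first record two properties of the $D$ the algorithm produces. It is a \emph{maximal} distance-$\geq 3$ set, so every surviving vertex is within distance $2$ of $D$; in particular $D$ meets every component of $G$. Moreover, since each vertex added to $D$ is at distance exactly $3$ from a vertex already in $D$, the ``distance-$3$ graph'' on $D$ is connected inside every component of $G$; realising each distance-$3$ link by a length-$3$ path, whose two interior vertices lie in $N[D]$, and using that each $N[d]$ is connected, this shows that $N[D]$ is connected inside every component of $G$. Now suppose, for contradiction, that $G\setminus N[D]$ has a $K_{k+1}$ minor or a $K_{k+2}^{-3}$ minor; fix branch sets $B_1,\dots,B_r$ ($r\in\{k+1,k+2\}$) witnessing it, all inside one component $C'$ of $G\setminus N[D]$, which lies in a component $C$ of $G$. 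Since $D$ meets $C$ we have $C'\subsetneq C$, so $\emptyset\neq N_G(C')\subseteq N[D]\cap C=:W$, and $W$ is connected and disjoint from $C'$. Enlarge one branch set, say $B_1$ (chosen outside the missing triangle in the second case), by a shortest path inside $C'$ to a vertex adjacent to $W$ — this keeps the branch sets disjoint and pairwise adjacent — and take $W$ itself as one more branch set. If $W$ is adjacent to every other branch set, we have produced inside $G$ a $K_{k+2}$ minor (when $r=k+1$) or a $K_{k+3}^{-3}$ minor (when $r=k+2$), contradicting that $G$ is level-$k$.

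The remaining case — in which $W$ fails to be adjacent to a few of the $B_i$ — is what I expect to be the main obstacle, and it is where deleting a full triangle ($-3$, rather than one or two edges) in $K_{k+3}^{-3}$ is dictated: one has to push the few ``interior'' branch sets inside $C'$ towards $N_G(C')$ as well, exploiting the connectivity that the clique minor forces on $C'$, rerouting (and, where unavoidable, merging or deleting edges), and then verify that at most a triangle's worth of adjacencies among the resulting branch sets is lost, so that $G$ still has a $K_{k+2}$ minor or a $K_{k+3}^{-3}$ minor. This bookkeeping, together with the two structural properties of the algorithm's $D$ recorded above (which are exactly what rules out a long cycle or similar dense fragment surviving in $G\setminus N[D]$), should close the Reduction Lemma; the theorem, including the polynomial running time, then follows by the induction described above.
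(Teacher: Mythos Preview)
Your overall strategy matches the paper's: induct on $k$, and show that after one round the surviving graph $G\setminus N[D]$ is level-$(k-1)$, using that the algorithm's ``distance exactly $3$'' rule makes $N[D]$ connected in each component of $G$. Your conflict-freeness check and base case are fine and more explicit than the paper's.

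The gap is that the ``remaining case'' you worry about is empty, and you have not noticed the one-line observation that kills it. By maximality of $D$ as a distance-$\geq 3$ set, every vertex of $G\setminus N[D]$ is at distance exactly $2$ from $D$; hence it has a neighbour at distance $1$ from $D$, i.e.\ a neighbour in $N[D]$. So \emph{every} vertex of $C'$ is already adjacent to $W=N[D]\cap C$. In particular each branch set $B_i$ is adjacent to $W$, and adding $W$ as a new branch set immediately upgrades a $K_{k+1}$ (resp.\ $K_{k+2}^{-3}$) minor of $G\setminus N[D]$ to a $K_{k+2}$ (resp.\ $K_{k+3}^{-3}$) minor of $G$. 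No enlargement of $B_1$, no rerouting, and no ``triangle's worth of bookkeeping'' is needed; your proposed enlargement step is both unnecessary and, as stated, unjustified (a shortest path in $C'$ could pass through other $B_i$).

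The paper packages the same two facts --- $N[D]$ connected, and every surviving vertex adjacent to $N[D]$ --- as the single statement ``no set of unseen vertices is a cutset of $G$'', and then contracts $G\setminus W=N[D]$ to a single vertex adjacent to all of $W$. That phrasing and yours are equivalent once you add the missing distance-$2$ observation.
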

\begin{proof}
For $k=1$, a graph $G$ with neither a $K_3$ nor a $K_4^{-3} = K_{1,3}$ minor consists of a collection of isolated paths.
A path on $3n$ vertices can be colored with one color by coloring the middle vertex of every three vertices.
This does not color the vertices at either end, so up to two vertices can be removed from the path to get colorings for paths on $3n-1$ and $3n-2$ vertices.

For $k \geq 2$, we use induction as follows:
First, we color an inclusion-wise maximal subset $D \subseteq V$ of vertices at pairwise distance at least 3 to each other using color 1.
\revised{This set $D$ is chosen such that each vertex $v \in D$ is at distance exactly 3 from some $v' \in D$.
Coloring $D$} provides a conflict-free neighbor of color 1 to every vertex in $N[D]$.
Therefore, the vertices in $N[D]$ are covered and can be removed from the graph.
The remaining graph consists of vertices at distance 2 to some vertex in $D$; we call these vertices \emph{unseen} in the remainder of the proof.
We show that the remaining graph has no $K_{k+1}$ and no $K_{k+2}^{-3}$ as a minor.
By induction, iterated elimination of distance 3 sets requires $k-1$ colors to color the remaining graph, and thus $k$ colors suffice for $G$.

If the graph is disconnected, iterated elimination of distance 3 sets works on all components separately, so we can assume $G$ to be connected.
We claim that there is no set $U$ of unseen vertices that is a cutset of $G$.
Suppose there were such a cutset $U$ and let $H$ be any component of $G \setminus U$ not containing $v$, the first selected vertex during the construction of $D$.
At least one vertex of $H$ is colored: every vertex in $U$ is at distance at least two from every colored vertex not in $H$, therefore, every vertex in $H$ is at distance at least three from every colored vertex not in $H$.
Consider the iteration where the first vertex $w$ of $H$ is added to the set of colored vertices $D$.
At this point, $w$ is at distance exactly 3 from some colored vertex not in $H$.
However, this implies $w$ is adjacent to some vertex from $U$, contradicting the fact that all vertices in $U$ are unseen.

Now, suppose for the sake of contradiction that the set $W$ of unseen vertices contains a $K_{k+1}$ or $K_{k+2}^{-3}$ minor.
$W$ is not the whole graph, because at least one vertex is colored, so there must be a vertex $v$ not in the $K_{k+1}$ or $K_{k+2}^{-3}$ minor.
For every vertex $w \in W$, there is a path from $v$ to $w$ that intersects $W$ only at $w$.
Otherwise, $W \setminus \{w\}$ would be a cutset separating $v$ from $w$.
So, if the graph induced by $W$ had a $K_{k+1}$ or $K_{k+2}^{-3}$ minor, we could contract $G \setminus W$ to a single vertex, which would be adjacent to all vertices in $W$, yielding a $K_{k+2}$ or $K_{k+3}^{-3}$ minor of $G$, which does not exist.
\end{proof}
\noindent Observe that $G_{k+1}$ contains a $K_{k+3}^{-3}$ as a minor, but not a $K_{k+2}$, proving that just excluding $K_{k+2}$ as a minor does not suffice to guarantee $k$-colorability.
Moreover, note that $K_{k+1}$ is a minor of $K_{k+2}$ and $K_{k+3}^{-3}$.

This yields the following corollary, which is the conflict-free variant of the Hadwiger Conjecture.

\begin{corollary}
	All graphs that do not have $K_{k+1}$ as a minor are conflict-free $k$-colorable.
\end{corollary}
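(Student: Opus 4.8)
The plan is to deduce the corollary directly from Theorem~\ref{thm:sufficient-criterion-excluded-minors} by observing the implications among the minor-exclusion hypotheses. First I would note that if $G$ does not contain $K_{k+1}$ as a minor, then in particular $G$ contains neither $K_{k+2}$ nor $K_{k+3}^{-3}$ as a minor: both of these graphs have $K_{k+1}$ as a (sub)minor, so if either appeared as a minor of $G$, then by transitivity of the minor relation $K_{k+1}$ would also be a minor of $G$, a contradiction. Indeed, $K_{k+2}$ contains $K_{k+1}$ as a subgraph, hence as a minor; and $K_{k+3}^{-3}$, obtained from $K_{k+3}$ by deleting the three edges of a single triangle, still contains a clique on the remaining $k$ vertices plus, after contracting one of the edges of the deleted triangle, a clique on $k+1$ vertices — so $K_{k+1}$ is a minor of $K_{k+3}^{-3}$ as well. (The paper already remarks on exactly these two facts immediately after the theorem.)

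Having established that the hypothesis ``$K_{k+1}$ is not a minor of $G$'' is strictly stronger than the hypothesis ``neither $K_{k+2}$ nor $K_{k+3}^{-3}$ is a minor of $G$,'' I would then simply invoke Theorem~\ref{thm:sufficient-criterion-excluded-minors} to conclude that $G$ admits a conflict-free $k$-coloring (in fact one computable in polynomial time via iterated elimination of distance-3 sets). This completes the proof.

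There is essentially no obstacle here: the corollary is a weakening of the theorem, and all the real work — the inductive argument about cutsets of unseen vertices and the minor-contraction step — has already been carried out in the proof of Theorem~\ref{thm:sufficient-criterion-excluded-minors}. The only point requiring a sentence of care is spelling out why $K_{k+1}$ is a minor of $K_{k+3}^{-3}$, since that graph is not simply a supergraph of $K_{k+1}$; but contracting one endpoint-pair of the removed triangle immediately exhibits the clique. The name ``conflict-free variant of the Hadwiger Conjecture'' is justified by the parallel with the statement that $K_{k+1}$-minor-free graphs are properly $k$-colorable, except that here the statement is a theorem rather than a conjecture, precisely because the conflict-free chromatic number is so much smaller than the proper chromatic number.
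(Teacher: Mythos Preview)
Your approach is exactly the paper's: observe that $K_{k+1}$ is a minor of both $K_{k+2}$ and $K_{k+3}^{-3}$ and invoke Theorem~\ref{thm:sufficient-criterion-excluded-minors}. One small slip: you speak of ``contracting one of the edges of the deleted triangle,'' but those edges are precisely the ones \emph{removed} in $K_{k+3}^{-3}$; the cleanest argument is simply to delete two of the three triangle vertices, leaving a $K_{k+1}$ as a subgraph.
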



\subsection{Conflict-Free Domination Number}\label{sec:general-graphs-complexity-domination-number}
In this section we consider the problem of minimizing the number of colored vertices in a conflict-free $k$-coloring for a fixed $k$, which is equivalent to computing $\gamma^{k}_{CF}$. We call the corresponding decision problem {\sc{$k$-Conflict-Free Dominating Set}}. We show that approximating the conflict-free domination number in general graphs is hard for any fixed $k$. In \S~\ref{sec:bicriteria-conflict-free-coloring-planar} we discuss the {\sc{$k$-Conflict-Free Dominating Set}} problem for planar graphs.
\begin{theorem}
\label{thm:inapproximability-of-gamma-cf-k}
Unless $\mbox{P} = \mbox{NP}$, for any $k \geq 3$, there is no polynomial-time approximation algorithm for $\gamma^k_{CF}(G)$ with constant approximation factor.
\begin{proof}
We use a reduction from proper {\sc $k$-Coloring} for the proof.
Assume towards a contradiction that there was a polynomial-time approximation algorithm for $\gamma^k_{CF}(G)$ with approximation factor $c \geq 1$.
Let $G$ be a graph on $n$ vertices for which we want to decide $k$-colorability.
For each vertex $v$ of $G$, add $M := (n+1)(c+1)$ vertices $u_v$ to $G$ and connect them to $v$.
For each edge $vw$ of $G$, add $M$ vertices $u_{vw}$ to $G$ and connect them to both $v$ and $w$.
Let $G'$ be the resulting graph.
Clearly, the size of $G'$ is polynomial in the size of $G$.
Additionally, $G'$ is planar if $G$ is, and $G'$ has a conflict-free $k$-coloring of size $n$ iff $G$ is properly $k$-colorable:
Any proper $k$-coloring of $G$ is a conflict-free $k$-coloring of $G'$, as every vertex added to $G$ is either adjacent to two distinctly colored vertices of $G$, or adjacent to just one vertex of $G$.
Conversely, let $\chi$ be a conflict-free coloring of $G'$, coloring just $n$ vertices.
If $\chi$ did not assign a color to some vertex $v$ of $G$, it would have to color all $M \geq n+1$ neighbors of $v$.
If $\chi$ assigned the same color to any pair $v,w$ of vertices adjacent in $G$, it would have to color all $M$ vertices adjacent only to $v$ and $w$.
Therefore, $\chi$ is a proper coloring of $G$.
\revised{Running a $c$-approximation algorithm $\mathcal{A}$ for $\gamma^k_{CF}$ on $G'$ results in an approximate value $\mathcal{A}(G') \leq c \cdot \gamma^k_{CF}(G')$.
We have $\mathcal{A}(G') \leq c \cdot n < M$ if $G$ is $k$-colorable, and $\mathcal{A}(G') \geq M$ if $G$ is not; thus we could decide proper $k$-colorability in polynomial time.}
\end{proof}
\end{theorem}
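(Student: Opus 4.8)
The plan is to reduce from proper \textsc{$k$-Coloring}, which is NP-hard precisely for $k \geq 3$, using a gap-amplifying gadget that turns the decision reduction into an inapproximability statement. Suppose, for contradiction, that a polynomial-time $c$-approximation $\mathcal{A}$ for $\gamma^k_{CF}$ exists for some constant $c \geq 1$. Given an $n$-vertex graph $G$ whose $k$-colorability we wish to decide, I would construct $G'$ by fixing a large integer $M := (n+1)(c+1)$ (so that $M > cn$), attaching $M$ fresh pendant vertices to every $v \in V(G)$, and adding, for every edge $uv \in E(G)$, $M$ fresh vertices each adjacent to exactly $u$ and $v$. Then $|V(G')|$ is polynomial in the size of $G$, and $G'$ is planar whenever $G$ is, since the new length-two paths between $u$ and $v$ can be routed alongside the edge $uv$.

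The first step is to show that if $G$ is properly $k$-colorable then $\gamma^k_{CF}(G') \leq n$: a proper $k$-coloring $\chi$ of $G$, with all other vertices left uncolored, is conflict-free, because each original vertex is its own conflict-free neighbor (its color differs from those of its $G$-neighbors, and the attached vertices are uncolored), each pendant vertex sees exactly one colored vertex, and each edge vertex adjacent to $u$ and $v$ sees the two distinct colors $\chi(u)$ and $\chi(v)$.

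The second step --- the crux --- is the converse bound $\gamma^k_{CF}(G') \geq M$ whenever $G$ is not $k$-colorable. I would argue that any conflict-free $k$-coloring that colors fewer than $M$ vertices must (i) color every $v \in V(G)$, since otherwise all $M$ pendant vertices attached to an uncolored $v$, each having closed neighborhood $\{\text{that pendant},\,v\}$, would themselves need a color; and (ii) assign distinct colors to the endpoints of every edge $uv \in E(G)$, since otherwise each of the $M$ vertices adjacent to exactly $u$ and $v$ has no uniquely colored vertex in its closed neighborhood unless it is colored itself. Hence such a coloring restricts to a proper $k$-coloring of $G$, a contradiction; and if $G'$ admits no conflict-free $k$-coloring at all, then $\gamma^k_{CF}(G') = \infty \geq M$ trivially.

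Combining the two bounds yields the gap: $\mathcal{A}(G') \leq c \cdot \gamma^k_{CF}(G') \leq cn < M$ when $G$ is $k$-colorable, whereas $\mathcal{A}(G') \geq \gamma^k_{CF}(G') \geq M$ when it is not. Thus comparing $\mathcal{A}(G')$ against $M$ decides $k$-colorability in polynomial time, forcing $\mathrm{P} = \mathrm{NP}$. I expect the only delicate point to be the case analysis in the second step, verifying that the pendant gadgets really force all original vertices to be colored and the edge gadgets force the restriction to be proper; the rest --- checking $M > cn$, polynomiality, and planarity --- is routine bookkeeping.
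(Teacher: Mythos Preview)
Your proposal is correct and is essentially the paper's own proof: the same construction with $M=(n+1)(c+1)$ pendants per vertex and per edge, the same forward direction via a proper $k$-coloring of $G$, the same backward argument that fewer than $M$ colored vertices forces a proper $k$-coloring on $V(G)$, and the same gap comparison $cn < M$. If anything, you are slightly more explicit in isolating the contrapositive bound $\gamma^k_{CF}(G')\geq M$ and in noting the trivial $\gamma^k_{CF}(G')=\infty$ case.
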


\section{Closed Neighborhoods: Planar Conflict-Free Coloring}
\label{sec:conflict-free-coloring-planar}
This section deals with the {\sc{Planar Conflict-Free $k$-Coloring}} problem which consists of deciding conflict-free $k$-colorability for fixed $k$ on planar graphs.
Due to the 4-color theorem, we immediately know that every planar graph is conflict-free 4-colorable.
This naturally leads to the question of whether there are planar graphs requiring 4 colors or whether fewer colors might already suffice for a conflict-free coloring, which we address in the following two sections.

\subsection{Complexity}
For $k \in \{1,2\}$ colors, we show that the problem of deciding conflict-free $k$-colorability on planar graphs is NP-complete. This implies that 2 colors are not sufficient.

\begin{theorem}
\label{thm:planar-1-coloring-npc}
Deciding planar conflict-free {1-color\-abi\-lity} is NP-complete.
\begin{proof}
Membership in NP is obvious.
The proof of NP-hardness is done by reduction from the problem {\sc Positive Planar 1-in-3-SAT}.
From a positive planar 3-CNF formula $\phi$ with clauses $C = \{c_1,\ldots,c_l\}$ and variables $X = \{x_1,\ldots,x_n\}$ we construct in polynomial time a graph $G_1(\phi)$ such that $\phi$ is 1-in-3-satisfiable iff $G_1(\phi)$ admits a conflict-free 1-coloring.

First, find and fix a planar embedding $d$ of $G(\phi)$.
$G_1(\phi)$ is constructed from $G(\phi)$ and $d$ as follows:
For every variable $x_i$, there is a cycle $Z_i = (z_{i,1},\ldots,z_{i,12})$ of length $12$.
The vertices $z_{i,1},z_{i,4},z_{i,7},z_{i,10}$ are referred to as \emph{true} vertices of $Z_i$, all other vertices are \emph{false} vertices.
Moreover, vertices $z_{i,1},z_{i,2},z_{i,3}$ are called \emph{upper} vertices of $Z_i$, and vertices $z_{i,7},z_{i,8},z_{i,9}$ are called \emph{lower} vertices of $Z_i$.
Additionally, vertices $z_{i,4},z_{i,5},z_{i,6}$ are called \emph{right} vertices of $Z_i$ and $z_{i,10},z_{i,11},z_{i,12}$ are called \emph{left} vertices of $Z_i$.

For each clause $c_j$, there is a cycle $(c_{j,1},\ldots,c_{j,4})$ of length 4 in $G_1(\phi)$.
To each variable $x_i$ for $i \in \{2,\ldots,n-1\}$, we associate two disjoint sequences $U_i = \big(u_j\big)_{j=1}^{|U_i|}$ and $L_i = \big(l_j\big)_{j=1}^{|L_i|}$ of clauses $x_i$ appears in.
The sequences are constructed using a clockwise (with respect to $d$) enumeration of the edges of $x_i$ in $G(\phi)$, starting with $x_{i-1}x_i$.
Let $(x_{i-1}x_i,x_ic_{j_1},\ldots,x_ic_{j_{\lambda}},x_ix_{i+1},x_ic_{j_{\lambda+1}},\ldots,x_ic_{j_{\mu}})$ be the sequence of edges encountered in this manner and set $U_i := (c_{j_1},\ldots,c_{j_{\lambda}})$ and $L_i := (c_{j_{\lambda+1}},\ldots,c_{j_{\mu}})$.
For $i \in \{1,n\}$, $L_i$ is empty and $U_i$ contains all clauses $x_i$ appears in, again in clockwise order.
In $G_1(\phi)$, the clauses and variables are connected such that for each clause $c_j$ that $x_i$ occurs in, either the upper or the lower \emph{true} vertex of $x_i$ is adjacent to $c_{j,1}$.
More precisely, for variable $x_i$, if $c_j = u_m$, we add the edge $c_{j,1}z_{i,1}$ to connect the upper true vertex to the clause.
If $c_j = l_m$, we add $c_{j,1}z_{i,7}$ to connect the lower true vertex to the clause.
Because the order of edges around each vertex is preserved by the construction, the graph $G_1(\phi)$ obtained in this way can be embedded in the plane by a suitable adaptation of $d$. See Figure~\ref{fig:figurereduction} for an example of the construction.

\begin{figure}[!htb]
\centering
\includegraphics[width=8cm]{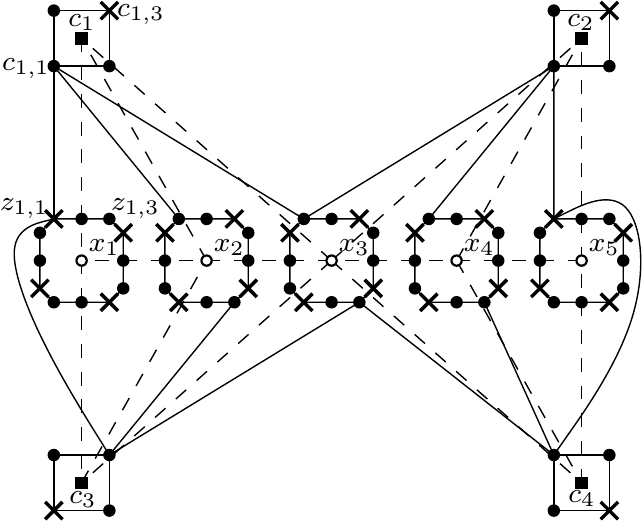}
\caption{A formula graph $G(\phi)$ (dashed) and the corresponding $G_1(\phi)$ (solid).}
\label{fig:figurereduction}
\end{figure}

Now we prove that $G_1(\phi)$ is conflict-free 1-colorable iff $\phi$ is 1-in-3-satisfiable.
Regarding necessity, a valid truth assignment $b: X \to \mathbb{B}$ yields a valid conflict-free coloring by coloring the vertex $c_{j,3}$ of every clause, coloring all true vertices of variables with $b(x_i) = 1$ and coloring the false vertices $z_{i,3},z_{i,6},z_{i,9},z_{i,12}$ of all other variables. Thus, in every cycle $Z_i$, every third vertex is colored, providing a conflict-free neighbor to every vertex of $Z_{i}$.
Moreover, in each clause, by virtue of $c_{j,3}$ being colored, vertices $c_{j,2},c_{j,3},c_{j,4}$ have a conflict-free neighbor.
Because $b$ is a valid truth assignment, for each clause, the vertex $c_{j,1}$ is adjacent to exactly one colored true vertex.
Therefore, the coloring constructed in this way is conflict-free.

Regarding sufficiency, we first argue that the vertices $c_{j,1},c_{j,2},c_{j,4}$ can never be colored:
If $c_{j,1}$ receives a color, then $c_{j,3}$ still enforces that one of $c_{j,2},c_{j,3},c_{j,4}$ is colored, leading to a contradiction in either case.
If $c_{j,2}$ receives a color, then $c_{j,4}$ cannot have a conflict-free neighbor and vice versa.
Therefore, no clause vertex can be the conflict-free neighbor of any vertex of $Z_{i}$. Thus, the conflict-free neighbor of \emph{every} vertex of $Z_{i}$ must itself be a vertex of $Z_{i}$. Moreover, the conflict-free neighbor of every vertex $c_{j,1}$ must be a true vertex.
Thus, there are exactly three ways to color each cycle $Z_{i}$: either by coloring the true vertices (one possibility), or by coloring every other false vertex (two possibilities).
A valid conflict-free 1-coloring of $G_1(\phi)$ satisfies the property that for each clause $c_j$, exactly one of the true vertices adjacent to $c_{j,1}$ is colored.
Hence, a valid conflict-free 1-coloring of $G_1(\phi)$ induces a valid truth assignment $b$ by setting $b(x_i) = 1$ iff all true vertices of $x_i$ are colored.
\end{proof}
\end{theorem}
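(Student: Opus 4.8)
The plan is to establish NP-completeness in the usual two steps. Membership in NP is immediate: a partial $1$-coloring has polynomial size, and since there is only one color, a conflict-free $1$-coloring of $G$ is exactly a set $S\subseteq V$ with $|N[v]\cap S|=1$ for every $v\in V$ (equivalently, an efficient/perfect dominating set of $G$), a condition checkable in polynomial time. For hardness I would reduce from {\sc Positive Planar 1-in-3-SAT}, which is NP-complete by Mulzer and Rote and which is already in play in the paper. Given a positive planar $3$-CNF formula $\phi$ together with a fixed planar embedding of its formula graph $G(\phi)$, the goal is to build, in polynomial time, a planar graph $G_1(\phi)$ whose efficient dominating sets correspond bijectively (after projecting away gadget bookkeeping) to the $1$-in-$3$ satisfying truth assignments of $\phi$.

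For the construction I would use two gadgets. Each variable $x_i$ becomes a cycle $Z_i$ whose length is a multiple of $3$ (length $12$ is convenient, to leave room to attach clauses on two opposite sides), with four designated positions, one per quarter and all congruent mod $3$, marked as \emph{true} vertices and the remaining ones as \emph{false} vertices. The point is that a cycle $C_{3m}$ has exactly three perfect dominating sets, each consisting of every third vertex; one of these is precisely the set of true vertices of $Z_i$ and the other two consist only of false vertices, so a choice of efficient dominating set on $Z_i$ encodes a Boolean value of $x_i$. Each clause $c_j$ becomes a $4$-cycle $(c_{j,1},\dots,c_{j,4})$; a short case check on the $4$-cycle shows that in any efficient dominating set of $G_1(\phi)$ the vertices $c_{j,1},c_{j,2},c_{j,4}$ are never chosen while $c_{j,3}$ always is (it is the only vertex that can dominate $c_{j,2}$ and $c_{j,4}$ exactly once after $c_{j,1}$ is excluded). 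Consequently no clause vertex can serve as the unique dominator of a cycle vertex, so the coloring restricted to each $Z_i$ is forced to be one of the three perfect dominating sets of the cycle. Finally, for each occurrence of $x_i$ in $c_j$ I add an edge from $c_{j,1}$ to one true vertex of $Z_i$, selecting the ``upper'' or ``lower'' true vertex according to the clockwise order of the edges around $x_i$ in the fixed embedding of $G(\phi)$; this keeps the rotation system planar, and it forces exactly one of the variables appearing in $c_j$ to be set true, since $c_{j,1}$ needs exactly one neighbor in $S$ and its only candidates are these true vertices.

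With the construction in hand the two correctness directions are routine. From a $1$-in-$3$ satisfying assignment $b$ I take the coloring that selects $c_{j,3}$ for every clause, the true vertices of every cycle with $b(x_i)=1$, and a ``false'' every-third-vertex set of every cycle with $b(x_i)=0$, and check that each closed neighborhood contains exactly one colored vertex. Conversely, given an efficient dominating set of $G_1(\phi)$, the forced behavior of the clause gadgets together with the three-perfect-dominating-sets-only property of the cycles implies that defining $b(x_i)=1$ iff $Z_i$'s true vertices are the chosen ones yields a $1$-in-$3$ satisfying assignment. The reduction is clearly polynomial-time, and planarity of $G_1(\phi)$ follows from the order-preserving placement of the clause-to-variable edges.

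I expect the main difficulty to be two intertwined bookkeeping problems rather than any conceptual leap. First, one must arrange the clause gadget and its edges into $Z_i$ to be simultaneously \emph{rigid} — forcing $c_{j,3}$ in, $c_{j,1},c_{j,2},c_{j,4}$ out, and the chosen true vertex to be the unique dominator of $c_{j,1}$ — and \emph{compatible} with the cycle's residue-class structure, which pins down exactly which vertices of $Z_i$ may receive clause attachments (only true vertices, all in the same class mod $3$). Second, the planarity argument: one has to define the split of each variable's incident clauses into an ``upper'' list and a ``lower'' list from the clockwise enumeration of edges around $x_i$ (starting at the backbone edge $x_{i-1}x_i$), so that the cyclic rotation system of $G(\phi)$ extends to a valid embedding of $G_1(\phi)$; the endpoint variables $x_1$ and $x_n$ need slightly special handling.
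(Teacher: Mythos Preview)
Your proposal is correct and follows essentially the same approach as the paper's proof: the same source problem ({\sc Positive Planar 1-in-3-SAT}), the same length-$12$ variable cycles with true vertices in one residue class mod $3$, the same $4$-cycle clause gadgets with the forced behavior of $c_{j,3}$ in and $c_{j,1},c_{j,2},c_{j,4}$ out, and the same upper/lower attachment scheme driven by the clockwise rotation system of $G(\phi)$ (with the boundary cases $i\in\{1,n\}$ handled exactly as you anticipate). Your explicit identification of conflict-free $1$-colorings with efficient dominating sets ($|N[v]\cap S|=1$ for all $v$) is a clean lens the paper leaves implicit, but it does not change the argument's substance.
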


\begin{theorem}\label{thm:planar-2-coloring-npc}
It is NP-complete to decide whether a planar graph admits a conflict-free 2-coloring.
\end{theorem}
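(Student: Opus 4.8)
The plan is as follows. Membership in NP is immediate, since a conflict-free $2$-coloring is a polynomial-size certificate checkable in polynomial time. For NP-hardness we cannot reuse the reduction from proper {\sc $k$-Coloring} used for $k\ge 3$, because proper $2$-colorability (bipartiteness) is polynomial; instead I would reduce from a planar $3$-SAT variant, most naturally {\sc Positive Planar 1-in-3-SAT}, reusing the global template of the proof of Theorem~\ref{thm:planar-1-coloring-npc}: fix a planar embedding $d$ of $G(\phi)$, replace each variable and each clause by a planar gadget, and wire them together respecting the rotation system of $d$, so that the resulting graph $G_2(\phi)$ is planar and of polynomial size.

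The main enabling tool is Lemma~\ref{lem:inductive-coloring-gadgets}, instantiated with the two smallest graphs of the hierarchy, which happen to be planar: $G_1$, a single vertex with $\chi_{CF}(G_1)=1$, and $G_2$, a $K_{1,3}$ with one edge subdivided, with $\chi_{CF}(G_2)=2$. Attaching two disjoint independent copies of $G_2$ into $N(v)$ forces $v$ to be colored in every conflict-free $2$-coloring, and additionally placing two common pendant neighbors (copies of $G_1$) on an edge $uv$ forces its endpoints to receive distinct colors; since each attached copy is a forest glued on as an apex, all of this keeps the graph planar. These primitives give us \emph{forced-colored vertices} and \emph{rainbow edges} as building blocks. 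A clean resulting block is an even cycle in which every vertex is forced colored and every edge is forced rainbow: its only conflict-free $2$-colorings are the two proper $2$-colorings of the cycle, which we label \textbf{true} and \textbf{false}, and which propagate consistently around the whole cycle.

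From here I would build a variable gadget for each $x_i$ as such a rigid cyclic chain, laid out along the backbone $x_1,\dots,x_n$ and branching toward the incident clauses just as the cycles $Z_i$ do in Theorem~\ref{thm:planar-1-coloring-npc}, exposing one \emph{port} per clause occurrence whose color is dictated by the gadget's state; and a clause gadget for each $c_j$ as a short cycle attached to the three ports of its variables, engineered so that it admits a conflict-free $2$-coloring if and only if exactly one of the three ports is in the \textbf{true} state. Composing everything along $d$ yields $G_2(\phi)$, and one checks both directions: a $1$-in-$3$ satisfying assignment induces a conflict-free $2$-coloring by putting every variable gadget in the corresponding state and completing the (constant-size) clause gadgets; conversely, any conflict-free $2$-coloring of $G_2(\phi)$ puts each variable gadget into one of its two states, and the clause gadgets' colorability forces the induced assignment to be $1$-in-$3$.

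The hard part is the gadget design, specifically ruling out ``cheating'' colorings that exploit the second color: with two colors a vertex is satisfied as soon as \emph{either} color occurs exactly once in its closed neighborhood, so the gadgets must be over-constrained. One must verify that an isolated variable gadget has \emph{exactly} the two intended colorings (no third, mixed one leaks through), that attaching a clause port to the gadget does not destroy conflict-freeness at the port or at its cycle neighbors — this already rules out the most naive port placement and is the most delicate point — and that the clause gadget realizes precisely the $1$-in-$3$ predicate rather than at-most-one or at-least-one. Each of these reduces to a finite local case analysis together with a check that the hypotheses of Lemma~\ref{lem:inductive-coloring-gadgets} hold wherever it is invoked; once the gadgets are pinned down, planarity, polynomial size, and correctness of the reduction are routine.
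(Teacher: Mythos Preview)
Your plan is plausible but takes a different route from the paper, and it leaves the crux unfinished. The paper does not rebuild the reduction from force-color/rainbow-edge primitives; instead it designs a single planar gadget $G_{\leq 1}$ (Lemma~\ref{lem:cf-2-coloring-leq1-gadget}) with a central vertex $v$ such that in every conflict-free $2$-coloring, $v$ is uncolored and at most one color appears on $N_G[v]$. Attaching one copy of $G_{\leq 1}$ inside each variable cycle, inside each clause cycle, and between the right/left vertices of consecutive variable cycles of the graph $G_1(\phi)$ from Theorem~\ref{thm:planar-1-coloring-npc} forces all those cycles to be colored with a single common color; hence $G_2(\phi)$ is conflict-free $2$-colorable iff $G_1(\phi)$ is conflict-free $1$-colorable. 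This reuses the $1$-color reduction wholesale and sidesteps any new clause-gadget design.

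Your outline, by contrast, has two concrete holes you have not closed. First, the obvious clause gadget---an uncolored vertex adjacent to three ports---does not realize $1$-in-$3$: if one port carries color~$1$ and two carry color~$2$ then color~$1$ is unique, but if two carry color~$1$ and one carries color~$2$ then color~$2$ is unique, so the clause vertex is satisfied in both cases and you only get not-all-equal. You acknowledge that the gadget must be ``engineered'' for exactly $1$-in-$3$, but no construction is given, and breaking this $1$/$2$ symmetry with only two colors is precisely the nontrivial step. Second, each of your variable gadgets is symmetric under swapping the two colors, so ``port has color~$1$'' means \textbf{true} only after a global choice of which color is which; you mention the backbone but supply no mechanism that forces all variable gadgets to agree on that choice. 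Both issues are repairable with further gadgetry, but until they are pinned down the reduction is not a proof, and the paper's $G_{\leq 1}$ trick is both shorter and avoids these pitfalls by collapsing the $2$-color problem back to the already-solved $1$-color case.
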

\noindent The proof requires the gadget $G_{\leq 1}$ depicted in Figure~\ref{fig:cf-2-coloring-leq1-gadget}.
$G_{\leq 1}$ consists of three vertices $v,w_1,w_2$ forming a triangle.
Each edge $ux$ of the triangle has two corresponding vertices $y^1_{ux},y^2_{ux}$, each connected to $u$ and $x$.
Furthermore, both $w_1$ and $w_2$ are attached to two copies of a cycle on 4 vertices, where every vertex of both cycles is adjacent to the corresponding $w_i$.
$G_{\leq 1}$ can be used to enforce that the vertices connected to its central vertex $v$ are colored using at most one distinct color:

\begin{lemma}
\label{lem:cf-2-coloring-leq1-gadget}
Let $G = (V,E)$ be any graph, let $v \in V$ and let $G'$ be the graph resulting from adding a copy of $G_{\leq 1}$ to $G$ by identifying $v$ in $G$ with $v$ in $G_{\leq 1}$.
Then (1) $G'$ is planar if $G$ is, and (2) every conflict-free 2-coloring of $G'$ leaves $v$ uncolored and uses at most one color on~$N_G[v]$.
\begin{proof}
The planarity of $G'$ follows from the planarity of $G$ by the observation that $G_{\leq 1}$ is planar and can be embedded in any face incident to $v$ in a planar embedding of $G$.
Now consider a conflict-free 2-coloring $\chi$ of $G'$.
$\chi$ must color both $w_1$ and $w_2$.
Otherwise, $\chi$ restricted to each of the two 4-cycles adjacent to $w_i$ must be a valid conflict-free 2-coloring.
However, as $C_4$ requires at least 2 different colors, $w_i$ then sees two occurrences of both colors, and thus cannot have a  conflict-free neighbor anymore.
Furthermore, $\chi(w_1) \neq \chi(w_2)$, as otherwise, $y^1_{w_1w_2}$ and $y^2_{w_1w_2}$ must both be colored with the other color; but then, $w_1$ and $w_2$ again see two occurrences of both colors.
By an analogous argument, $\chi$ must not color $v$. Moreover, $\chi$ cannot use more than one color on $N_G[v]$, because $v$ already sees one occurrence of each color, so adding another occurrence of both colors would yield a conflict at $v$.
\end{proof}
\end{lemma}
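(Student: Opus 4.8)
The plan is to establish the two assertions separately: the planarity claim (1) is a short embedding argument, while the coloring claim (2) is proved by a chain of forcing arguments driven by a single recurring principle. For (1), I would first note that the gadget $G_{\leq 1}$ is itself planar: the triangle $v w_1 w_2$, together with the six doubling vertices $y^i_{ux}$ (each adjacent to exactly the two endpoints of one triangle edge) and the four attached $4$-cycles (each adjacent only to a single $w_i$), admits an obvious planar drawing. Since $G_{\leq 1}$ meets $G$ only in the single vertex $v$, I would fix any planar embedding of $G$, select a face incident to $v$, and draw the remainder of $G_{\leq 1}$ inside that face; this yields a planar embedding of $G'$.

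For (2) the key tool is the following \emph{forcing principle}, which I would state once and reuse: if a vertex $x$ has closed neighborhood exactly $\{x,p,q\}$ and $\chi(p)=\chi(q)$, then in a $2$-coloring the color $\chi(p)$ already occurs twice in $N[x]$, so $x$ has a conflict-free neighbor only if $x$ itself receives the \emph{other} color (making that color unique at $x$). I would also record that $\chi_{CF}(C_4)=2$, so any valid conflict-free coloring of a $4$-cycle using the two available colors must in fact use both of them. With these in hand I argue in four steps. First, both $w_1$ and $w_2$ must be colored: if some $w_i$ were uncolored, then each of the two $C_4$'s attached to $w_i$ would have to be colored conflict-freely on its own (the only other neighbor $w_i$ contributes nothing), hence each copy uses both colors, and then $w_i$ sees each color at least twice and has no conflict-free neighbor. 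Second, $\chi(w_1)\neq\chi(w_2)$: if both were color $1$, the forcing principle applied to $y^1_{w_1w_2}$ and $y^2_{w_1w_2}$ (whose closed neighborhood is $\{y,w_1,w_2\}$) makes both take color $2$, and then $w_1$ sees color $1$ twice (at $w_1,w_2$) and color $2$ twice (at the two doubling vertices), again leaving it without a conflict-free neighbor.

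So without loss of generality $\chi(w_1)=1$ and $\chi(w_2)=2$. Third, $v$ must be uncolored: were $v$ colored $1$, then $v$ and $w_1$ give color $1$ twice, the forcing principle sends both doubling vertices $y^1_{vw_1},y^2_{vw_1}$ on edge $vw_1$ to color $2$, and $w_1$ then sees both colors at least twice, a contradiction; the case $\chi(v)=2$ is symmetric via $w_2$. Fourth, since $v$ is uncolored it already sees color $1$ at $w_1$ and color $2$ at $w_2$, i.e.\ exactly one occurrence of each color in $N_{G'}[v]$; if $N_G(v)$ used both colors, then each color would occur at least twice in $N_{G'}[v]$, so no color is unique and $v$ has no conflict-free neighbor. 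Hence at most one color appears on $N_G(v)$, and as $v$ is uncolored, on $N_G[v]$.

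I expect the main obstacle to be the careful bookkeeping of color multiplicities in the neighborhoods of $w_1$, $w_2$, and $v$. In particular, one must notice that \emph{once} $w_i$ is colored, the two attached $C_4$'s may be left entirely uncolored (each of their vertices then takes $w_i$ as its conflict-free neighbor); thus the $C_4$ argument forces only that $w_i$ is colored and may not be invoked again afterward. Beyond this subtlety, every remaining step is a direct application of the single forcing principle combined with the observation that $w_1$ and $w_2$ supply one occurrence of each color to every closed neighborhood containing them.
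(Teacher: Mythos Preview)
Your proposal is correct and follows essentially the same route as the paper's proof: planarity via embedding the gadget in a face at $v$, then forcing $w_1,w_2$ to be colored via the attached $C_4$'s, forcing $\chi(w_1)\neq\chi(w_2)$ and $v$ uncolored via the doubling vertices $y^i_{ux}$, and finally deducing the one-color restriction on $N_G[v]$ from the fact that $v$ already sees both colors once. Your explicit isolation of the ``forcing principle'' and your remark that the $C_4$'s may be left blank once $w_i$ is colored are nice additions, but the underlying argument is the same as the paper's.
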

\begin{figure}[!htb]
\minipage{0.5\textwidth}
\centering
\includegraphics[width=6cm,height=6cm,keepaspectratio]{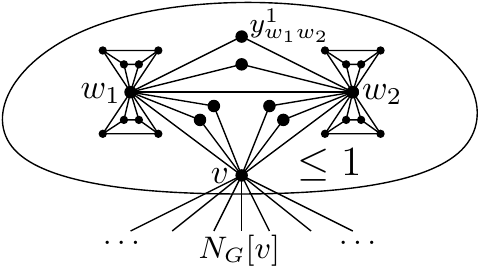}
\caption{Gadget $G_{\leq 1}$}
\label{fig:cf-2-coloring-leq1-gadget}
\vspace*{5mm}
\endminipage\hfill
\minipage{0.5\textwidth}
\centering
\includegraphics[width=8cm,height=8cm,keepaspectratio]{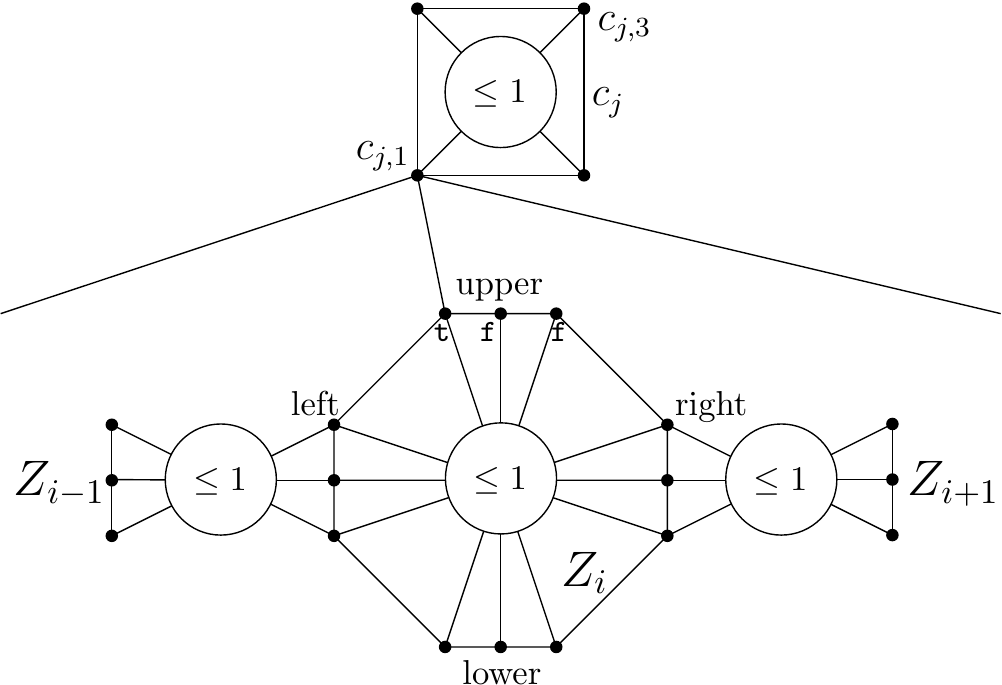}
\caption{Clause and variable gadget for $k=2$}
\label{fig:cf-2-coloring-reduction}
\endminipage\hfill
\end{figure}
\begin{proof}[Proof of Theorem~\ref{thm:planar-2-coloring-npc}]
NP-hardness is proven by constructing, in polynomial time, a planar graph $G_2(\phi)$ from the graph $G_1(\phi)$ used in the hardness proof for $k=1$, such that $G_2(\phi)$ is conflict-free 2-colorable iff $G_1(\phi)$ is conflict-free 1-colorable.

The construction is carried out by adding a gadget $G_{\leq 1}$ to every variable cycle $Z_i$ of $G_1(\phi)$, to every clause cycle and between the right and left vertices of two adjacent variable cycles $Z_i$ and $Z_{i+1}$.
This is depicted in Figure~\ref{fig:cf-2-coloring-reduction}.
More precisely, for every cycle $Z_i$, we add one copy of gadget $G_{\leq 1}$, and connect its central vertex $v$ to all vertices of the cycle.
In a planar embedding of $G_2(\phi)$, these gadgets can be embedded within the face defined by the cycles $Z_i$ and thus do not harm planarity.
By Lemma~\ref{lem:cf-2-coloring-leq1-gadget}, this enforces that on every cycle, only one color can be used.
Moreover, for every edge $x_ix_{i+1}$ in $G(\phi)$, we add one copy of $G_{\leq 1}$ that we connect to the right vertices of $x_i$ and the left vertices of $x_{i+1}$.
This preserves planarity because these gadgets and the added edges can be embedded in the face crossed by $x_ix_{i+1}$ in some fixed embedding $d$ of $G(\phi)$.
As one of the right vertices of $x_i$ and one of the left vertices of $x_{i+1}$ must be colored, this enforces that the same single color must be used to color all cycles $Z_i$.
Finally, we add a copy of $G_{\leq 1}$ to every clause $c_j$ and connect it to $c_{j,1},\ldots,c_{j,4}$.
Again, this preserves planarity because the gadget may be embedded in the face defined by $(c_{j,1},\ldots,c_{j,4})$.

We now argue that $G_2(\phi)$ is conflict-free 2-colorable iff $G_1(\phi)$ is conflict-free 1-colorable.
A 1-coloring of $G_1(\phi)$ induces a 2-coloring of $G_2(\phi)$ by copying the color assignment and coloring the internal vertices of the added gadgets as described in the proof of Lemma~\ref{lem:cf-2-coloring-leq1-gadget}.
Now, let $G_2(\phi)$ be conflict-free 2-colorable and fix a valid 2-coloring $\chi$.
In each clause, $\chi$ must color $c_{j,3}$ and neither of $c_{j,1},c_{j,2}$ nor $c_{j,4}$ can be colored.
Therefore, no clause vertex can be the conflict-free neighbor of any vertex of $Z_{i}$. Thus, the conflict-free neighbor of \emph{every} vertex of $Z_{i}$ must itself be a vertex of $Z_{i}$. Moreover, the conflict-free neighbor of every vertex $c_{j,1}$ must be a true vertex.
As there is only one color available to color all cycle vertices of all variables, the restriction of $\chi$ to the vertices of $G_1(\phi)$ yields a valid 1-coloring except for the fact that some $c_{j,3}$ might use a different color than the one used for the variables.
However, this can be fixed by simply replacing all occurring colors with one single color.
Hence, $G_2(\phi)$ is conflict-free 2-colorable iff $G_1(\phi)$ is conflict-free 1-colorable.
\end{proof}


\subsection{Sufficient Number of Colors}\label{sec:conflict-free-coloring-planar-sufficiency}
As shown above, it is NP-complete to decide whether a planar graph has a conflict-free $k$-coloring for $k \in \{1,2\}$. On the positive side, we can establish
the following result, which follows from the more general results discussed in \S~\ref{sec:general-graphs-sufficiency}.

\begin{corollary}[of Theorem \ref{thm:sufficient-criterion-excluded-minors}]\label{thm:sufficiency}
Every outerplanar graph is conflict-free $2$-colorable and every planar graph is conflict-free $3$-colorable. Moreover, such colorings can be computed in polynomial time.
\end{corollary}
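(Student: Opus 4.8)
The plan is to simply instantiate Theorem~\ref{thm:sufficient-criterion-excluded-minors} with $k=2$ for outerplanar graphs and $k=3$ for planar graphs; the polynomial-time claim then comes for free from that theorem (via iterated elimination of distance-3 sets). So the only work is to verify that the forbidden substructures $K_{k+2}$ and $K_{k+3}^{-3}$ do not occur as minors in the respective graph classes.

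First I would recall the forbidden-minor characterizations: a graph is outerplanar iff it has neither $K_4$ nor $K_{2,3}$ as a minor, and (Wagner) a graph is planar iff it has neither $K_5$ nor $K_{3,3}$ as a minor. Since the minor relation is transitive, it suffices to observe that $K_{2,3}$ is a subgraph of $K_5^{-3}$ and that $K_{3,3}$ is a subgraph of $K_6^{-3}$. Indeed, writing $K_5$ on $\{1,\dots,5\}$ and deleting the triangle on $\{1,2,3\}$ leaves $\{1,2,3\}$ independent with each of $1,2,3$ adjacent to both $4$ and $5$, so $K_5^{-3} \supseteq K_{2,3}$; similarly, writing $K_6$ on $\{1,\dots,6\}$ and deleting the triangle on $\{1,2,3\}$ leaves all nine edges between $\{1,2,3\}$ and $\{4,5,6\}$ present, so $K_6^{-3} \supseteq K_{3,3}$.

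Putting these together: for $k=2$, an outerplanar graph $G$ has no $K_4 = K_{k+2}$ minor and no $K_{2,3}$ minor; if $G$ had a $K_5^{-3} = K_{k+3}^{-3}$ minor it would also have a $K_{2,3}$ minor, which is impossible, so $G$ satisfies the hypothesis of Theorem~\ref{thm:sufficient-criterion-excluded-minors} and is conflict-free $2$-colorable in polynomial time. For $k=3$, a planar graph $G$ has no $K_5 = K_{k+2}$ minor by Wagner, and no $K_6^{-3} = K_{k+3}^{-3}$ minor either, since such a minor would force a $K_{3,3}$ minor; hence $G$ is conflict-free $3$-colorable in polynomial time.

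There is essentially no obstacle in this argument; the only points requiring a little care are getting the forbidden-minor lists for outerplanar and planar graphs exactly right and making the transitivity step explicit (a $K_5^{-3}$ minor would entail a $K_{2,3}$ minor, and a $K_6^{-3}$ minor would entail a $K_{3,3}$ minor), which is what lets us discard the "near-complete" obstructions $K_{k+3}^{-3}$ without any extra effort.
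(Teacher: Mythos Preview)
Your proposal is correct and is exactly the intended argument: the paper states this result purely as a corollary of Theorem~\ref{thm:sufficient-criterion-excluded-minors} without any further proof, so instantiating that theorem with $k=2$ and $k=3$ and invoking the standard forbidden-minor characterizations of outerplanar and planar graphs (together with the easy observations $K_{2,3}\subseteq K_5^{-3}$ and $K_{3,3}\subseteq K_6^{-3}$) is precisely what is expected.
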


Outerplanar graphs are not the only interesting graph class for which one might
suspect two  colors to be sufficient. Two other interesting subclasses of
planar graphs are series-parallel graphs and pseudomaximal planar graphs.
However, each of these classes contains graphs that do not admit a
conflict-free 2-coloring: The graph $G_3$ as defined in
\S~\ref{sec:conflict-free-coloring-of-general-graphs} is an example of a
series-parallel graph requiring three colors.  Figure~\ref{fig:max-op-cf2}
depicts a maximal outerplanar graph $O_9$ satisfying $\chi_{CF}(O_9) = 2$.
This graph can be used to obtain a pseudomaximal planar graph $M$ with $\chi_{CF}(M) = 3$ by adding two copies of $O_9$ to the neighborhood of every vertex of a triangle, similar to the construction of $G_3$, and adding gadgets on the inside of the triangle as depicted in Figure~\ref{fig:pseudomax-planar-cf3}.
\begin{figure}[h]
\centering
\includegraphics[width=4cm]{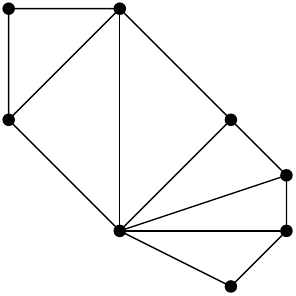}
\caption{The maximal outerplanar graph $O_9$.}
\label{fig:max-op-cf2}
\end{figure}
\begin{figure}[hbt]
\centering
\includegraphics[width=.5\linewidth]{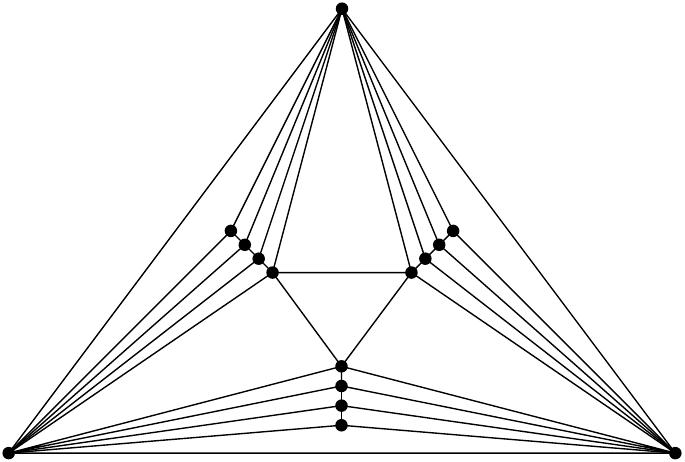}
\caption{The pseudomaximal planar graph $M$, without the $O_9$ gadgets.}
\label{fig:pseudomax-planar-cf3}
\end{figure}

Furthermore, observe that Theorem \ref{thm:sufficiency} does not hold if every vertex must be colored.
In this case, there are outerplanar graphs requiring 3 colors for a conflict-free coloring.
One can obtain an example of such a graph by adding a chord to a cycle of length 5.


\section{Closed Neighborhoods: {Planar Conflict-Free Domination}}
\label{sec:bicriteria-conflict-free-coloring-planar}
In this section we consider the decision problem {\sc{$k$-Conflict-Free Dominating Set}} for planar graphs. 
In \S~\ref{sec:bicriteria-at-most-two-colors}, we deal with the cases when $k \in \{1, 2 \}$ for planar and outerplanar graphs, and we give a polynomial time algorithm to compute an optimal conflict-free coloring of outerplanar graphs with $k \in \{1, 2 \}$ colors. Section~\ref{three_colors} discusses the problem for~\mbox{$k\geq 3$.}

\subsection{At Most Two Colors}\label{sec:bicriteria-at-most-two-colors}
We start by pointing out that, for every conflict-free $1$-colorable graph $G$, $\gamma^{1}_{CF}(G) = \gamma(G)$ holds. Moreover, Corollary~\ref{cor:complexity-12-dominating-set} discusses the complexity of {\sc{$k$-Conflict-Free Dominating Set}} and Theorem~\ref{thm:DPoptimalconflictfreecoloring} states positive results for outerplanar graphs.

\begin{corollary}[of Theorems~\ref{thm:planar-1-coloring-npc} and~\ref{thm:planar-2-coloring-npc}]\label{cor:complexity-12-dominating-set}
\ \\
{\sc{$k$-Con\-flict-Free Dominating Set}} is NP-complete for $k \in \{1, 2 \}$ for planar graphs.
\end{corollary}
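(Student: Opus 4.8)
The plan is to derive this corollary directly from the NP-completeness results already established for {\sc Planar Conflict-Free $1$-Coloring} (Theorem~\ref{thm:planar-1-coloring-npc}) and {\sc Planar Conflict-Free $2$-Coloring} (Theorem~\ref{thm:planar-2-coloring-npc}). The key observation is that the decision problem {\sc $k$-Conflict-Free Dominating Set} on input $(G,t)$ asks whether $\gamma^k_{CF}(G) \leq t$, i.e., whether $G$ has a conflict-free $k$-coloring in which at most $t$ vertices are colored. Membership in NP is immediate: a witness consists of the set of colored vertices together with their colors, and verifying that this is a conflict-free $k$-coloring (checking that every closed neighborhood contains a uniquely colored vertex) takes polynomial time.

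For NP-hardness, first I would handle $k=1$. The simplest reduction is from {\sc Planar Conflict-Free $1$-Coloring} itself: given a planar graph $G$ on $n$ vertices, note that $G$ is conflict-free $1$-colorable if and only if $\gamma^1_{CF}(G) \leq n$ (with the convention $\gamma^1_{CF}(G) = \infty$ when $G$ is not $1$-colorable). Thus the instance $(G, n)$ of {\sc $1$-Conflict-Free Dominating Set} is a yes-instance exactly when $G$ is conflict-free $1$-colorable, which is NP-complete by Theorem~\ref{thm:planar-1-coloring-npc}. The same argument works verbatim for $k=2$ using Theorem~\ref{thm:planar-2-coloring-npc}: $(G,n)$ is a yes-instance of {\sc $2$-Conflict-Free Dominating Set} iff $G$ is conflict-free $2$-colorable.

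Alternatively, if one wants to avoid even the trivial dependence on the bound $t$ being as large as $n$, one can inspect the graphs $G_1(\phi)$ and $G_2(\phi)$ constructed in the proofs of Theorems~\ref{thm:planar-1-coloring-npc} and~\ref{thm:planar-2-coloring-npc}: in both reductions, whenever a conflict-free $k$-coloring exists at all, the analysis shows it colors a controlled number of vertices (one vertex per cycle $Z_i$ of each of the three "rotations", one vertex $c_{j,3}$ per clause, plus a fixed number of internal vertices per $G_{\leq 1}$ gadget), so one obtains an explicit polynomial bound $t(\phi)$ such that $\phi$ is $1$-in-$3$-satisfiable iff $\gamma^k_{CF}(G_k(\phi)) \leq t(\phi)$. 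Either way the reduction is clearly polynomial-time and preserves planarity, since the underlying graphs $G_1(\phi)$ and $G_2(\phi)$ are planar by construction.

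The only subtlety — and it is minor — is bookkeeping around the case where no conflict-free $k$-coloring exists: one must make sure the chosen threshold $t$ is small enough that an "infinite" $\gamma^k_{CF}$ is correctly recognized as a no-instance, which is automatic since $t \leq n < \infty$. I do not expect any real obstacle here; the corollary is essentially a restatement of the two preceding theorems phrased for the domination variant, and the proof is a two-line reduction. The main thing to get right is simply citing the correct parent theorem for each value of $k$.
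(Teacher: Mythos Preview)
Your proposal is correct and matches the paper's treatment: the paper states this as an immediate corollary with no proof, and your reduction $(G,n)$ from {\sc Planar Conflict-Free $k$-Coloring} is exactly the intended one-line argument. The alternative bookkeeping you sketch is unnecessary but harmless.
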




\begin{restatable}{theorem}{outerplanar}\label{thm:DPoptimalconflictfreecoloring}
	Let $k\in\{1,2\}$ and let $G$ be an outerplanar graph. We can decide in polynomial time whether $\chi_{CF}(G)\leq k$. Moreover, we can compute a conflict-free $k$-coloring of $G$ that minimizes the number of colored vertices in \revised{$\mathcal{O}(n^{4k+1})$} time.
\end{restatable}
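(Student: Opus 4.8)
The plan is to solve both problems ($\chi_{CF}(G)\le k$ and minimizing colored vertices) simultaneously via a single dynamic program over a tree decomposition of the outerplanar graph $G$. Recall that every outerplanar graph has treewidth at most $2$, and in fact a maximal outerplanar graph corresponds to a triangulation of a polygon, whose dual tree gives a natural tree decomposition with bags of size $3$ (each bag being a triangle). First I would reduce to the connected case (the conflict-free chromatic number and the minimum number of colored vertices both decompose over connected components) and to the maximal case: we may add edges to make $G$ maximal outerplanar, but since adding edges can only increase the difficulty of conflict-free coloring, we cannot do this directly; instead I would work with an arbitrary (not necessarily maximal) outerplanar embedding and use the weak dual tree, whose nodes are the bounded faces and whose edges join faces sharing an edge of $G$, together with the ``ear'' structure, to obtain a rooted tree decomposition of width $2$ in the standard way.

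The core is the DP table. Process the tree decomposition bottom-up. At each bag $B$ (with $|B|\le 3$), a partial solution restricted to the subgraph $G_B$ seen so far is summarized by: (i) for each vertex $u\in B$, its color in $\{1,\dots,k\}$ or the symbol $\bot$ (uncolored); (ii) for each vertex $u\in B$, a ``status'' flag recording whether $u$ already has a conflict-free neighbor within $G_B$, or still needs one, and in the latter case a small amount of bookkeeping about which colors currently appear in $N[u]\cap G_B$ and with what multiplicity — but since a vertex only needs one uniquely-appearing color in its final closed neighborhood, and every remaining neighbor of $u$ lies in bags below the current one only through $B$, it suffices to track, for each color $c$, whether $c$ appears $0$, $1$, or $\ge 2$ times in $N[u]\cap G_B$ (a ternary value per color), plus a bit saying ``already satisfied''; (iii) symmetrically, for each vertex $u\in B$ and each color $c$, whether $u$ itself could still serve as the unique-$c$ witness for some not-yet-satisfied neighbor above — this is implied by (i) and the counts, so no extra storage is needed. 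The value stored is the minimum number of colored vertices in $G_B$ consistent with this summary. The number of summaries per bag is $O((k+1)^3 \cdot 3^{3k} \cdot 2^3) = O(1)$ for fixed $k$; transitions at join/introduce/forget nodes are routine case checks, and at a forget node we verify that the forgotten vertex's status is ``satisfied'' (its closed neighborhood is now complete). At the root we read off whether any feasible summary exists (answering $\chi_{CF}(G)\le k$) and the minimum value over feasible summaries (answering the domination-number question).

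The running time is $O(n)$ times the cost of the transitions for fixed $k$; to match the claimed $\mathcal{O}(n^{4k+1})$ bound I would not optimize the state space but rather note that a coarser, more transparent formulation — tracking for each of the $\le 3$ bag vertices a list of ``pending'' demands indexed by colors, over a path decomposition coming from the outerplanar embedding where bags are edges plus a constant number of auxiliary vertices — gives tables of size $n^{O(k)}$, and I would just bound this crudely; the exact exponent $4k+1$ comes from carrying, for each of the up to two boundary vertices of a separating edge, a count vector in $\{0,1,\dots,n\}^{k}$ of how many times each color appears among already-processed neighbors (needed when the embedding is not maximal so a vertex may have unboundedly many neighbors inside one subtree), i.e.\ $n^{2k}$ states, times a similar factor for the ``witness'' side, times $n$ for the sweep. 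The main obstacle is exactly this last point: in a non-maximal outerplanar graph a single vertex can have arbitrarily high degree with neighbors distributed across many faces, so the ``$\ge 2$'' truncation is not quite enough if we also want the optimal count and need to decide later which neighbor becomes the unique witness — handling this correctly (either by a clean preprocessing that triangulates while remembering which added edges are ``virtual'', or by enlarging the state to the full count vectors, which is where the polynomial-in-$n$, exponential-in-$k$ blowup enters) is the delicate part; everything else is a standard, if tedious, bounded-treewidth DP.
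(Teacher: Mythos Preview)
Your core idea—a width-$2$ tree-decomposition DP that records, for each bag vertex $u$ and each color $c$, whether $c$ occurs $0$, $1$, or $\geq 2$ times in $N[u]$ restricted to the processed subgraph, and verifies the conflict-free condition at the forget node for $u$—is correct and is \emph{not} how the paper argues. The paper instead decomposes $G$ along articulation points and edge separators into ``atoms'' (edges and chordless face cycles), and for each separator vertex $v$ stores a \emph{neighborhood configuration}: the color of $v$ together with the explicit \emph{identities} of up to $k$ designated conflict-free neighbors $S_v\subseteq N[v]$. Because a vertex may have $\Theta(n)$ neighbors, there are $\mathcal{O}(n^{k})$ such configurations per vertex and $\mathcal{O}(n^{2k})$ per edge separator; the $\mathcal{O}(n^{4k+1})$ bound then arises from a shortest-path computation on a ``configuration graph'' built over each face atom. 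Your multiplicity-vector state, by contrast, has size $O(1)$ per bag for fixed $k$ and therefore already gives a linear-time algorithm—simpler and stronger than what the paper proves.

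Your final paragraph, however, invents an obstacle that is not there, and your attempted derivation of the exponent $4k+1$ is wrong. Once a color $c$ has multiplicity $\geq 2$ in $N[u]$, no later decision can make $c$ unique in $N[u]$, so the $\{0,1,\geq 2\}$ truncation loses nothing—also for the minimization version, and regardless of $\deg(u)$. You never need to ``decide later which neighbor becomes the unique witness''; you simply check, at the forget node for $u$, whether some entry of the now-complete multiplicity vector equals $1$. Drop the ``already satisfied'' bit (it is at best redundant and at worst stale, since a current multiplicity of $1$ may grow to $\geq 2$), and at join nodes track multiplicities in $G_B\setminus B$ so that merging is plain addition without double counting of the shared bag. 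Two further minor points: the claim that ``adding edges can only increase the difficulty of conflict-free coloring'' is false (a new edge may supply a conflict-free neighbor), and outerplanar graphs need not have bounded pathwidth, so the aside about a path decomposition should be discarded; neither affects your main argument.
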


The proof of Theorem~\ref{thm:DPoptimalconflictfreecoloring} relies on a polynomial-time algorithm that computes a $k$-coloring of the input outerplanar graph $G$ if and only if such a coloring exists.
\revised{Intuitively speaking, our algorithm works as follows.
For each vertex $v \in G$ and each edge $vw \in G$, we consider all possible assignments of conflict-free neighbors to $v$ and $w$ and colors to these conflict-free neighbors.
Each such assignment is called a \emph{neighborhood configuration}.
Because the number of colors is constant and there is at most one conflict-free neighbor per color for each vertex, there are only polynomially many neighborhood configurations for each vertex or edge.

\begin{figure}[h]
	\begin{center}
		\begin{subfigure}[b]{.42\linewidth}
			\resizebox{\linewidth}{!}{\includegraphics{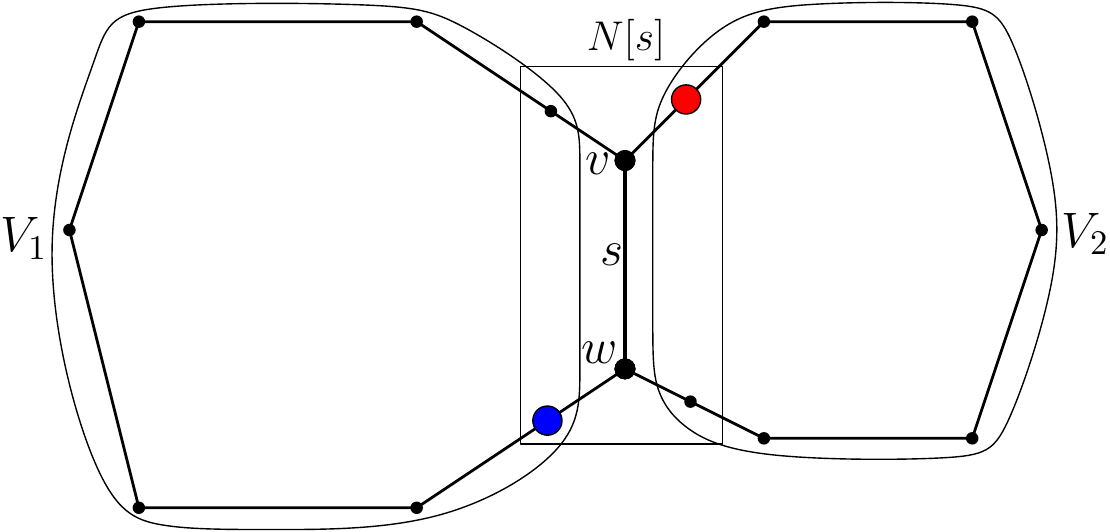}}
			\caption{An outerplanar graph $G$ and an edge separator $s = \{v,w\}$ splitting $G$ into components $V_1$ and $V_2$ with a neighborhood configuration of $s$ (colored vertices).}
		\end{subfigure}
		\hfill
		\begin{subfigure}[b]{.54\linewidth}
			\resizebox{\linewidth}{!}{\includegraphics{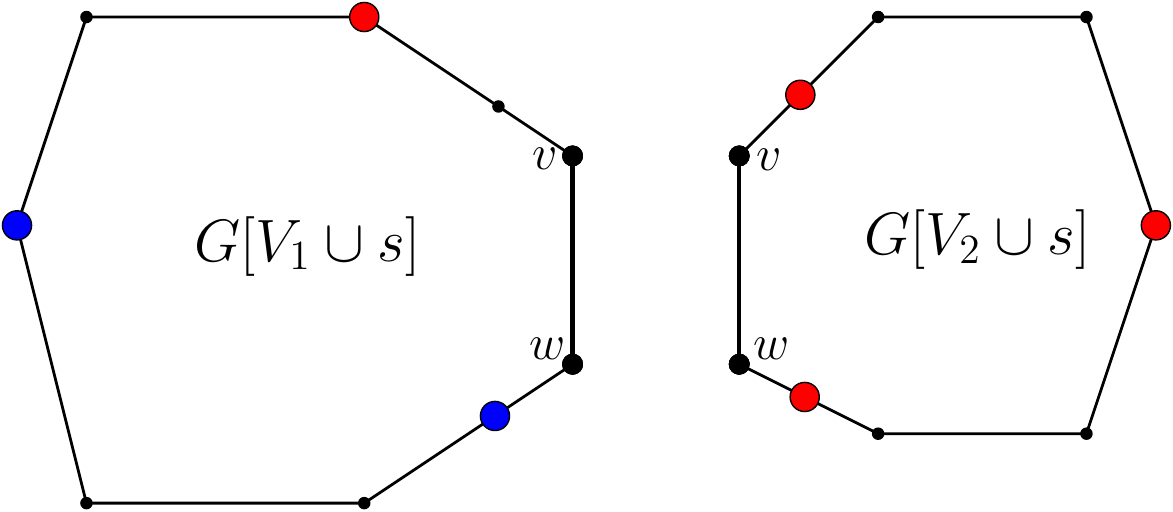}}
			\caption{Colorings extending the neighborhood configuration of $s$. These colorings are conflict-free on $V_1$ and $V_2$ and can be combined to a conflict-free coloring of $G$. Note that $u \notin V_1$ does not need to have a conflict-free neighbor in the coloring of $V_1$.}
		\end{subfigure}
	\end{center}
	\caption{If we fix a neighborhood configuration of a separator of $G$ and find conflict-free colorings of the separated components that extend this neighborhood configuration, we can combine these colorings to a conflict-free coloring of $G$.}
	\label{fig:example_combination_colorings}
\end{figure}

We decompose the outerplanar input graph at vertex separators (articulation points) and edge separators (edges shared by faces); removing the vertices of a separator splits the graph into several components.
The following key property of this decomposition is the basis for our dynamic programming algorithm; see Figure~\ref{fig:example_combination_colorings}.
Let $s$ be a separator in our graph, and let $V_1,\ldots,V_k$ be the vertex sets of the components of $G$ after removing $s$.
If we fix a neighborhood configuration $\mathcal{C}$ of $s$ and find, for each component $G[V_i \cup s]$, a coloring extending $\mathcal{C}$ that is conflict-free on $V_i$, then we can combine these colorings to a coloring of $G$.

\begin{figure}[h!]
	\begin{center}
		\resizebox{.7\textwidth}{!}{\includegraphics{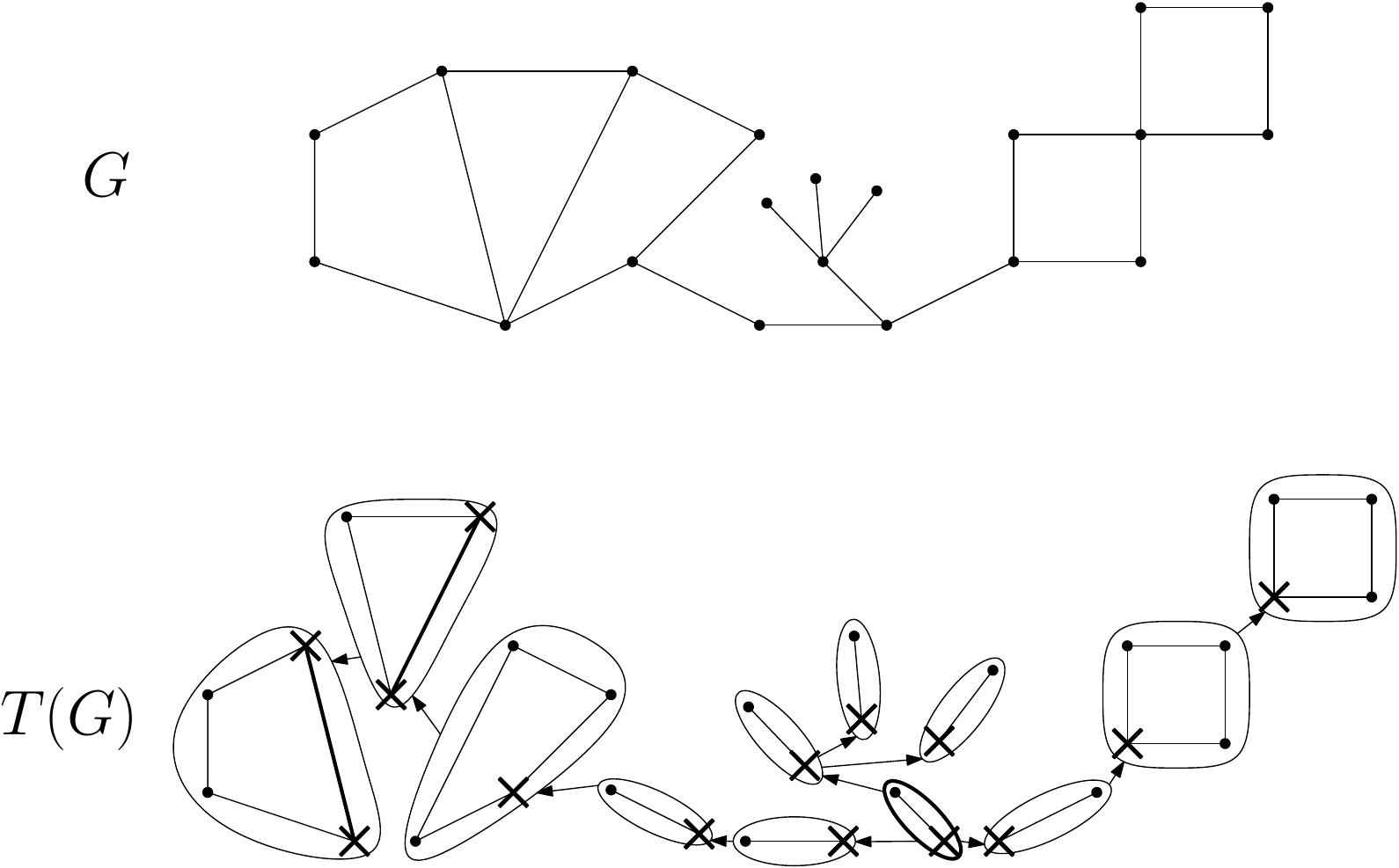}}
	\end{center}
	\caption{The arborescence $T(G)$ (bottom) for an outerplanar graph $G$ (top). The vertices of the incoming separator of each atom are marked ${\bf \times}$; incoming edge separators are drawn in bold. The root atom is drawn with bold outline and has an arbitrarily chosen vertex as incoming separator.}
	\label{fig:op-decomposition}
\end{figure}

Our decomposition yields an arborescence of components as depicted in Figure~\ref{fig:op-decomposition}, with edges between components that share a separator, using an arbitrary component as root and directing all edges accordingly.
In this arborescence, each component except the root has a unique incoming edge corresponding to a separator, called the \emph{incoming separator} of the component marked ${\bf \times}$ in Figure~\ref{fig:op-decomposition}.
Starting at the leaves, we use dynamic programming on this arborescence as follows.
For each component and each possible neighborhood configuration of the incoming separator, we compute a conflict-free $k$-coloring that extends the neighborhood configuration and minimizes the number of colored vertices, or find that this neighborhood configuration does not allow a conflict-free $k$-coloring.
At the root, this allows us to determine whether the graph is conflict-free $k$-colorable.
Moreover, if the graph is colorable, we can retrieve a coloring that minimizes the number of colored vertices.
In the following, we give a detailed formal description of this algorithm, prove its correctness and analyze its runtime.
}

\subsubsection{Preliminaries}
\revised{Let $G = (V, E)$ be an outerplanar graph.
W.l.o.g., we assume that $G$ is connected and has at least two vertices.
Let $\chi: V^{\prime}\subseteq V(G)\to\{0, 1,\ldots, k\}$ be a partial coloring of the vertices of $G$ and let $v\in V$.
Observe that $\chi$ defined like this modifies the definition given in the introduction by assigning color $0$ to uncolored vertices.
We begin by defining \emph{vertex neighborhood configurations}.
Intuitively speaking, a vertex neighborhood configuration assigns a color to $v$ and lists all conflict-free neighbors of $v$ together with their color, see Figure~\ref{fig:vertexconfiguration}.

\begin{figure}[ht]
\begin{center}
	\begin{subfigure}[b]{.29\linewidth}
		\resizebox{\linewidth}{!}{\includegraphics{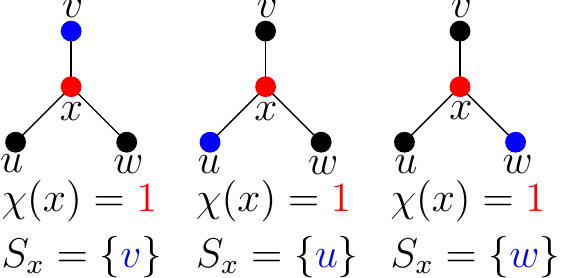}}
		\caption{$x$ is colored red and has one conflict-free neighbor.}
	\end{subfigure}
	\hspace{.03\linewidth}
	\begin{subfigure}[b]{.29\linewidth}
		\resizebox{\linewidth}{!}{\includegraphics{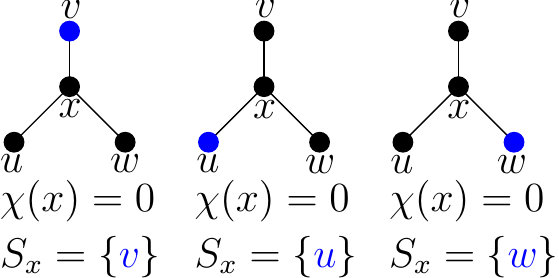}}
		\caption{$x$ is uncolored and has one conflict-free neighbor.}
	\end{subfigure}
	\hspace{.03\linewidth}
	\begin{subfigure}[b]{.32\linewidth}
		\resizebox{\linewidth}{!}{\includegraphics{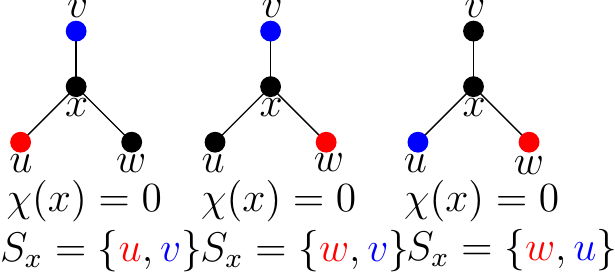}}
		\caption{$x$ is uncolored and has two conflict-free neighbors.}
	\end{subfigure}
\end{center}
\vspace*{-12pt}
\caption{A vertex $x$ with three neighbors and the possible neighborhood configurations of $s$, modulo switching labels of the colors.}
\label{fig:vertexconfiguration}
\end{figure}

\begin{Definition}[Vertex neighborhood configuration]\label{def:confvertex}
	A \emph{vertex neighborhood configuration} is a tuple $\mathcal{C}_{v} = [\chi(v), S_{v}, \rho_v]$, where $\chi(v) \in \{0,1,\ldots, k\}$ denotes the color of $v$; if $\chi(v) = 0$, we regard $v$ as uncolored, see Figure~\ref{fig:vertexconfiguration}.
	The set $\emptyset \neq S_v \subseteq N[v]$ contains all conflict-free neighbors of $v$.
	Because there is at most one conflict-free neighbor for each color, $S_v$ contains at most $k$ elements.
	Finally, $\rho_v: S_v \to \{1,\ldots,k\}$ is an injective assignment of colors to the conflict-free neighbors of $v$ such that $v \in S_v$ implies $\chi(v) = \rho_v(v)$.
\end{Definition}
}
	\revised{
		We call two vertex neighborhood configurations $\mathcal{C}_u,\mathcal{C}_v$ for adjacent vertices $u$ and $v$ \emph{compatible} if they do not contradict each other in the following sense, see Figure~\ref{fig:edgeconfiguration}.
		Firstly, they must not assign different colors to the same vertex.
		Secondly, after combining the partial colorings induced by $\mathcal{C}_u$ and $\mathcal{C}_v$, all conflict-free neighbors specified in the neighborhood configurations must remain conflict-free.
		An edge neighborhood configuration consists of two compatible vertex neighborhood configurations for its endpoints.
		Formally, we define this as follows.
	}

\revised{
\begin{figure}[ht]
\begin{center}
	\begin{subfigure}[b]{.16\linewidth}
		\resizebox{\linewidth}{!}{\includegraphics{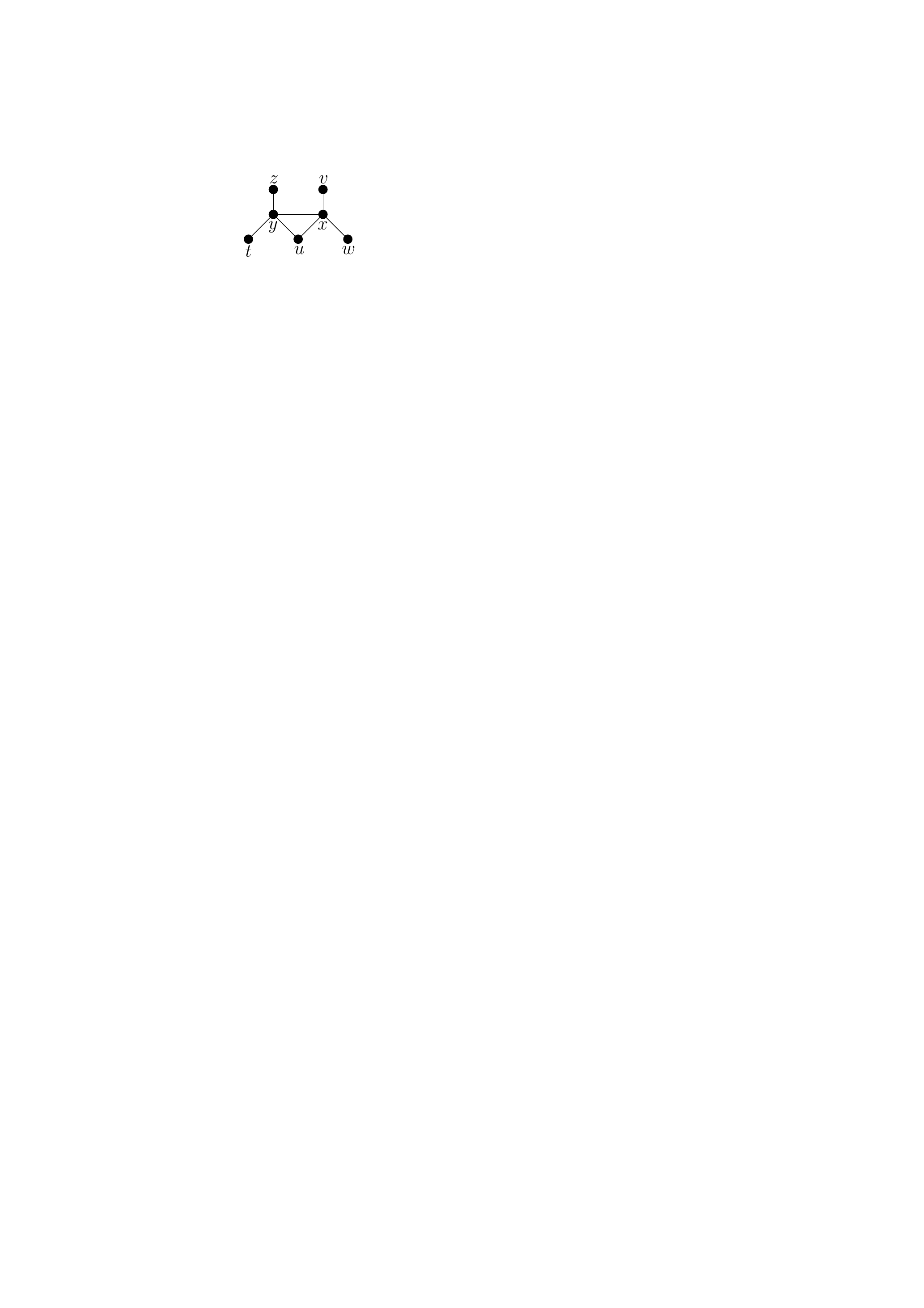}}
		\caption{The edge $xy$ and the neighbors of $x$ and $y$.}
	\end{subfigure}
	\hspace{.03\linewidth}
	\begin{subfigure}[b]{.67\linewidth}
		\resizebox{\linewidth}{!}{\includegraphics{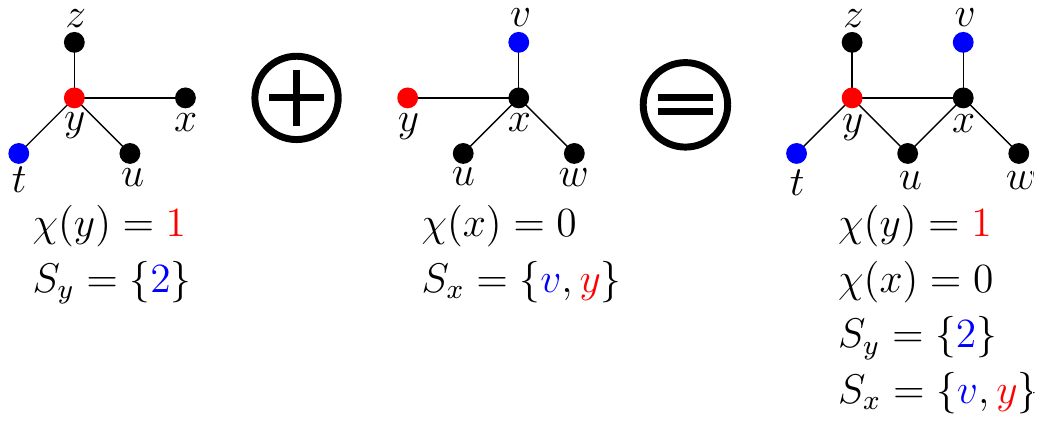}}
		\caption{The combination of two neighborhood configurations of $x$ and $y$.}
	\end{subfigure}
\end{center}
\vspace*{-12pt}
\caption{The combination of two neighborhood configurations of two adjacent vertices $x$ and $y$ results in a neighborhood configuration of the edge $xy$.}
\label{fig:edgeconfiguration}
\end{figure}

\begin{Definition}[Edge neighborhood configuration]\label{def:confedge}
	For an edge $uv$, we say that $\mathcal{C}_{u} = [\chi(u), S_{u}, \rho_u]$ and $\mathcal{C}_{v} = [\chi(v), S_{v}, \rho_v]$ are \emph{compatible}, denoted by $\mathcal{C}_{u}\leftrightarrow\mathcal{C}_{v}$, if the following conditions hold, see Figure~\ref{fig:edgeconfiguration}.
	\begin{enumerate}
		\item For every $w \in S_v \cap S_u$, $\rho_u(w) = \rho_v(w)$. If $u$ is in $S_v$, then $\chi(u)$ must be $\rho_v(u)$, and vice versa.
		\item The combined coloring
			$$\rho_{uv}: S_u \cup S_v \cup \{u,v\} \to \{0,\ldots,k\}, w \mapsto \begin{cases}\chi(w), &\mbox{if }w \in \{u,v\},\\
			\rho_u(w), &\mbox{if }w \in S_u,\\
			\rho_v(w) &\mbox{otherwise.}
			\end{cases}$$ must be injective on $N[v]$ and $N[u]$, with the exception that both $u$ and $v$ may receive color 0.
	\end{enumerate}
	An \emph{edge neighborhood configuration} of $e = uv$ is a pair $\mathcal{C}_e = [\mathcal{C}_u, \mathcal{C}_v]$ of compatible vertex neighborhood configurations.
	For $w \in \{u,v\}$, $\mathcal{C}_e^w$ shall denote the neighborhood configuration of $w$ contained in $\mathcal{C}_e$.
\end{Definition}
}
\revised{
	Observe that we can check in $\mathcal{O}(k)$ time whether a pair of vertex neighborhood configurations is compatible.
	For a pair of incident edges, we call a pair of edge neighborhood configurations \emph{compatible} if the neighborhood configuration of $v$ is the same in both neighborhood configurations, see Figure~\ref{fig:compatible_edge_configurations}.
	
\begin{figure}[ht]
\begin{center}
	\begin{subfigure}[b]{.18\linewidth}
		\resizebox{\linewidth}{!}{\includegraphics{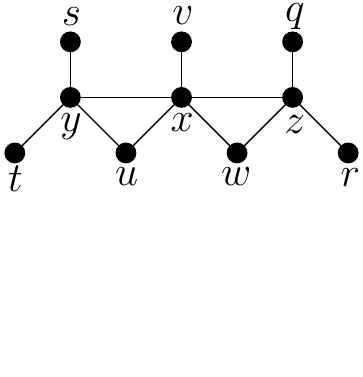}}
		\caption{Two adjacent edges $yx$ and $xz$ and the neighbors of $y$, $x$, and $z$.}
	\end{subfigure}
	\hspace{.03\linewidth}
	\begin{subfigure}[b]{.77\linewidth}
		\resizebox{\linewidth}{!}{\includegraphics{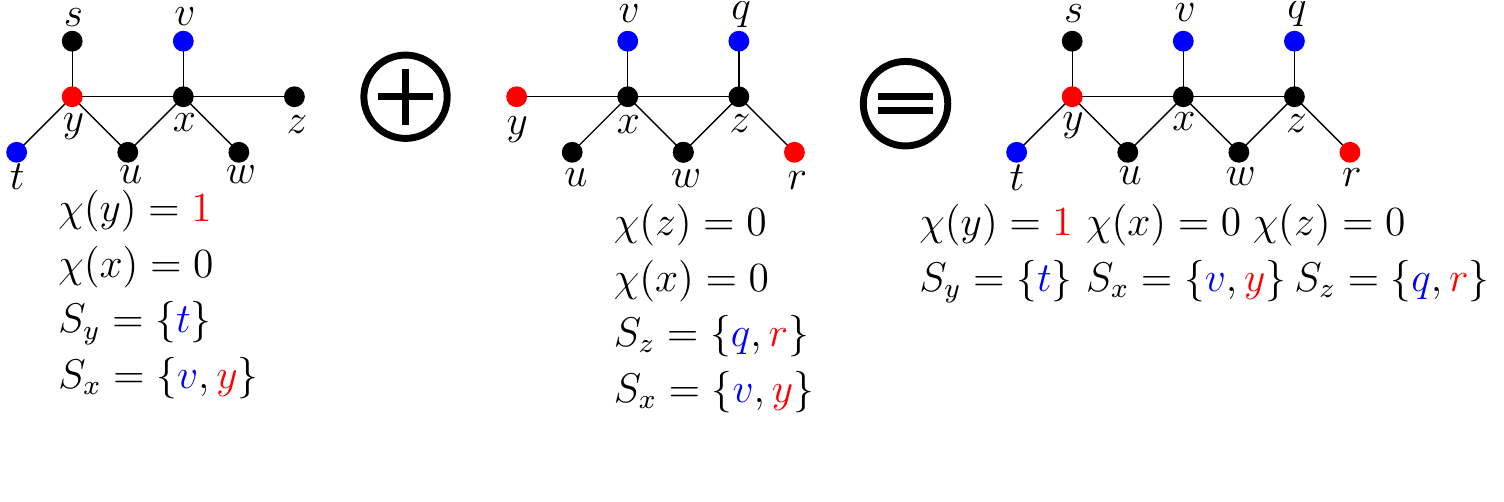}}
		\caption{Two compatible neighborhood configurations of the two adjacent edges $yx$ and $xz$.}
	\end{subfigure}
\end{center}
\vspace*{-12pt}
\caption{Compatible neighborhood configurations of adjacent edges.}
\label{fig:compatible_edge_configurations}
\end{figure}
	
\begin{Definition}[Compatibility]\label{def:compEdges}
	If $\mathcal{C}_{e}^{v} = \mathcal{C}_{e^{\prime}}^{v}$ for a pair $e = uv, e^{\prime} = vw$ of incident edges, then we say $\mathcal{C}_{e^{\prime}}$ \emph{is compatible with} $\mathcal{C}_{e}$, see Figure~\ref{fig:compatible_edge_configurations}.
\end{Definition}}

\revised{
	We observe that if we have a neighborhood configuration for each edge and all these neighborhood configurations are pairwise compatible, the colors of all vertices are fixed in a consistent manner and we can thus derive a conflict-free $k$-coloring from the neighborhood configurations.
\begin{observation}
\label{obs:coloring_from_configuration}
	Let $\mathcal{C}$ be a set of edge neighborhood configurations containing one neighborhood configuration $\mathcal{C}_e$ for each edge $e$.
	If $\mathcal{C}_{e}$ and $\mathcal{C}_{e^\prime}$ are compatible for \emph{every} pair $e = uv$, $e^{\prime} = vw$ of incident edges, a conflict-free $k$-coloring can be obtained from $\mathcal{C}$.  
\end{observation}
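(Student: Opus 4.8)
The plan is to construct the coloring $\chi^\star\colon V \to \{0,1,\ldots,k\}$ directly from the given family $\mathcal{C}$ and then verify that it is a conflict-free $k$-coloring (where color $0$ denotes "uncolored"). First I would argue that $\chi^\star$ is well defined. For each vertex $v$, pick any edge $e$ incident to $v$ and set $\chi^\star(v) := \rho_{e}(v)$, the color that the edge neighborhood configuration $\mathcal{C}_e = [\mathcal{C}_u,\mathcal{C}_v]$ assigns to $v$ via its vertex neighborhood configuration $\mathcal{C}_e^v = [\chi(v),S_v,\rho_v]$. I must check this does not depend on the choice of $e$: if $e = uv$ and $e' = vw$ are two edges both incident to $v$, then by hypothesis $\mathcal{C}_e$ and $\mathcal{C}_{e'}$ are compatible, which by Definition~\ref{def:compEdges} means $\mathcal{C}_e^v = \mathcal{C}_{e'}^v$; in particular the color component $\chi(v)$ of that vertex neighborhood configuration agrees. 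Since $G$ is connected (we assumed this), the graph of "incidence at a common vertex" on the edge set of the star of $v$ is connected, so all incident edges give the same value, and $\chi^\star(v)$ is determined. (If $v$ is isolated there is nothing to color; by the earlier assumption $G$ is connected with at least two vertices, so this does not arise, and in any case an isolated vertex is handled trivially by the path-coloring remarks.)

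Next I would verify the conflict-free property at an arbitrary vertex $v$. Fix any edge $e = uv$ incident to $v$ and write $\mathcal{C}_e^v = [\chi(v), S_v, \rho_v]$. By Definition~\ref{def:confvertex}, $S_v \subseteq N[v]$ is nonempty, $\rho_v$ is injective, and $v \in S_v$ forces $\chi(v) = \rho_v(v)$. Pick any $y \in S_v$; I claim $y$ is a conflict-free neighbor of $v$ under $\chi^\star$, i.e., $\chi^\star(y) = \rho_v(y) \neq 0$ and no other vertex of $N[v]$ receives color $\rho_v(y)$. The key point is that for every neighbor $z$ of $v$, the value $\chi^\star(z)$ is precisely the value the edge neighborhood configuration $\mathcal{C}_{vz}$ of the edge $vz$ assigns to $z$; and since $\mathcal{C}_{vz}$ and $\mathcal{C}_e$ are compatible, they share the vertex neighborhood configuration of $v$, so the combined coloring $\rho_{vz}$ of Definition~\ref{def:confedge}(2) extends $\rho_v$ on $S_v \cup\{v\}$ and is injective on $N[v]$, with the sole caveat that $v$ and $z$ may both be uncolored (color $0$). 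Thus no two vertices of $N[v]$ carrying a nonzero color under $\chi^\star$ can both equal $\rho_v(y)$, because each such coincidence would contradict the injectivity of some $\rho_{vz}$ on $N[v]$ after identifying $\chi^\star$ with the relevant $\rho_{vz}$ value. Hence $\rho_v(y)$ occurs exactly once in $N[v]$, on $y$, so $v$ has a conflict-free neighbor.

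The one place that needs care — and what I expect to be the main (mild) obstacle — is bookkeeping the identification "$\chi^\star(z)$ equals the value assigned by $\mathcal{C}_{vz}$" consistently across all the different edge neighborhood configurations touching $N[v]$, and then assembling from the pairwise-injectivity conditions $\rho_{vz}$ (each only guaranteed injective on $N[v]$ and $N[z]$) the single global statement that $\rho_v(y)$ appears exactly once in $N[v]$ under $\chi^\star$. This is purely a matter of unwinding the definitions: for any two neighbors $z_1, z_2$ of $v$, compatibility of $\mathcal{C}_{vz_1}$ and $\mathcal{C}_{vz_2}$ with $\mathcal{C}_e$ (transitively, with each other through $\mathcal{C}_e^v$) forces their shared picture of $v$'s neighborhood, and $\chi^\star(z_i) = \rho_{vz_i}(z_i)$ is injectively constrained against $\rho_v(y)$ by condition (2) of Definition~\ref{def:confedge}. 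I would therefore phrase the argument as: (i) $\chi^\star$ well defined by connectivity and Definition~\ref{def:compEdges}; (ii) for each $v$, every $y \in S_v$ with $\rho_v(y)\neq 0$ is unique of its color in $N[v]$ under $\chi^\star$, using Definition~\ref{def:confedge}(2) applied to each edge $vz$; (iii) since $S_v \neq \emptyset$ and by Definition~\ref{def:confvertex} at least one element of $S_v$ must receive a genuine (nonzero) color — indeed a conflict-free neighbor must be colored — $v$ has a conflict-free neighbor. This establishes that $\chi^\star$ is a conflict-free $k$-coloring obtainable from $\mathcal{C}$, as claimed. $\qed$
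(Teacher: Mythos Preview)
The paper states this result as an Observation and supplies no proof; the verification is left implicit in Definitions~\ref{def:confvertex}--\ref{def:compEdges}. Your argument correctly fills in those details along exactly the lines the definitions were set up to support: well-definedness of $\chi^\star$ follows from Definition~\ref{def:compEdges}, and the conflict-free property at $v$ follows by applying condition~(2) of Definition~\ref{def:confedge} to each edge $vz$ (so that $\chi^\star(z)=\chi(z)$ and $\rho_v(y)$ both appear in the combined map $\rho_{vz}$, which is injective on $N[v]$). One small simplification: your step~(iii) is unnecessary, since by Definition~\ref{def:confvertex} the map $\rho_v$ already takes values in $\{1,\ldots,k\}$, so every $y\in S_v$ automatically carries a nonzero color.
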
}

\revised{
	Our algorithm works by dynamic programming on an arborescence $T(G)$ derived from a decomposition of $G$ along vertex separators and edge separators into components called \emph{atoms}.
	\begin{Definition}
		A \emph{vertex separator} of $G$ is an articulation point of $G$, i.e., a vertex whose removal disconnects $G$.
		An \emph{edge separator} of $G$ is an edge $uv$ of $G$ such that removing $u$ and $v$ disconnects $G$.
		An \emph{atom} of $G$ is either an edge atom (formed by an edge) or a face atom (induced chord-free cycle of $G$).
	\end{Definition}
	
	Observe that, because $G$ is outerplanar, any connected induced subgraph of $G$ with at least two vertices is either an atom or contains a separator.
	The vertex set $V(T(G))$ of the arborescence $T(G)$ consists of atoms of $G$ and is defined by induction on the induced subgraphs $G'$ of $G$ as follows.
	If an induced subgraph $G'$ of $G$ is an atom, $V(T(G')) = \{ G' \}$.
	If $G'$ is no atom, let $s = \{v\}$ be a vertex separator of $G'$ if one exists; otherwise, let $s = \{u,v\}$ be an edge separator of $G'$.
	Let $V'_1,\ldots,V'_\ell$ be the vertex sets of the connected components of $G' - s$, and let $G'_1,\ldots,G'_\ell$ be the subgraphs induced by $V'_1 \cup s,\ldots,V'_\ell \cup s$.
	Then $V(T(G')) = \bigcup_{1 \leq i \leq \ell} V(T(G'_i))$ is the set of all atoms obtained by further subdividing $G'$.

	There is an arc between two vertices of $T(G)$ if the two atoms share a separator.
	To avoid cycles in $T(G)$, if more than two atoms share a vertex separator, instead of introducing an arc between every pair of them, we pick an arbitrary atom among them and connect it to all other atoms sharing the vertex separator.
	Because $G$ is outerplanar, this yields a tree of atoms of $G$; we turn this tree into the arborescence $T(G)$ by picking an arbitrary root vertex and orienting all edges away from this root.
	Each vertex $a$ of $T(G)$ except for the root has a unique incoming arc corresponding to a unique separator, called the \emph{incoming separator} of the atom $a$.
	For the root atom $r$, we pick an arbitrary vertex of $r$ as incoming separator; in this way, each atom $a$ has exactly one incoming separator $s_a$.
	See Figure~\ref{fig:op-decomposition} for an example of the construction.
	For an atom $a \in T(G)$, we denote by $T(G,a)$ the subtree of $T(G)$ rooted at $a$.
	Moreover, let $S(G,a)$ be the subgraph of $G$ induced by all vertices occurring in any atom in $T(G,a)$.
}

\subsubsection{Description of the Algorithm}
	\revised{
	For each vertex and each edge, our algorithm keeps a list of \emph{feasible} neighborhood configurations.
	\begin{figure}[h!]
			\begin{center}
				\begin{subfigure}[b]{.27\linewidth}
					\resizebox{\linewidth}{!}{\includegraphics{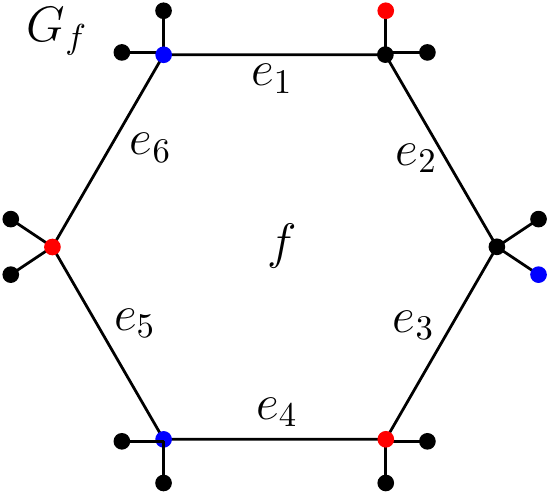}}
					\caption{A neighborhood configuration of a face $f$ of $G$.}
					\label{fig:dpfaceGraph}
				\end{subfigure}
				\hspace{.07\linewidth}
				\begin{subfigure}[b]{.55\linewidth}
					\resizebox{\linewidth}{!}{\includegraphics{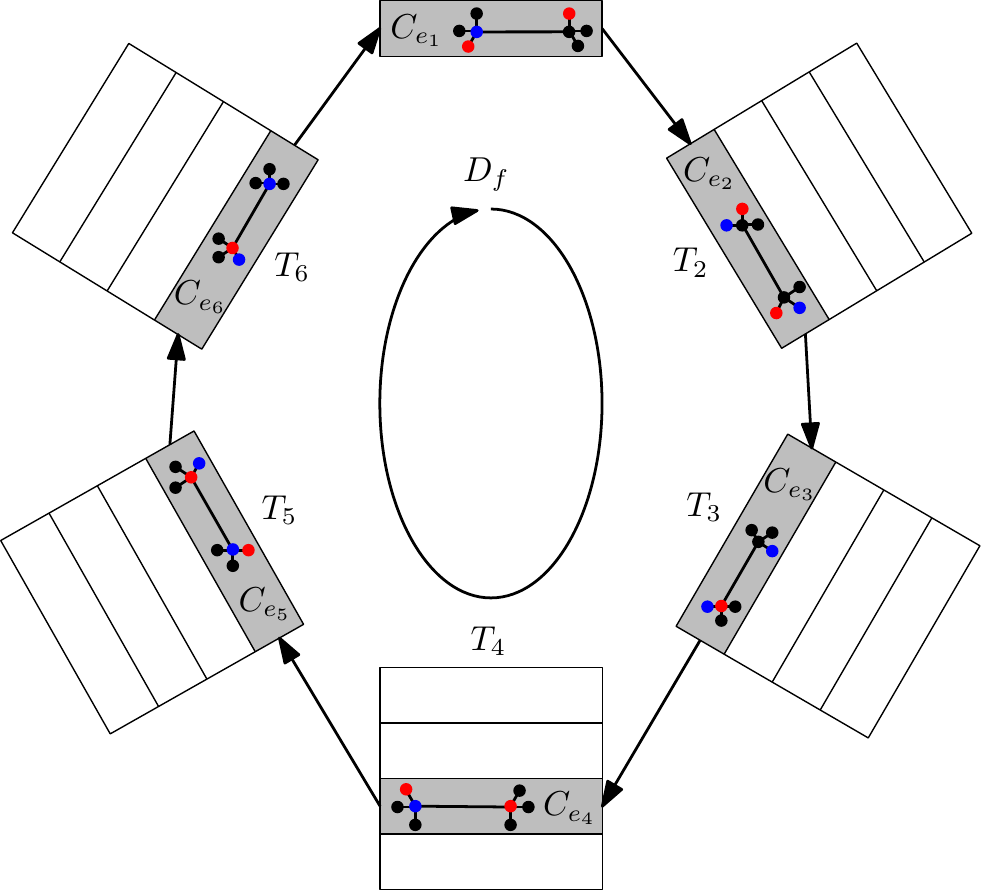}}
					\caption{The neighborhood configuration graph of $f$ and the cycle in the neighborhood configuration graph corresponding to the neighborhood configuration of Figure~\ref{fig:dpfaceGraph}.}
					\label{fig:dpfaceConfigurationGraph}
				\end{subfigure}
			\end{center}
			\caption{A face of $G$ and the corresponding neighborhood configuration graph.}
			\label{fig:dpface}
		\end{figure}%
		At any point in the algorithm, we know that any neighborhood configuration not on this list cannot be extended to a conflict-free $k$-coloring of $G$.
		Whenever we remove a neighborhood configuration from the list of feasible neighborhood configurations for a vertex, we also remove all corresponding neighborhood configurations from its incident edges.
		Similarly, when we remove the last neighborhood configuration of an edge that contains a certain vertex neighborhood configuration, we also remove that neighborhood configuration from the list of feasible neighborhood configurations of the vertex.
		In this way, deleting a feasible neighborhood configuration may cause a cascade of further deletions; however, a careful implementation of our algorithm can handle these deletions in $\mathcal{O}(1)$ time per deleted neighborhood configuration.
		Because each neighborhood configuration is deleted at most once, this does not affect our asymptotic running time.

		We initialize the lists of feasible neighborhood configurations by computing, for each vertex and each edge, the list of all possible neighborhood configurations according to Definitions~\ref{def:confvertex}~and~\ref{def:confedge}.
		We proceed by refining, for each atom $a \in V(T(G))$, the list of feasible neighborhood configurations of the incoming separator $s_a$.
		This process starts in the leaves of $T(G)$ and works its way up towards the root, terminating once the root has been processed.
		Processing an atom $a \in V(T(G))$ means removing all neighborhood configurations $\mathcal{C}_{s_a}$ of its incoming separator that cannot be extended to a conflict-free $k$-coloring of $S(G,a)$.
		Note that in this conflict-free $k$-coloring, the vertices of $s_a$ need not have conflict-free neighbors in $S(G,a)$ if $\mathcal{C}_{s_a}$ is such that all their conflict-free neighbors are outside of $S(G,a)$.
		Moreover, for each atom $a$ and each feasible neighborhood configuration $\mathcal{C}_{s_a}$, the algorithm computes and stores the minimum number of colored vertices required for a conflict-free $k$-coloring of $S(G,a)$ extending $\mathcal{C}_{s_a}$.
		
		If the list of feasible neighborhood configurations of any vertex or edge becomes empty at any point, the algorithm aborts and reports that the graph is not conflict-free $k$-colorable.
		Otherwise, after processing the root $r$, the algorithm checks all feasible neighborhood configurations of $s_r$ to find a neighborhood configuration for which the number of colored vertices is minimal.
		Starting with this neighborhood configuration, the algorithm backtracks and reconstructs a conflict-free $k$-coloring of $G$ with a minimal number of colored vertices. 

		It remains to describe how to process an atom of $T(G)$.
		In case of a face atom $f$, the incoming separator can either be a vertex or an edge separator.
		We assume that it is an edge separator $e_1 = uv$; vertex separators can be handled analogously.
		The face $f$ may contain vertices and edges that are not part of any separator.
		For those vertices and edges, we have already computed the set of feasible neighborhood configurations in the first step of the algorithm.
		All other vertices and edges except for the incoming separator correspond to children of $f$ in $T(G)$; therefore, we have already computed the set of feasible neighborhood configurations for each of them.

		For each neighborhood configuration $\mathcal{C}_{e_1}$ still in the list of feasible neighborhood configurations of $e_1$, we build the directed \emph{neighborhood configuration graph} $G^* = (V^*,E^*)$ as depicted in Figure~\ref{fig:dpfaceConfigurationGraph}.
		
		The vertex set of $G^*$ consists of the neighborhood configuration $\mathcal{C}_{e_1}$ and each feasible neighborhood configuration $\mathcal{C}_{e_i}$ of each edge $e_i \neq e_1$ of $f$.
		There is an edge between two neighborhood configurations $\mathcal{C}_{uv}, \mathcal{C}_{vw}$ iff they are compatible.
		We choose a direction of the edges around the face and direct all edges in $G^*$ accordingly; see Figure~\ref{fig:dpfaceConfigurationGraph}.
		Each simple directed cycle $D_f$ in $G^*$ must contain $\mathcal{C}_{e_1}$ and thus corresponds to a selection of one neighborhood configuration for each edge of $f$; these neighborhood configurations are pairwise compatible.
		Therefore, there is a conflict-free $k$-coloring of $S(G,f)$ extending $\mathcal{C}_{e_1}$ iff there is a simple directed cycle in $G^*$.
		
		Moreover, we add weights to the edges of $G^*$ such that the weight of a simple directed cycle corresponds to the minimum number of colored vertices in such a coloring.
		In order to compute the weights, for vertices and edges of $f$ that are separators corresponding to children of $f$ in $T(G)$, we make use of the minimum number of colored vertices in their corresponding subtrees that we computed earlier.

		We can find a minimum-weight cycle in $G^*$ or decide there is no such cycle in time $\mathcal{O}(|V^*| + |E^*|)$, using an algorithm similar to Dijkstra's shortest path algorithm.
		We can do this in linear time because we can expand the vertices in fixed order, expanding all vertices corresponding to an edge of $f$ before moving on to all vertices of the next edge around the face.
		If our algorithm finds a minimum-weight cycle, we store its weight as the minimum number of vertices colored in any conflict-free $k$-coloring of $S(G,f)$ extending $\mathcal{C}_{e_1}$.
		Otherwise, $\mathcal{C}_{e_1}$ is removed from the list of feasible neighborhood configurations of $e_1$.
		Repeating this procedure for each feasible neighborhood configuration of $e_1$ concludes the processing of a face atom $f$.

		In the following, we describe how to handle an edge atom $e \in V(T(G))$.
		In this case, the incoming separator $s_e$ is a vertex separator $v$.
		For each neighborhood configuration in the list of feasible neighborhood configurations of $v$, there is at least one neighborhood configuration in the list of feasible neighborhood configurations of $e$; otherwise, we would have already deleted the neighborhood configuration.
		To compute the minimum number of colored vertices in $S(G,e)$ for some neighborhood configuration $\mathcal{C}_v$, we check for each neighborhood configuration of $e$ containing $\mathcal{C}_v$ the minimum number of colored vertices, taking into account the color of $u$ and the minimum number of vertices computed for the children of $e$ in $T(G)$.
		Repeating this for each feasible neighborhood configuration of $v$ concludes the processing of an edge atom $e$.
	}
	
	
	
\subsubsection{Correctness of the Algorithm}
	\revised{Next, we argue that our algorithm is correct, i.e., it finds a conflict-free $k$-coloring with minimum number of colored vertices iff one exists.
	In this section, we call a neighborhood configuration \emph{valid} if it can be extended into a conflict-free $k$-coloring of $G$.

	There are only two reasons for deleting a neighborhood configuration $\mathcal{C}$ from a list of feasible neighborhood configurations.
	In the first case, the deletion of $\mathcal{C}$ is a consequence of a deletion of another neighborhood configuration $\mathcal{C}'$.
	In this case, $\mathcal{C}$ is deleted because deleting $\mathcal{C}'$ has led to an incident vertex or edge without a feasible neighborhood configuration compatible to $\mathcal{C}$.
	This can never cause a valid neighborhood configuration to be deleted unless we deleted a valid neighborhood configuration $\mathcal{C}'$ first.

	In the second case, the deleted neighborhood configuration $\mathcal{C}$ belongs to an incoming separator $s_f$ of a face atom $f$ for which the algorithm finds that there is no conflict-free $k$-coloring of $S(G,a)$ extending it.
	By induction on $T(G)$, we assume that when we start processing $f$, no valid neighborhood configurations have been deleted from the list of feasible neighborhood configurations for any vertex or edge of $f$.
	Assume there was a valid neighborhood configuration $\mathcal{C}$ of $s_f$ deleted by our algorithm.
	Because $\mathcal{C}$ is valid, there is a conflict-free $k$-coloring of $S(G,f)$ extending $\mathcal{C}$.
	This yields a set of compatible neighborhood configurations for the edges of $f$ and thus a cycle in the corresponding neighborhood configuration graph $G^*$.
	This is a contradiction, because the algorithm only deletes $\mathcal{C}$ if there is no such cycle.
	We conclude that no valid neighborhood configuration is ever deleted from the list of feasible neighborhood configurations of any vertex or edge.
	Therefore, the algorithm will always find the graph to be conflict-free $k$-colorable if it is.
	In a similar manner, we can argue that the number of colored vertices used by the coloring produced by our algorithm is minimal.

	In the remainder of the section, we prove that our algorithm never produces an invalid conflict-free $k$-coloring of $G$.
	Again, the proof is by induction on $T(G)$.
	We discuss an inductive step for the case that the current atom is a face $f$ with an incoming edge separator $e_1$; the induction base and the remaining cases are analogous.
	We assume by induction that for each neighborhood configuration $\mathcal{C}_{s_a}$ of the incoming separator $s_a$ of each child $a$ of $f$, there is a conflict-free $k$-coloring of $S(G,a)$ extending $\mathcal{C}_{s_a}$.
	Let $\mathcal{C}$ be a neighborhood configuration of $e_1$ that remains feasible after processing of $f$.
	This is because there is a cycle in the corresponding neighborhood configuration graph $G^*$.
	This cycle corresponds to a set of pairwise compatible edge neighborhood configurations.
	We can construct a conflict-free $k$-coloring of $S(G,f)$ by combining the colorings induced by these neighborhood configurations and the corresponding colorings of the graphs $S(G,a)$ for children $a$ of $f$.
	At the root $r$ of $T(G)$, this yields a conflict-free $k$-coloring of $G$, because all neighbors of the incoming separator of $r$ are part of $S(G,r) = G$.
	Therefore, our algorithm never produces an invalid conflict-free coloring.
	}

\subsubsection{Runtime of the Algorithm}
	\revised{Finally, we need to analyze the running time of our dynamic programming approach.
	We begin by observing that $T(G)$ has $\mathcal{O}(n)$ atoms.
	Moreover, we observe that the number of vertex neighborhood configurations $\mathcal{C}_{v} = [\chi(v), S_{v}, \rho_v]$ of a vertex $v$ is in $\mathcal{O}(n^{k})$, as there are at most $\dbinom{|N[v]|}{k} \cdot k!$ possibilities for $S_{v}$ and $\rho_v$.
	Therefore, the number of edge neighborhood configurations $\mathcal{C}_{e} = [\mathcal{C}_{u}, \mathcal{C}_{v}]$ of an edge $e \in E$ is in $\mathcal{O}(n^{2k})$.

	Let $f = e_1e_2\ldots e_m$ be a face atom; the running time for processing face atoms dominates the running time for all other computation steps of the algorithm.
	For $f$, we build the neighborhood configuration graph $G^*$ that has $\mathcal{O}(n^{2k+1})$ vertices, because $f$ has at most $n$ edges, each with $\mathcal{O}(n^{2k})$ neighborhood configurations.
	The number of edges between the neighborhood configurations of two incident edges $e_i = uv,e_{i+1} = vw$ along $f$ is at most $\mathcal{O}(n^{3k})$ because there are only $\mathcal{O}(n^k)$ neighborhood configurations for each of the vertices $u,v$ and $w$.
	Therefore, the number of edges in $G^*$ is $\mathcal{O}(n^{3k+1})$.
	This leads to a running time of $\mathcal{O}(n^{5k+2})$, because we run a graph scan on $G^*$ for each of the $\mathcal{O}(n^{2k})$ neighborhood configurations of the incoming separator and each of the $\mathcal{O}(n)$ face atoms.}
	
	\revised{Streamlining this approach leads to a runtime of $\mathcal{O}(n^{4k+1})$.
	In particular, we modify our subroutine processing a face atom $f$ that has an incoming edge separator $e=uv$ as follows.
	For each neighborhood configuration $\mathcal{C}_v$ of $v$ we extend the neighborhood configuration graph $G^{*}$ of $f$ by considering all feasible neighborhood configurations $\mathcal{C}_{e_1}$ of $e_1$ such that $\mathcal{C}_{e_1}^v = \mathcal{C}_{v}$ holds and compute minimum-weight cycles in $G^{*}$.
	For each neighborhood configuration $\mathcal{C}_{e_1}$ of $e_1$ that is reached during an application of the the shortest path algorithm, we obtain the minimum number of vertices colored in any conflict-free coloring of $S(G,f)$ extending $\mathcal{C}_{e_1}$.
	As the number of all edge and vertex neighborhood configurations of $G$ is $\mathcal{O}(n^{3k+1})$, we obtain an overall runtime of $\mathcal{O}(n^{4k+1})$.}
	
	This concludes the proof of Theorem~\ref{thm:DPoptimalconflictfreecoloring}.

\subsection{Approximability for Three or More Colors}
\label{three_colors}
In \S~\ref{sec:conflict-free-coloring-planar-sufficiency} we stated that every planar graph is conflict-free 3-colorable. In this section we deal with conflict-free $3$-colorings of planar graphs that, additionally, minimize the number of colored vertices.

\begin{theorem}\label{thm:three_colors}
Let $k\geq 3$ and let $G$ be a planar graph. The following holds:
\begin{enumerate}
\item[(1)] Unless $\mbox{P} = \mbox{NP}$, there is no polynomial-time approximation algorithm providing a constant-factor approximation of $\gamma_{CF}^3(G)$ for planar graphs. {\sc{$3$-Conflict-Free~Dominating Set}} is NP-complete for planar graphs.
\item[(2)] For $k\geq 4$, {\sc{$k$-Conflict-Free Dominating Set}} is NP-complete. Also, $\gamma_{CF}^k(G) = \gamma(G)$, and the problem is fixed-parameter tractable with parameter $\gamma_{CF}^k(G)$. Furthermore, there is a PTAS for $\gamma_{CF}^k(G)$.
\item[(3)] If $G$ is outerplanar, then $\gamma_{CF}^k(G) = \gamma(G)$ and there is a linear-time algorithm to compute~$\gamma_{CF}^k(G)$.
\end{enumerate}
\end{theorem}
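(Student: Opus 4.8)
The plan is to handle the three parts separately; part~(1) comes essentially for free from the machinery of Theorem~\ref{thm:inapproximability-of-gamma-cf-k}, while parts~(2) and~(3) both hinge on a single structural fact, the identity $\gamma^k_{CF}(G)=\gamma(G)$ for the stated ranges of $k$, after which all the complexity and approximation statements reduce to classical results about \textsc{Dominating Set} on (outer)planar graphs.

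\textbf{Part (1).} For $k=3$ I would rerun the reduction from the proof of Theorem~\ref{thm:inapproximability-of-gamma-cf-k}, but starting from the NP-complete problem \textsc{Planar $3$-Coloring} rather than general \textsc{$3$-Coloring}: given a planar graph $G$ on $n$ vertices and the hypothetical approximation factor $c$, set $M:=(n+1)(c+1)$, attach $M$ new pendant vertices to every vertex of $G$, and attach $M$ new vertices adjacent to both endpoints of every edge of $G$, obtaining $G'$. The construction keeps the graph planar (the pendants and the little ``lenses'' over each edge embed in the incident faces), and as shown there $\gamma^3_{CF}(G')\le n$ iff $G$ is properly $3$-colorable while $\gamma^3_{CF}(G')\ge M$ otherwise. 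Hence a constant-factor approximation for $\gamma^3_{CF}$ on planar graphs would decide \textsc{Planar $3$-Coloring}, and the decision problem \textsc{$3$-Conflict-Free Dominating Set} is NP-hard; membership in NP is immediate.

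\textbf{Part (2), the heart: $\gamma^k_{CF}(G)=\gamma(G)$ for $k\ge 4$.} Since the colored set of any conflict-free coloring is a dominating set we always have $\gamma^k_{CF}(G)\ge\gamma(G)$, and since $\gamma^k_{CF}$ is non-increasing in $k$ it suffices to show that a minimum dominating set of $G$ can be $4$-colored so that the coloring is conflict-free. Fix a dominating set $D$. For every vertex $v$ choose a \emph{leader} $\lambda(v)\in N[v]\cap D$, taking $\lambda(v)=v$ when $v\in D$, and define the \emph{constraint graph} $H$ on vertex set $D$ with an edge between $\lambda(v)$ and every other member of $N[v]\cap D$, for each $v$. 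A proper coloring of $H$ gives $\lambda(v)$ a color different from all other colored vertices of $N[v]\cap D$, so $\lambda(v)$ is a conflict-free neighbor of $v$; thus any proper coloring of $H$, transplanted onto $D$ and nowhere else, is a conflict-free coloring of $G$ with exactly $|D|$ colored vertices. The key observation is that $H$ is a \emph{minor} of $G$: start from the subgraph of $G$ consisting of $G[D]$ together with all edges $vd$ for $v\notin D$ and $d\in N(v)\cap D$, then contract, for each $a\in D$, the connected set $\{a\}\cup\{v\notin D:\lambda(v)=a\}$ into a single vertex. Hence $H$ is planar, and the Four-Color Theorem yields the required $4$-coloring; since this is all effective given $D$, the resulting conflict-free $4$-coloring can be computed in polynomial time. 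The contrast with part~(1) shows three colors cannot match $\gamma(G)$, so the Four-Color Theorem is genuinely needed. The point to get right is the contraction step --- checking that the branch sets $\{a\}\cup\{v\notin D:\lambda(v)=a\}$ are pairwise disjoint and connected in the chosen subgraph and that the contraction produces exactly the edge set of $H$ (the loops created by the edges $\lambda(v)v$ are simply discarded).

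\textbf{Parts (2) and (3), consequences.} Given the identity, for $k\ge 4$ the problem \textsc{$k$-Conflict-Free Dominating Set} with colored-vertex budget $p$ is a yes-instance exactly when $\gamma(G)\le p$, so it coincides with planar \textsc{Dominating Set} and is NP-complete (membership in NP being clear, as conflict-freeness of a given partial coloring is checkable in polynomial time); the PTAS follows from Baker's layering technique for planar \textsc{Dominating Set} --- apply the effective version of the identity to the dominating set it returns --- and FPT with parameter $\gamma^k_{CF}(G)=\gamma(G)$ follows from the linear kernel for planar \textsc{Dominating Set} parameterized by solution size. For part~(3) I would run the same construction on an outerplanar $G$: now $H$ is a minor of an outerplanar graph, hence outerplanar, hence $2$-degenerate and therefore $3$-colorable, giving $\gamma^3_{CF}(G)=\gamma(G)$ and thus $\gamma^k_{CF}(G)=\gamma(G)$ for all $k\ge 3$; since outerplanar graphs have treewidth at most $2$, a minimum dominating set --- and, via the construction, an optimal conflict-free $k$-coloring --- can be found in linear time by bottom-up dynamic programming on a tree decomposition, completing Theorem~\ref{thm:three_colors}.
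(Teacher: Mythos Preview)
Your proposal is correct and takes essentially the same approach as the paper: the key structural step---contract each non-$D$ vertex into a chosen neighbor in $D$, observe that the resulting graph on $D$ is a (planar, resp.\ outerplanar) minor of $G$, and properly color it with four (resp.\ three) colors to obtain a conflict-free coloring supported on $D$---is exactly the paper's Lemma preceding the proof of Theorem~\ref{thm:three_colors}. Your presentation via the explicit leader map $\lambda$ and constraint graph $H$ is a bit more formal than the paper's direct contraction, but the argument, the cited black boxes (planar $3$-coloring hardness for~(1), Baker's PTAS and the FPT result for planar dominating set for~(2), linear-time outerplanar dominating set for~(3)), and the conclusions all match.
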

The proof of Theorem~\ref{thm:three_colors} is based on the following polynomial-time algorithm, which transforms a dominating set $D$ of a planar graph $G$ into a conflict-free $k$-coloring of $G$, coloring only the vertices of $D$:
Let $D$ be a dominating set of a planar graph $G$.
Every vertex \mbox{$v\in V(G) \setminus D$} is adjacent to at least one vertex in $D$.
Pick any such vertex $u \in D$ and contract the edge $uv \in E(G)$ towards $u$.
Repeat this until only the vertices from $D$ remain.
Because $G$ is planar, the graph $G^{\prime} = (D, E^{\prime})$ obtained in this way is planar, as $G^{\prime}$ is a minor of $G$.
By the $4$-coloring theorem, we can compute a proper $4$-coloring of $G^{\prime}$.

\begin{lemma}
The 4-coloring generated by this procedure induces a conflict-free 4-coloring~\mbox{of~$G$.}
\end{lemma}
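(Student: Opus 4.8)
The plan is to verify directly that every vertex $v \in V(G)$ has a conflict-free neighbor under the coloring in which exactly the vertices of $D$ are colored, using the colors assigned by the proper $4$-coloring of the minor $G'$. First I would record the structure of the contraction process: since $D$ is a dominating set, every vertex $v \in V(G)\setminus D$ has a neighbor in $D$, and the procedure contracts each such $v$ (directly or transitively) into one fixed vertex $u(v) \in D$; thus the vertices of $G'$ are exactly the vertices of $D$, each carrying a contracted ``bag'' of non-$D$ vertices merged into it, and $G'$ is a minor of $G$, hence planar, so the $4$-coloring $\chi$ indeed exists. For $v \in D$ put $u(v) := v$. In every case $u(v) \in N_G[v]\cap D$, so $u(v)$ is a colored vertex in the closed neighborhood of $v$.

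The key claim is that for each $v \in V(G)$ the color $\chi(u(v))$ is unique in $N_G[v]$, i.e.\ no vertex of $N_G[v]\cap D$ other than $u(v)$ receives color $\chi(u(v))$. To prove it, let $u' \in N_G[v]\cap D$ with $u' \neq u(v)$; since $\chi$ is a proper coloring of $G'$, it suffices to show that $u'$ and $u(v)$ are adjacent in $G'$. If $v \in D$, then $v = u(v)$ is adjacent to $u'$ in $G$, and an edge of $G$ between two distinct vertices of $D$ is never destroyed by a contraction, so $u'$ and $u(v)$ are adjacent in $G'$. If $v \notin D$, then $v$ lies in the bag of $u(v)$ while $u'$ lies in its own bag (a different bag, as $u' \neq u(v)$), and the edge $u'v \in E(G)$ joins these two bags, so again $u'$ and $u(v)$ are adjacent in $G'$. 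Either way $\chi(u') \neq \chi(u(v))$, which proves the claim, and hence $u(v)$ is a conflict-free neighbor of $v$.

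I do not expect a genuine obstacle in this argument; the only point that needs a little care is the bookkeeping of the contraction step, namely making precise that each non-$D$ vertex ends up in the bag of a fixed $D$-vertex that is one of its original neighbors, and that adjacency in $G'$ is exactly ``some edge of $G$ joins the two bags'' (so that an edge between two distinct $D$-vertices always survives). Once that is pinned down, the proof reduces to the short case distinction above.
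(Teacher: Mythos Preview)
Your proposal is correct and follows essentially the same approach as the paper: for each vertex $v$ you identify the $D$-vertex $u(v)$ into which $v$ is contracted (or $v$ itself if $v\in D$), and then argue that every other $u'\in N_G[v]\cap D$ becomes adjacent to $u(v)$ in $G'$, so the proper coloring of $G'$ makes $\chi(u(v))$ unique in $N_G[v]$. Your bag-based bookkeeping is slightly more explicit than the paper's brief version, but the underlying argument is the same.
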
 
\begin{proof}
Every vertex $u \in D$ is a conflict-free neighbor to itself as its color does not appear in $N_{G}(u)$.
Let $v \in V(G) \setminus D$ be some uncolored vertex, and let $u \in D$ be the vertex that $v$ was contracted towards by the algorithm.
In $G^{\prime}$, this contraction made $u$ adjacent to all other vertices in $N_G(v) \cap D$, which guarantees that the color of $u$ is unique in $N_G(v) \cap D$.
As $V(G) \setminus D$ remains uncolored, the color of $u$ is thus unique in $N_G[v]$.
\end{proof}

\begin{proof}[Proof of Theorem~\ref{thm:three_colors}]
Proposition (1) follows from Theorem \ref{thm:inapproximability-of-gamma-cf-k} of \S~\ref{sec:general-graphs-complexity-domination-number}: The reduction used there preserves planarity and proper planar 3-coloring is NP-complete.
For (2), $\gamma^k_{CF}(G) = \gamma(G)$ implies NP-hardness in planar graphs because planar minimum dominating set is NP-hard.
Moreover, the coloring algorithm \revised{allows us to} apply any approximation scheme for planar dominating set to conflict-free $k$-coloring.
We obtain a PTAS for the conflict-free domination number by applying our coloring algorithm to the dominating set produced by the PTAS of Baker and Hill \cite{outerplanar}.
Additionally, Alber et al.~\cite{afn-ptdrds-04} proved that planar dominating set is FPT with parameter $\gamma(G)$, implying that computing the planar conflict-free domination number for $k \geq 4$ is FPT with parameter $\gamma_{CF}^k(G)$.
For (3), the class of outerplanar graphs is properly $3$-colorable in linear time and closed under taking minors.
Kikuno et al.~\cite{lineartime_outerplanar} present a linear time algorithm for finding a minimum dominating set in a series-parallel graph, which includes outerplanar graphs.
The result follows by combining this linear time algorithm with the coloring algorithm mentioned above, but using just three colors instead of four.
\end{proof}

\section{Open Neighborhoods: Planar Conflict-Free Coloring}
In this section we discuss the problem of conflict-free coloring  with open neighborhoods.
Recall that an open-neighborhood conflict-free coloring is a coloring of some vertices of a graph $G$ such that every vertex has a conflict-free neighbor in its open neighborhood $N(v)$.
In some settings, this problem is a natural alternative to the closed-neighborhood variant; for instance, when guiding a robot from one location to another, a uniquely colored beacon at the robot's current position may be insufficient.

Note that isolated vertices are problematic for this variant of conflict-free coloring; therefore, in the following, we assume that $G$ does not contain isolated vertices.
Moreover, we observe the following.
\begin{observation}
Let $G$ be a graph, $v,w \in V(G)$, and $\deg(v) = 1$, $\deg(w) = 2$.
Then, for any number $k$ of colors, in any conflict-free $k$-coloring, the unique neighbor of $v$ must be colored.
Moreover, the two neighbors of $w$ cannot have the same color.
\label{obs:open-neighborhoods-proper-coloring}
\end{observation}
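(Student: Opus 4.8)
The plan is to simply unwind the definition of an open-neighborhood conflict-free $k$-coloring at the two distinguished vertices. Recall that a partial coloring $\chi$ is open-neighborhood conflict-free iff every vertex $x$ has a \emph{conflict-free neighbor}, i.e., a vertex $y \in N_G(x)$ whose color is assigned to exactly one vertex of $N_G(x)$; note that such a $y$ is in particular a colored vertex, since only colored vertices carry a color. The whole argument is a case check on $|N_G(v)|$ and $|N_G(w)|$, so there is really no structural obstacle; the only point requiring care is keeping straight that a conflict-free neighbor must itself be colored.

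First I would treat the vertex $v$ with $\deg(v) = 1$. Let $u$ be its unique neighbor, so $N_G(v) = \{u\}$. Any conflict-free neighbor of $v$ must lie in $N_G(v)$ and hence must be $u$ itself; since a conflict-free neighbor is by definition a colored vertex, $u$ must be colored in every conflict-free $k$-coloring. Next I would treat $w$ with $\deg(w) = 2$; write $N_G(w) = \{a,b\}$ with $a \neq b$ (the two neighbors are distinct because $G$ is simple). Suppose, for contradiction, that $a$ and $b$ receive the same color $c$; in particular both are colored. Then the only color occurring on $N_G(w)$ is $c$, and it occurs twice, so no vertex of $N_G(w)$ has a color that is unique in $N_G(w)$. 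Hence $w$ has no conflict-free neighbor, contradicting the assumption that $\chi$ is conflict-free. Therefore the two neighbors of $w$ cannot be assigned a common color.

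As noted, there is no genuinely hard step here: the statement is an immediate consequence of the definition once one observes (i) a conflict-free neighbor must be a colored vertex, which forces $u$ to be colored in the degree-$1$ case, and (ii) a two-element neighborhood whose members share a color contains no uniquely colored vertex, which rules out the monochromatic case for the degree-$2$ vertex. The main thing to be explicit about in the write-up is distinguishing "uncolored" from "colored," since in this open-neighborhood variant not every vertex is colored.
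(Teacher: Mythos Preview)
Your argument is correct and is exactly the intended reasoning; the paper states this as an observation without proof, precisely because it follows immediately from the definition of an open-neighborhood conflict-free coloring in the way you describe.
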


This leads to a straightforward reduction from proper coloring to conflict-free coloring.
Given a graph $G$, adding an otherwise isolated neighbor to each original vertex and placing a vertex with degree 2 on every original edge yields a graph $G'$ with $\chi_O(G') = \chi_P(G)$.
See Figure~\ref{fig:example-planar-4colors} for an example of this reduction.
The resulting graph $G'$ is bipartite.
Furthermore, the reduction preserves planarity, implying that bipartite planar graphs may require at least 4 colors in a conflict-free coloring.
Moreover, even though this reduction does not necessarily preserve outerplanarity, applying it to a $K_3$ yields an outerplanar graph that requires at least 3 colors.
For \emph{bipartite} planar and outerplanar graphs, these bounds are tight.

\begin{figure}
\begin{center}
	\resizebox{0.3\linewidth}{!}{\includegraphics{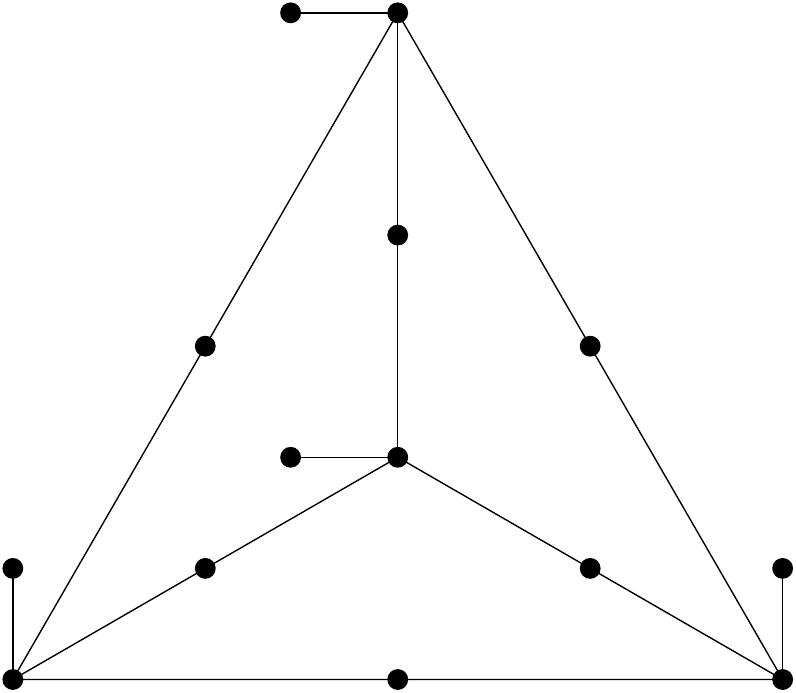}}
\end{center}
\caption{The graph $G'$ resulting from applying the reduction to $K_4$. This bipartite planar graph has $\chi_O(G') = 4$.}
\label{fig:example-planar-4colors}
\end{figure}

\begin{corollary}
It is NP-complete to decide whether a bipartite planar graph $G$ is open-neighborhood conflict-free $3$-colorable.
\end{corollary}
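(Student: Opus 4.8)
The plan is to observe that the corollary is essentially immediate from the reduction constructed in the paragraph preceding it, together with the classical fact that deciding proper $3$-colorability of planar graphs is NP-complete. Membership in NP is routine: a partial $3$-coloring is a polynomial-size witness, and checking that every vertex has a neighbor whose color is unique in its open neighborhood is a polynomial-time test.

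For NP-hardness I would reduce from {\sc Planar $3$-Coloring}. Given a planar graph $G$, form $G'$ exactly as described: attach a private pendant vertex to every vertex of $G$ and subdivide every edge of $G$ once. Three things then need to be checked, all of them easy: (i) $G'$ is planar, since edge subdivision and adding pendant vertices preserve planarity; (ii) $G'$ is bipartite, taking the original vertices as one color class and the pendant vertices together with the subdivision vertices as the other (every edge of $G'$ joins an original vertex to a pendant or subdivision vertex); and (iii) $G'$ is constructible in polynomial time. It then remains to show $\chi_P(G)\le 3 \iff \chi_O(G')\le 3$, which is the content of the identity $\chi_O(G') = \chi_P(G)$ quoted above, specialized to three colors.

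The two directions of that equivalence are where the (minor) work lies. For the ``only if'' direction, from a proper $3$-coloring $c$ of $G$ I would color each original vertex $v$ by $c(v)$, color every pendant with color $1$, and leave every subdivision vertex uncolored; then each original vertex sees color $1$ exactly once (at its pendant), each pendant's only neighbor is colored, and each subdivision vertex sees the two distinct colors of its endpoints, so the coloring is open-neighborhood conflict-free. For the ``if'' direction, given an open-neighborhood conflict-free $3$-coloring $\chi$ of $G'$, Observation~\ref{obs:open-neighborhoods-proper-coloring} forces every original vertex to be colored (it is the unique neighbor of a pendant) and forces the two endpoints of every edge to receive distinct colors (they are the two neighbors of a subdivision vertex), so $\chi$ restricted to $V(G)$ is a proper $3$-coloring of $G$. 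I do not anticipate a real obstacle; the only care needed is to verify that the pendant-coloring trick in the forward construction does not accidentally create a conflict at an original vertex, and to confirm that the reduction stays inside the bipartite planar class so that NP-hardness of {\sc Planar $3$-Coloring} is the relevant source problem.
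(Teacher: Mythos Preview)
Your proposal is correct and follows exactly the approach the paper intends: the corollary is stated without proof precisely because it is immediate from the reduction $G\mapsto G'$ described in the preceding paragraph (pendants plus edge subdivisions, yielding $\chi_O(G')=\chi_P(G)$) together with the NP-completeness of \textsc{Planar $3$-Coloring}. Your verification of bipartiteness, planarity, and both directions of the equivalence via Observation~\ref{obs:open-neighborhoods-proper-coloring} is accurate, including the forward construction where pendants receive color~$1$.
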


\begin{theorem}
	Every bipartite planar graph is open-neighborhood conflict-free $4$-colorable.
	For bipartite outerplanar graphs, three colors are sufficient.
	\label{thm:bipartite-planar-4col}
\end{theorem}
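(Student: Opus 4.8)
The plan is to exploit the bipartition to decouple the two colour requirements. Write $G=(A\cup B,E)$ and recall $G$ has no isolated vertices. Since $N(v)\subseteq A$ for every $v\in B$ and $N(v)\subseteq B$ for every $v\in A$, a partial colouring of $A$ alone decides whether every vertex of $B$ has a conflict-free neighbour, and a partial colouring of $B$ alone decides this for the vertices of $A$; the two sub-colourings never influence each other, so they may even use the same four colours. Hence it suffices to colour (part of) $A$ so that for every $v\in B$ the set $N(v)$ contains a uniquely coloured vertex, i.e.\ to show that the hypergraph $H_B$ with ground set $A$ and hyperedges $\{N(v):v\in B\}$ has conflict-free chromatic number at most $4$, and then to argue symmetrically for $H_A$. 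The incidence graph of $H_B$ is exactly $G$, so the statement we really want is: a hypergraph with planar bipartite incidence graph has conflict-free chromatic number at most $4$. The graph $G'$ obtained from $K_4$ in Figure~\ref{fig:example-planar-4colors} shows this is tight: its degree-one vertices force all of $A$ to be coloured (Observation~\ref{obs:open-neighborhoods-proper-coloring}), and its degree-two vertices then force those four colours to be pairwise distinct.

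To colour $H_B$ I would imitate the ``iterated elimination'' idea behind Theorem~\ref{thm:sufficient-criterion-excluded-minors}, but carried out on the ground set while deleting \emph{hyperedges} rather than vertices. In round $i$ choose an inclusion-maximal \emph{$2$-independent} set $S_i\subseteq A$ — meaning no two of its vertices lie in a common surviving hyperedge — and colour $S_i$ with colour $i$. Every surviving hyperedge then meets $S_i$ in at most one vertex, so every hyperedge meeting $S_i$ at all is served by colour $i$; delete those hyperedges, and delete $S_i$ from the ground set. Deleting hyperedges rather than ``served vertices'' is exactly what makes the recursion work in the open-neighbourhood setting: an unserved hyperedge $N(v)$ is disjoint from $S_i$, hence still a subset of the remaining ground set, so no vertex of a hyperedge is ever orphaned and the remaining instance is a sub-hypergraph whose incidence graph is an induced subgraph of $G$, again planar and bipartite. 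Moreover the rule always uses every element it removes, so the ground set cannot empty out while a (necessarily singleton) hyperedge remains.

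The crux is to bound the number of rounds by $4$ (by $3$ in the outerplanar case). Suppose, for contradiction, that some hyperedge $N(v)$ survives rounds $1,\dots,4$. Pick $a\in N(v)$; by maximality of $S_i$ the vertex $a$ has, in round $i$, a neighbour $a_i\in S_i$ and a vertex $b_i\in B$ with $\{a,a_i\}\subseteq N(b_i)$, and one checks that $a,a_1,\dots,a_4$ are distinct in $A$ and $v,b_1,\dots,b_4$ are distinct in $B$. Chasing the further maximality obligations of $a_1,\dots,a_4$ and of the surviving hyperedge, I would assemble inside $G$ a $K_5$-minor or a $K_{3,3}$-minor — the bipartite specialisation of the excluded-minor bookkeeping in the proof of Theorem~\ref{thm:sufficient-criterion-excluded-minors} — contradicting planarity; for outerplanar $G$ the corresponding forbidden configurations are $K_4$- and $K_{2,3}$-minors, cutting the process to three rounds. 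Pinning down precisely which forbidden (bipartite) minor must appear, and verifying that it genuinely does, is the step I expect to be the main obstacle, and it will likely need a careful Euler-formula or discharging estimate on $G$ rather than a one-line minor argument. Combining the colourings of $H_A$ and $H_B$ then yields Theorem~\ref{thm:bipartite-planar-4col}: four colours for bipartite planar graphs and three for bipartite outerplanar graphs.
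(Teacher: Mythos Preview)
Your decoupling observation---that colouring $A$ serves $B$ and vice versa, independently and even with the same palette---is correct and is exactly the structure the paper exploits. But the paper's execution is far shorter than your proposed greedy scheme. For each side it simply contracts every vertex $v$ of the other side into one chosen neighbour $v_1(v)$; the result is a planar (respectively outerplanar) minor of $G$, and a proper $4$-colouring (respectively $3$-colouring) of that minor, supplied by the four-colour theorem (respectively by $3$-colourability of outerplanar graphs), makes $v_1(v)$ uniquely coloured in $N_G(v)$. That is the entire argument.

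Your iterated-elimination route, by contrast, has a genuine gap that you yourself identify: the claim that four rounds suffice is not proved. The sketch you give does not obviously produce a $K_5$- or $K_{3,3}$-minor. The witnesses $a_1,\dots,a_4$ are each linked to $a$ through distinct $b_i$'s, but to build a $K_5$-minor on $\{a,a_1,\dots,a_4\}$ you also need internally disjoint $a_i$--$a_j$ connections, and the ``further maximality obligations'' you invoke only guarantee that $a_j$ shares a hyperedge with \emph{some} element of $S_i$, not with $a_i$ itself, and not through a $b$-vertex distinct from the ones already used. In Theorem~\ref{thm:sufficient-criterion-excluded-minors} the analogous minor-extraction works only because of the specific ``distance exactly~$3$'' selection rule and because entire closed neighbourhoods are removed, so the leftover vertices form no cutset; neither ingredient has a clear analogue in your hypergraph setting. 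It is therefore unclear that this line can be completed without invoking the four-colour theorem anyway---at which point the paper's contraction argument is both simpler and complete.
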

\begin{proof}
	Let $G = (V_1 \cup V_2,E)$ be a bipartite planar graph with partitions $V_1$ and $V_2$; the proof proceeds analogously for outerplanar graphs.
	We construct two minors $G_1$ and $G_2$ of $G$, to each of which we apply the planar four-color theorem.
	We build $G_1$ by merging all vertices $v \in V_2$ into an arbitrarily chosen neighbor $v_1(v) \in V_1$.
	Because $G$ is bipartite and does not contain isolated vertices, it is possible to continue this process until no vertices from $V_2$ remain.
	$G_2$ is constructed analogously, merging all vertices $v \in V_1$ into an arbitrarily chosen neighbor $v_2(v) \in V_2$.
	Each of the two resulting graphs $G_i$ contains exactly the vertices from $V_i$.
	Moreover, as a minor of $G$, $G_i$ is planar and therefore has a proper coloring with four colors.
	We assign the colors from this coloring to the vertices in $V_i$.

	It remains to show that this induces an open-neighborhood conflict-free coloring of $G$.
	Let $v$ be a vertex of $G$.
	W.l.o.g., assume $v \in V_1$.
	During the construction of $G_2$, $v$ was merged into its neighbor $v_2(v) \in V_2$.
	Therefore in $G_2$, $v_2(v)$ is adjacent to all other neighbors of $v$ in $G$.
	Because all neighbors of $v$ are in $V_2$, this implies that the color of $v_2(v)$ is unique in $N_G(v)$, and $v_2(v)$ is a conflict-free neighbor of $v$.
\end{proof}
On the other hand, for non-bipartite planar graphs, we can show the following upper bound on the number of colors.
\begin{theorem}
Every planar graph has an open-neighborhood conflict-free coloring using at most eight colors.
\end{theorem}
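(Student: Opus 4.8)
The plan is to reduce to the bipartite case (Theorem~\ref{thm:bipartite-planar-4col}) via a \emph{domatic partition} and two disjoint colour palettes. Since (as assumed in this section) $G$ has no isolated vertex, $G$ admits a partition $V(G) = R \cup B$ in which \emph{both} $R$ and $B$ are dominating sets: take $R$ to be any inclusion-minimal dominating set, and observe that minimality forces $B = V(G)\setminus R$ to be dominating as well, for otherwise some $v\in R$ would have all its neighbours inside $R$, and then either $v$ is isolated (contradiction) or $R\setminus\{v\}$ is still dominating (contradicting minimality). In such a partition every vertex of $R$ has a neighbour in $B$ and every vertex of $B$ has a neighbour in $R$.

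Next I form the spanning bipartite subgraph $G'$ of $G$ with parts $R$ and $B$, keeping exactly the edges of $G$ with one endpoint in each part. Then $G'$ is planar and bipartite, and by the previous paragraph it has no isolated vertices. I now apply the construction in the proof of Theorem~\ref{thm:bipartite-planar-4col} to $G'$: contracting each vertex of $B$ into a chosen neighbour in $R$ yields a planar minor on vertex set $R$, whose proper $4$-colouring (guaranteed by the four-colour theorem) I call $\chi_R$; symmetrically, contracting each vertex of $R$ into a chosen neighbour $\psi(v)\in B$ yields a planar minor on vertex set $B$ with a proper $4$-colouring $\chi_B$. I combine these into a single colouring $\chi$ of $G$ by using $\chi_R$ on $R$ with colour set $\{1,2,3,4\}$ and $\chi_B$ on $B$ with the disjoint colour set $\{5,6,7,8\}$; this uses $8$ colours in total.

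It remains to check that $\chi$ is open-neighbourhood conflict-free. Take $v\in R$ (the case $v\in B$ is symmetric, using $\chi_R$ and the red palette). By construction $v$ was contracted into some chosen neighbour $\psi(v)\in N_G(v)\cap B$ when forming the $B$-minor, so in that minor $\psi(v)$ is adjacent to \emph{every} other vertex of $N_G(v)\cap B$; properness of $\chi_B$ therefore makes $\chi(\psi(v))$ unique among $\{\chi(w): w\in N_G(v)\cap B\}\subseteq\{5,6,7,8\}$. Since every colour appearing on $N_G(v)\cap R$ lies in $\{1,2,3,4\}$, the colour $\chi(\psi(v))$ is in fact unique in all of $N_G(v)$, so $\psi(v)\in N_G(v)$ is a conflict-free neighbour of $v$. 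Hence every vertex of $G$ has a conflict-free neighbour and $\chi_O(G)\le 8$.

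The two points that need care are exactly the two ingredients above. The first is that the partition must genuinely supply a neighbour of the right colour class to every vertex, i.e.\ $N_G(v)\cap B\neq\emptyset$ for $v\in R$ and $N_G(v)\cap R\neq\emptyset$ for $v\in B$; this is precisely what the domatic partition provides. The second — and the reason the bound is $8$ rather than $4$ — is the interface between the two sides: a conflict-free neighbour chosen on the $B$-side could a priori clash with a vertex coloured on the $R$-side, and using disjoint $4$-colour palettes on the two sides removes this possibility entirely. Everything else is the already-established minor/contraction argument behind Theorem~\ref{thm:bipartite-planar-4col}, so no further obstacle is expected.
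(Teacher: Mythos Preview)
Your proof is correct and follows essentially the same strategy as the paper's: partition $V(G)$ into two sets each of which dominates the other, contract each side into the other to obtain two planar minors, properly $4$-colour each, and place the two colourings on disjoint palettes. The only difference is in how the partition is obtained --- the paper takes $V_1$ to be an \emph{independent} dominating set (so that every vertex of $V_1$ automatically has a neighbour in $V_2$), whereas you take $R$ to be a \emph{minimal} dominating set and argue that the complement is then dominating; both choices work, and the remainder of the argument is identical.
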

\begin{proof}
	Let $G = (V,E)$ be a planar graph.
	Analogous to the proof of Theorem~\ref{thm:bipartite-planar-4col} we proceed by producing two minors $G_1$ and $G_2$ of $G$, to each of which we apply the planar four-color theorem.
	However, without the assumption of bipartiteness, we cannot use the same set of four colors for $G_1$ and $G_2$, leading to a conflict-free coloring with eight colors.

	We start by constructing an independent dominating set $V_1$ of $G$.
	Let $V_2 := V \setminus V_1$.
	We construct the minor $G_i$ of $G$ by contracting each vertex \revised{$v \in V_{3-i}$} into an arbitrarily chosen neighbor $v_i(v) \in V_i$.
	Then we apply the planar four-color theorem to $G_1$ and $G_2$ with colors $\{1,2,3,4\}$ and $\{5,6,7,8\}$.
	To build a conflict-free coloring of $G$, we assign to each $v \in V_i$ its color in the proper coloring of $G_i$.
	This results in a conflict-free coloring because \revised{$v_{3-i}(v)$} is a conflict-free neighbor of $v$.
\end{proof}

Similar to the situation for closed neighborhoods, open neighborhood conflict-free coloring is hard even for $k=1$ and $k=2$.
For closed neighborhoods, a conflict-free $1$-coloring corresponds to a dominating set consisting of vertices at pairwise distance at least 3.
For open neighborhoods, a conflict-free $1$-coloring corresponds to a matching whose vertices form a dominating set and are at pairwise distance at least 3 (except for those adjacent in the matching).
\begin{theorem}
It is NP-complete to decide whether a bipartite planar graph $G$ is open-neighborhood conflict-free $1$-colorable.
\end{theorem}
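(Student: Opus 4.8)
\textbf{Membership in NP} is immediate: a certificate is the set $S$ of colored vertices, and one checks in linear time that every vertex of $G$ has exactly one neighbor in $S$. Indeed, a conflict-free $1$-coloring with open neighborhoods is precisely a vertex set $S$ for which every $v\in V$ has \emph{exactly} one neighbor in $S$ (two or more neighbors in $S$ leaves $v$ without a uniquely colored neighbor, and zero neighbors does the same); equivalently, $G[S]$ is a perfect matching of $S$ and the open neighborhoods of the vertices of $S$ partition $V$ — this is exactly the ``matching that dominates and whose vertices are at pairwise distance $\geq 3$, except for matched pairs'' description stated just before the theorem. So the work is NP-hardness.

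\textbf{The plan} is to reduce from {\sc Positive Planar 1-in-3-SAT}, reusing the backbone-embedding strategy of the proof of Theorem~\ref{thm:planar-1-coloring-npc}; the crucial structural change is that the rigid \emph{period-$4$} shape of total perfect codes on even cycles and paths replaces the period-$3$ structure used there, and all cycles in the construction are given length divisible by $4$ so that the resulting graph is bipartite. For a variable $x_i$ I would use an even cycle $Z_i$ (say of length $12$, as in $G_1(\phi)$): its total perfect codes are exactly the four cyclic shifts of the pattern ``two consecutive vertices in $S$, then two out'', and I would declare two of these four phases to encode $\mathtt{true}$ and two to encode $\mathtt{false}$, so that whether a distinguished \emph{true vertex} of $Z_i$ lies in $S$ becomes a consistent Boolean value. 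Literal lines realized as even paths of a length chosen with the correct residue modulo $4$ carry this phase from $Z_i$ toward the clauses, and consecutive variable cycles are linked — between their ``left'' and ``right'' vertices, as in the $G_1(\phi)$ construction — by a path gadget that forces a common global phase convention across all variables.

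\textbf{The clause gadget.} Each clause $c_j$ is built around a $4$-cycle; since the total perfect codes of $C_4$ are exactly its four ``adjacent pairs'', a single distinguished vertex $c_{j,1}$ is forced to be, or not to be, in $S$ in any valid coloring. I would wire the three literal lines into the clause gadget and add a few low-degree auxiliary vertices whose ``exactly one neighbor in $S$'' constraints are simultaneously satisfiable if and only if exactly one of the three incident literal lines arrives in its $\mathtt{true}$ phase (the degree-$1$ and degree-$2$ forcing of Observation~\ref{obs:open-neighborhoods-proper-coloring} is the main tool here, both inside the clause gadget and to anchor the phase of the first variable cycle $x_1$). Because each gadget can be inserted inside the appropriate face of the fixed planar embedding of $G(\phi)$, and because every added cycle is even, the resulting graph $G'(\phi)$ is bipartite planar and has polynomially many vertices; one then proves $\phi$ is $1$-in-$3$-satisfiable if and only if $G'(\phi)$ admits an open-neighborhood conflict-free $1$-coloring, by translating a satisfying assignment into an explicit total perfect code and, conversely, reading the forced phase of each $Z_i$ and the forced state of each $c_{j,1}$ off of any total perfect code.

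\textbf{Main obstacle.} The difficulty is the rigidity of the ``exactly one neighbor in $S$'' requirement: unlike ordinary domination, \emph{every} vertex of every gadget — not merely the logically meaningful ones — must be verified, and there is essentially no slack, so the variable cycle, the literal lines, the inter-variable linker, and the clause gadget all have to have their lengths tuned modulo $4$ so that the period-$4$ code pattern threads consistently through the whole graph. Making all these parities mesh while simultaneously keeping the graph bipartite and keeping the clause gadget correct in the \emph{exactly}-one (not at-least-one) sense is where the real care is needed; the planarity of the construction, by contrast, follows routinely from respecting the embedding $d$ of $G(\phi)$, exactly as for $G_1(\phi)$. (An alternative would be to reduce from the closed-neighborhood case of Theorem~\ref{thm:planar-1-coloring-npc} via a perfect-code $\to$ total-perfect-code transformation, but controlling bipartiteness along such a transformation makes the direct {\sc Positive Planar 1-in-3-SAT} reduction cleaner.)
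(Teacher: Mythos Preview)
Your overall framework is right --- reduce from {\sc Positive Planar 1-in-3-SAT} and exploit that total perfect codes on even cycles have the rigid period-$4$ shape ``in, in, out, out'' --- and the paper does exactly this. But the paper's construction is considerably simpler than your sketch, and your sketch leaves the real work undone.

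The paper uses variable cycles of length $16$ and attaches each clause's distinguished ``black'' vertex \emph{directly} to $v_i^0$ (or $v_i^8$, depending on which side of the backbone the clause lies) for each variable $x_i$ in the clause. There are no literal lines and no inter-variable phase synchronization: since the four shifts of the period-$4$ pattern split into two in which $v_i^0$ lies in $S$ and two in which it does not, the state of $v_i^0$ alone already encodes the truth value of $x_i$, and distinct variable cycles need share no global phase convention. (Incidentally, the closed-neighborhood graph $G_1(\phi)$ you cite does not link consecutive variables either; that linking appears only in the $2$-color proof $G_2(\phi)$.) Your literal-line paths and inter-variable linkers are therefore unnecessary scaffolding, and they introduce interaction constraints at the attachment points --- which acquire degree $\geq 3$ --- that you never analyze.

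The real content is the clause gadget, and this is where your sketch has a genuine gap. A bare $C_4$ does not force any particular vertex in or out of $S$, so your claim that $c_{j,1}$ is ``forced to be, or not to be, in $S$'' is unsupported as stated. The paper's gadget is a specific bipartite planar graph in which pendant (degree-$1$) vertices force two internal vertices into $S$ and simultaneously force the black vertex to remain uncolored with no conflict-free neighbor inside the gadget; the black vertex must then find its unique $S$-neighbor among the three attached variable vertices, which implements the $1$-in-$3$ constraint directly. ``A few low-degree auxiliary vertices'' is not a construction, and designing one that is simultaneously bipartite, planar, and correct in the exactly-one sense is precisely the step you identify as the main obstacle but do not carry out.
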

\begin{proof}
	We prove hardness using a reduction from {\sc Positive Planar 1-in-3-SAT}.
	In a manner similar to the proof of Theorem~\ref{thm:planar-1-coloring-npc}, from a positive planar 3-CNF formula $\phi$ with clauses $C = \{c_1,\ldots,c_l\}$ and variables $X = \{x_1,\ldots,x_n\}$ and its plane formula graph $G(\phi)$, we construct in polynomial time a bipartite planar graph $G'_1(\phi)$ such that $\phi$ is 1-in-3-satisfiable iff $\chi_O(G'_1(\phi)) = 1$.
	The graph $G'_1(\phi)$ has one \emph{variable cycle} $v_i^0\cdots v_i^{15}$ of length $16$ for each variable $x_i$.
	There are exactly four ways to color a variable cycle; see Figure~\ref{fig:reduction-open-neighborhoods-variable}.
	Two of these color $v^0_i$ and $v^8_i$; using one of these colorings for the variable cycle of $x_i$ correspond to setting $x_i$ to {\tt true}. 
	Leaving $v^0_i$ and $v^8_i$ uncolored corresponds to setting $x_i$ to {\tt false}.
	For each clause $c_j$, $G'_1(\phi)$ contains a copy of the clause gadget depicted in Figure~\ref{fig:reduction-open-neighborhoods-variable}.
	\begin{figure}
		\begin{center}
			\begin{subfigure}[b]{.37\linewidth}
				\resizebox{\linewidth}{!}{\includegraphics{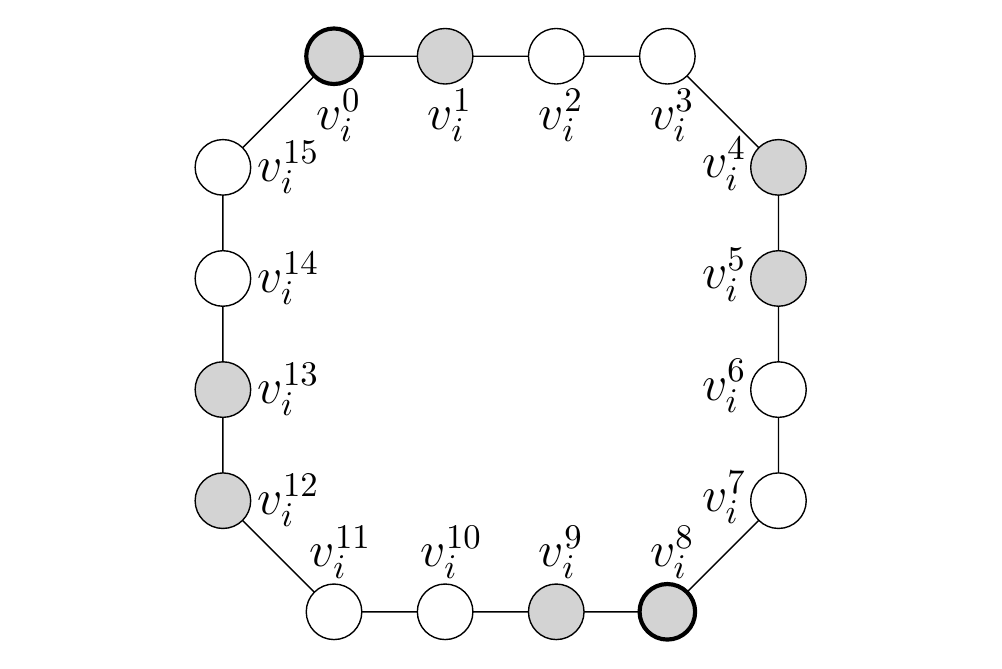}}
				\caption{A \emph{variable cycle}, with a conflict-free $1$-coloring that corresponds to setting the variable to {\tt true}.
					All conflict-free $1$-colorings of a variable cycle result from this coloring by shifting the groups of colored vertices around the cycle.
					The vertices $v_i^0$ and $v_i^8$ that may be connected to the clause gadgets are drawn with a bold outline.}
			\end{subfigure}
			\hspace{.07\linewidth}
			\begin{subfigure}[b]{.37\linewidth}
				\resizebox{\linewidth}{!}{\includegraphics{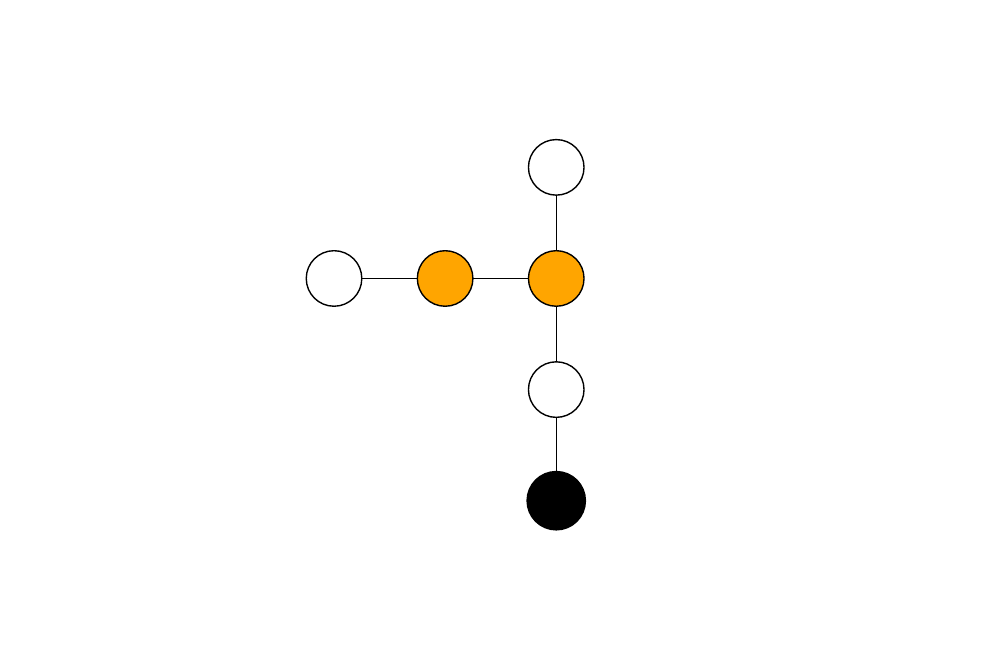}}
				\caption{A clause gadget. The orange vertices must be colored in any conflict-free $1$-coloring.
					The white vertices cannot be colored.
					The black vertex cannot be colored, but does not have a conflict-free neighbor within the gadget.
					It is connected to the variables occurring in the clause, thus enforcing that exactly one of them is set to true.
				}
			\end{subfigure}
		\end{center}
		\caption{Variable and clause gadgets for the reduction.}
		\label{fig:reduction-open-neighborhoods-variable}
	\end{figure}
	We can compute an embedding of the formula graph $G(\phi)$ in which the variable vertices are placed on a horizontal line.
	The clause vertices are embedded above and below this horizontal line.
	If a clause $c_j$ is embedded below the variables, we connect its black vertex to vertex $v_i^8$ of all variables occurring in $c_j$; otherwise, we use $v_i^0$.
	An example of this construction is depicted in Figure~\ref{fig:reduction-open-neighborhood-example}.
	\begin{figure}
		\begin{center}
		\resizebox{.75\linewidth}{!}{\includegraphics{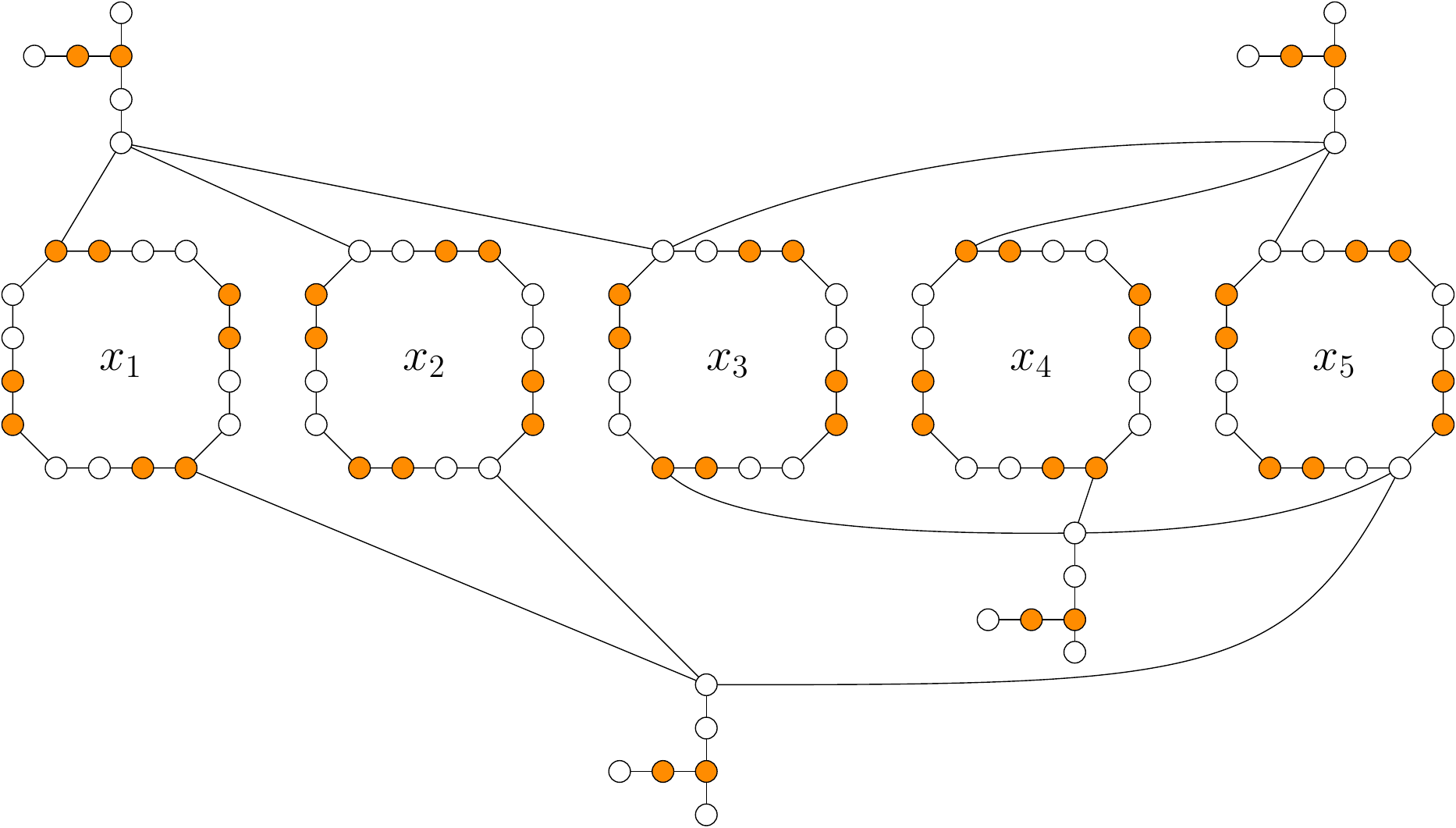}}
		\end{center}
		\caption{The graph $G'_1(\phi)$ resulting from applying the reduction to $\big\{\{x_1,x_2,x_3\},\{x_1,x_2,x_5\},\{x_2,x_4,x_5\},\{x_3,x_4,x_5\}\big\}$, and an open-neighborhood conflict-free $1$-coloring (orange vertices) corresponding to setting $x_1$ and $x_4$ to {\tt true}.}
		\label{fig:reduction-open-neighborhood-example}
	\end{figure}
	
	If $\phi$ is 1-in-3-satisfiable, coloring the variable cycles according to a satisfying assignment and the clause gadgets according to Figure~\ref{fig:reduction-open-neighborhoods-variable} yields a coloring of $G'_1(\phi)$ in which the black vertex of each clause is adjacent to exactly one colored neighbor.
	This coloring is an open-neighborhood conflict-free $1$-coloring of $\phi$.
	On the other hand, let $G'_1(\phi)$ have an open-neighborhood conflict-free $1$-coloring $\chi$.
	In each clause gadget, $\chi$ colors exactly the two orange vertices from Figure~\ref{fig:reduction-open-neighborhoods-variable}.
	Therefore, the black vertex of each clause has to be adjacent to exactly one colored variable vertex.
	Setting the variables corresponding to variable cycles with colored vertices $v^0_i$ and $v_i^8$ to {\tt true} thus yields a 1-in-3-satisfying assignment for $\phi$.
\end{proof}
The same holds for $k=2$ colors, but the restriction to bipartite planar graphs requires a slightly more sophisticated argument.
\begin{theorem}
It is NP-complete to decide whether a bipartite planar graph $G$ is open-neighborhood conflict-free $2$-colorable.
\end{theorem}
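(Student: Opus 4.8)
The plan is to mirror the NP-hardness proof for the $1$-color case, again reducing from {\sc Positive Planar 1-in-3-SAT}, but to insert a \emph{bipartite planar color-restriction gadget} that plays, for open neighborhoods, the role that the triangle-based gadget $G_{\leq 1}$ (Lemma~\ref{lem:cf-2-coloring-leq1-gadget}, Figure~\ref{fig:cf-2-coloring-leq1-gadget}) played in the proof of Theorem~\ref{thm:planar-2-coloring-npc}: it should force every open-neighborhood conflict-free $2$-coloring to behave, on the part of the instance encoding $\phi$, like a $1$-coloring. Concretely, from a positive planar $3$-CNF formula $\phi$ and the bipartite planar graph $G'_1(\phi)$ built in the preceding proof, I would construct a bipartite planar graph $G'_2(\phi)$ by attaching a copy of a gadget $H$ to each variable cycle, to each clause gadget, and between consecutive variable cycles, so that in any open-neighborhood conflict-free $2$-coloring all variable cycles use colored vertices of a single common color. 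Then $\chi_O(G'_2(\phi)) \le 2$ iff $\chi_O(G'_1(\phi)) = 1$ iff $\phi$ is $1$-in-$3$-satisfiable, and planarity and bipartiteness are preserved since each copy of $H$ is embedded inside a face bounded by the cycle or clause gadget it is attached to.

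The core of the argument is the design and analysis of $H$. The gadget has a distinguished \emph{anchor} vertex $v$ and a set $S_v$ of ``external'' attachment vertices (the vertices of a variable cycle, or the clause vertices). Using only the two forcing primitives of Observation~\ref{obs:open-neighborhoods-proper-coloring} --- a pendant vertex forces its neighbor to be colored, and a degree-$2$ vertex forces its two neighbors not to be assigned the same color (and not to be both uncolored) --- I would build, entirely from even cycles and pendant paths (hence bipartite), two vertices $p,q$ that are forced to be colored and forced to receive the two distinct colors, attached to $v$ so that $v$ already sees one occurrence of each color. An additional sub-gadget forces $v$ itself to remain uncolored, so that $v$ can never serve as a conflict-free neighbor of a vertex of $S_v$. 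Then any colored vertex of $S_v$ adds a further occurrence of one color in $N(v)$; to preserve a conflict-free neighbor of $v$, at most one of the two colors may occur among the vertices of $S_v$. Chaining these gadgets along the backbone (the ``linking'' copies of $H$ between $Z_i$ and $Z_{i+1}$, wired as in the $k=2$ closed-neighborhood reduction) propagates this restriction so that the single surviving color is the \emph{same} for every variable cycle; the clause copies of $H$ normalize the clause gadgets exactly as in Theorem~\ref{thm:planar-2-coloring-npc}, so that each $c_j$ behaves as in the $1$-color construction.

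With $H$ in hand the equivalence is routine. If $\phi$ is $1$-in-$3$-satisfiable, take the open-neighborhood conflict-free $1$-coloring of $G'_1(\phi)$ guaranteed by the preceding proof and extend it to $G'_2(\phi)$ by coloring the interior of every copy of $H$ according to its specification; one checks that every vertex still has a conflict-free neighbor. Conversely, given an open-neighborhood conflict-free $2$-coloring $\chi$ of $G'_2(\phi)$, the gadget lemma forces every variable cycle and every clause component to use colored vertices of one common color; recoloring that color to color $1$ and discarding the gadget interiors yields an open-neighborhood conflict-free $1$-coloring of $G'_1(\phi)$, which by the preceding proof encodes a $1$-in-$3$-satisfying assignment of $\phi$. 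Membership in NP is immediate.

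The main obstacle is exactly the design of $H$: the closed-neighborhood gadget $G_{\leq 1}$ relies on the vertex $v$ sitting in a triangle with $w_1,w_2$, which is impossible in a bipartite graph, so the mechanism that ``pre-loads'' $v$ with one occurrence of each color, and the mechanism that forces $v$ uncolored, must be rebuilt from even cycles and pendant paths. One then has to verify that the forcing is \emph{tight} --- that $v$ is genuinely uncolored, that $S_v$ is genuinely $1$-colored, and, crucially, that no symmetric restriction is accidentally imposed that would prevent the intended $1$-coloring from extending --- and that attaching many copies of $H$ and the backbone links introduces neither new conflicts nor a loss of planarity or bipartiteness; these last points are mechanical but must be checked gadget by gadget.
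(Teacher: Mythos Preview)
Your plan diverges from the paper's proof in a substantive way. The paper does \emph{not} build on $G'_1(\phi)$ or try to force a $2$-coloring to collapse to a $1$-coloring on the variable cycles. Instead it starts over: each variable becomes a length-$3$ path $v_i^1v_i^2v_i^3$, each clause gets a dedicated bipartite gadget whose clause vertex $c$ must remain uncolored and cannot get a conflict-free neighbor from inside the gadget, and the clause vertices are linked by an \emph{equality gadget} that forces the conflict-free neighbors of two terminals to share a color. The equality gadgets are laid out along a planar spanning tree of the clause vertices, so the restriction propagates without violating planarity or bipartiteness.

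Your proposal has a concrete gap that is exactly why the paper takes a different route. You want the anchor $v$ of $H$ to be adjacent to all vertices $S_v$ of a variable cycle so that ``any colored vertex of $S_v$ adds a further occurrence of one color in $N(v)$.'' But the variable cycle is an even cycle; making $v$ adjacent to two consecutive cycle vertices $v_i^j, v_i^{j+1}$ creates the triangle $v\,v_i^j\,v_i^{j+1}$, destroying bipartiteness. Your sentence ``planarity and bipartiteness are preserved since each copy of $H$ is embedded inside a face'' only argues planarity; embedding in a face says nothing about odd cycles. The obvious repair --- attach $v$ only to every second cycle vertex --- keeps the graph bipartite but restricts only one side of the bipartition, and with two colors a $16$-cycle then admits conflict-free colorings (e.g., placing adjacent color-$1$/color-$2$ pairs at period four) that do not project down to any open-neighborhood $1$-coloring, so the intended equivalence $\chi_O(G'_2(\phi))\le 2 \Leftrightarrow \chi_O(G'_1(\phi))=1$ fails. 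The paper's equality gadget avoids this precisely because it is attached at only two terminal vertices, both lying in the same side of the bipartition, so no odd cycle is created and the color constraint still propagates transitively along the spanning tree.
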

\begin{proof}
	Again we prove hardness using a reduction from {\sc Positive Planar 1-in-3-SAT}.
	From a positive planar 3-CNF formula $\phi$ with clauses $C = \{c_1,\ldots,c_l\}$ and variables $X = \{x_1,\ldots,x_n\}$ and its plane formula graph $G(\phi)$, we construct in polynomial time a bipartite planar graph $G'_2(\phi)$ such that $\phi$ is 1-in-3-satisfiable iff $\chi_O(G'_2(\phi)) \leq 2$.
	The graph $G'_2(\phi)$ has a \emph{variable path} $v_i^1v_i^2v_i^3$ of length 3 for each variable $x_i$.
	For each clause $c_j$, there is a clause gadget as depicted in Figure~\ref{fig:open-neighborhood-2col-clause-gadget}; this gadget contains a distinguished \emph{clause vertex}.
	The gadget prevents the clause vertex from being colored and cannot be used to provide a conflict-free neighbor to the clause vertex.
	We connect vertex $v_i^1$ to the clause vertex of $c_j$ with an edge iff $x_i$ occurs in $c_j$; the other vertices of clause gadgets and variable gadgets are not connected to any vertex outside their respective gadget.
	Therefore, variable vertex $v_i^1$ can provide a conflict-free neighbor to the clause vertex of $c_j$ iff $x_i$ occurs in $c_j$.
	
	We still have to enforce that the color of the conflict-free neighbor of the clause vertex is the same for all clauses.
	To this end, we connect the clause vertices using the \emph{equality gadget} depicted in Figure~\ref{fig:open-neighborhood-2col-clause-conn-gadget}.
	This gadget ensures that the conflict-free neighbors of the two clause vertices connected by it have the same color in any conflict-free $2$-coloring.
	We cannot add this gadget between all pairs of clause vertices because this would destroy planarity.
	Instead, we compute a spanning tree $T$ on the clause vertices that could be added to $G'_2(\phi)$, preserving planarity.
	Then, for each edge $c_ac_b$ of $T$, we add a copy of the equality gadget to $G'_2(\phi)$, using it to connect the clause vertices $c_a$ and $c_b$.
	Because adding the edges of $T$ preserves planarity, the graph resulting from adding the gadgets is planar as well.
	Moreover, because the equality gadget works transitively and $T$ is connected, the conflict-free neighbors of all clause vertices must receive the same color in any conflict-free $2$-coloring.
	
	It remains to prove that such a $T$ always exists.
	For this purpose, consider the plane formula graph $G(\phi)$, including the backbone of the formula.
	Because only one vertex of each variable or clause gadget is connected to vertices outside the gadget, these gadgets do not influence the planarity of $G'_2(\phi)$.
	Therefore, if adding $T$ preserves the planarity of $G(\phi)$, it also preserves the planarity of $G'_2(\phi)$.
	As root of $T$, we choose an arbitrary clause vertex $r$ on the boundary of the unbounded face of $G(\phi)$.
	We add an edge from $r$ to all other clause vertices on the boundary of the unbounded face to $T$.
	Now we consider the connected component $R$ of $r$ in $T$.
	Either $R = V(T)$, in which case we are done, or there must be a vertex $v \in R$ that lies on a face whose boundary contains a vertex $w \notin R$.
	For each such vertex $v$, we add an edge to all such vertices $w \notin R$.
	We iterate this procedure until we are done.

	\begin{figure}
		\begin{center}
			\resizebox{.4\linewidth}{!}{\includegraphics{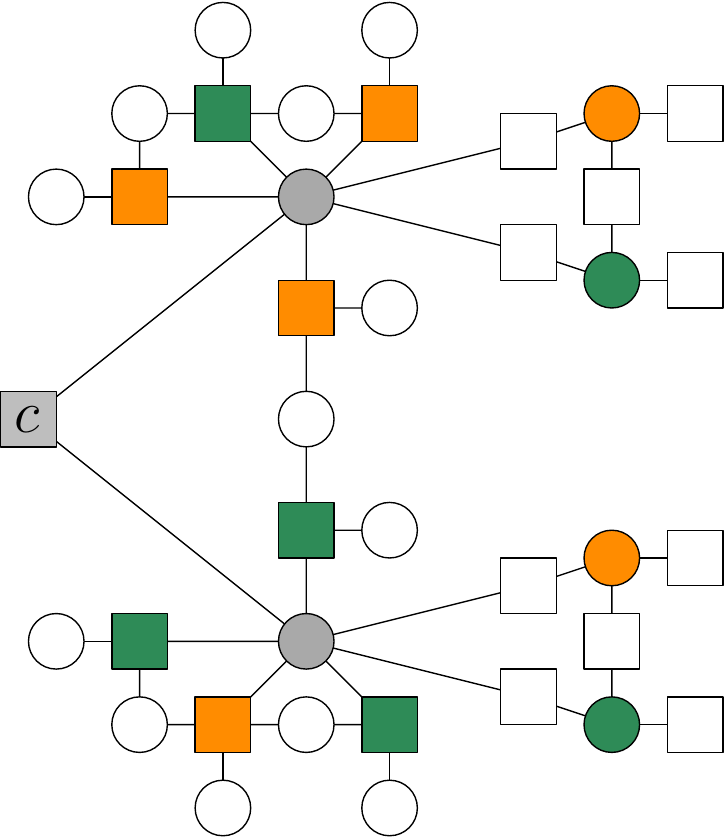}}
		\end{center}
		\caption{
			The bipartite clause gadget with clause vertex $c$; the components of the bipartition are indicated using squares and circles.
			Gray vertices cannot receive a color. Vertices colored green or orange must be colored.
			Except for automorphisms and swapping colors, orange vertices have to receive color 1 and green vertices have to receive color 2.
			White vertices may be colored or may remain uncolored; it is straightforward to extend the depicted coloring to a conflict-free $2$-coloring of the gadget (except for $c$) by coloring the white vertices of degree 1.
			By construction, one of $c$'s neighbors has three neighbors of color 1 and a conflict-free neighbor of color 2 (and vice versa for $c$'s other neighbor).
			In total, the gadget guarantees that $c$ remains uncolored and cannot have a colored neighbor within the gadget.
		}
		\label{fig:open-neighborhood-2col-clause-gadget}
	\end{figure}
	\begin{figure}
		\begin{center}
			\resizebox{.45\linewidth}{!}{\includegraphics{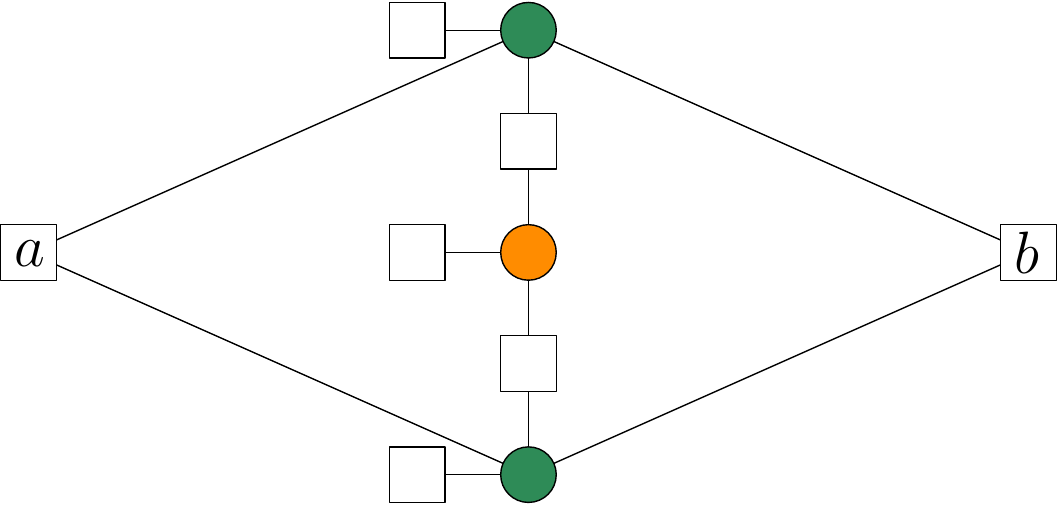}}
		\end{center}
		\caption{
			The \emph{equality gadget} that can be used to connect two terminal vertices (marked $a$ and $b$) in the same partition of a bipartite graph.
			It adds two occurrences of the same color to the neighborhoods of $a$ and $b$, thereby forcing the conflict-free neighbor of $a$ and $b$ to have the same color.
		}
		\label{fig:open-neighborhood-2col-clause-conn-gadget}
	\end{figure}

	Let $\phi$ be 1-in-3-satisfiable and let $\Gamma$ be the set of true variables in a 1-in-3-satisfying assignment of $\phi$.
	We construct a conflict-free $2$-coloring of $G'_2(\phi)$ by assigning color 1 to $v^1_i$ and $v^2_i$ for all $x_i \in \Gamma$ and to $v^3_i$ and $v^2_i$ for $x_i \notin \Gamma$.
	The vertices in equality gadgets that are adjacent to clause vertices receive color 2.
	All other vertices in the gadgets are colored as sketched in Figures~\ref{fig:open-neighborhood-2col-clause-gadget} and~\ref{fig:open-neighborhood-2col-clause-conn-gadget}.
	All clause vertices are adjacent to exactly one variable vertex carrying color 1 and thus have a conflict-free neighbor.
	Therefore, the coloring constructed in this way is a valid conflict-free $2$-coloring.

	Now assume that $G'_2(\phi)$ has a conflict-free $2$-coloring $\chi$.
	By the argument above, the conflict-free neighbor of each clause vertex is a variable vertex $v^1_i$.
	Moreover, all clause vertices have a conflict-free neighbor of the same color; w.l.o.g., color 1.
	Therefore, each clause vertex is adjacent to exactly one variable vertex with color 1, and the set of variables $x_i$ where $\chi(v^1_i) = 1$ induces a satisfying assignment of $\phi$.
\end{proof}

\section{Conclusion}
\label{sec:conc}

A spectrum of open questions remain. Many of them are related to general graphs, in particular with our sufficient condition for general graphs.
For every $k \geq 2$, $G_{k+1}$ provides an example that excluding $K_{k+2}$ as a minor is not sufficient to guarantee $k$-colorability.
However, for $k \geq 2$ we have no example where excluding $K_{k+3}^{-3}$ as a minor does not suffice.

With respect to open-neighborhood conflict-free coloring, several open questions remain.
Are four colors always sufficient for general planar graphs?
Are three colors always sufficient for outerplanar graphs?

Another variant of our problem arises from requiring that {\em all} vertices must be colored.
It is clear that one extra color suffices for this purpose; however, it is not always
clear that this is also necessary, in particular, for planar graphs.
Adapting our argument to this situation does not seem straightforward, especially because there are outerplanar graphs requiring three colors in this setting.

In addition, there is a large set of questions related to geometric versions of the problem.
What is the worst-case number of colors for straight-line visibility graphs within simple polygons?
It is conceivable that $\Theta(\log\log n)$ is the right answer, just like for rectangular
visibility, but this is still an open problem, just like complexity and approximation.
Other questions arise from considering geometric intersection graphs, such as unit-disk intersection graphs,
for which necessary and sufficient conditions, just like upper and lower bounds, would be
quite interesting.

\section*{Acknowledgments}
We thank Bruno Crepaldi, Pedro de Rezende, Cid de Souza, Stephan Friedrichs, Michael Hemmer and Frank Quedenfeld for helpful discussions.
Work on this paper was partially supported by the DFG Research Unit \emph{``Controlling Concurrent Change''}, funding number FOR 1800, 
project FE407/17-2, ``Conflict Resolution and Optimization''.
\pagebreak
\bibliography{refs}
\bibliographystyle{abbrv}

\end{document}